\definecolor{orcidgreen}{HTML}{85A12C} 
\newcommand{\reals}{\mathbb{R}}
\newcommand{\pref}{\succ}
\newcommand{\swap}{{\mathrm{swap}}}
\newcommand{\ham}{{\mathrm{ham}}}
\newcommand{\diam}{{\mathrm{diam}}}
\newcommand{\calL}{{\mathcal{L}}}
\newcommand{\calD}{{\mathcal{D}}}
\newcommand{\calT}{{\mathcal{T}}}
\newcommand{\kkemscore}{{{k\hbox{-}\mathrm{Kemeny}}}}
\newcommand{\kemscore}{{{\mathrm{Kemeny}}}}
\newcommand{\kemeny}{{{\mathrm{Kemeny}}}}
\newcommand{\np}{{{\mathrm{NP}}}}
\newcommand{\fpt}{{{\mathrm{FPT}}}}
\newcommand{\probName}[1]{\textsc{#1}\xspace}
\newcommand{\onekemeny}{\probName{$1$-Kemeny}}
\newcommand{\kkemeny}{\probName{$k$-Kemeny}}
\theoremstyle{plain}
\newtheorem{theorem}{Theorem}[section]
\newtheorem{lemma}[theorem]{Lemma}
\newtheorem{proposition}[theorem]{Proposition}
\newtheorem{corollary}[theorem]{Corollary}
\newtheorem{definition}{Definition}[section]
\newtheorem{remark}{Remark}[section]
\newcommand{\appendixProofs}{}
\newcommand{\appendixproof}[3]{%
  \gappto{\appendixProofs}{
		\subsection{Proof of \Cref{#2}}\label{proof:#2}
    #3
  }
}
\title{\huge \textbf{Diversity of Structured Domains\\
via $\boldsymbol{k}$-Kemeny Scores}}
\author{
  Piotr Faliszewski\quad\quad
  Krzysztof Sornat\quad\quad
  Stanisław Szufa\quad\quad
  Tomasz Wąs
}
\date{}
\renewcommand*{\@fnsymbol}[1]{\ifcase#1\or i\or ii\or iii\or iv\else\@ctrerr\fi}
\begin{document}

\maketitle

\begin{quote}
    \textbf{Abstract:}
    In the $k$-\textsc{Kemeny} problem, we are given an ordinal
    election, i.e., a collection of votes ranking the candidates from
    best to worst, and we seek the smallest number of swaps of adjacent
    candidates that ensure that the election has at most $k$
    different rankings. We study this problem for a number of structured
    domains, including the single-peaked, single-crossing,
    group-separable, and Euclidean ones. We obtain two kinds of results:
    (1) We show that $k$-\textsc{Kemeny} remains intractable under most
    of these domains, even for $k=2$, and (2) we use $k$-\textsc{Kemeny}
    to rank these domains in terms of their diversity.
\end{quote}


\section{Introduction}\label{sec:intro}

\renewcommand{\thefootnote}{} 
\footnotetext{
  \hspace{-19pt}
  Authors' Information:
  \href{https://orcid.org/0000-0002-0332-4364}{Piotr Faliszewski \orcidlink{0000-0002-0332-4364}}, faliszew@agh.edu.pl, AGH University, Poland;
  \href{https://orcid.org/0000-0001-7450-4269}{Krzysztof~Sornat~\orcidlink{0000-0001-7450-4269}}, sornat@agh.edu.pl, AGH University, Poland;
  \href{https://orcid.org/0000-0001-6301-6227}{Stanisław Szufa \orcidlink{0000-0001-6301-6227}}, s.szufa@gmail.com, CNRS, Université Paris Dauphine-PSL, France;
  \href{https://orcid.org/0000-0003-3492-6584}{Tomasz Wąs \orcidlink{0000-0003-3492-6584}}, tomasz.was@cs.ox.ac.uk, University of Oxford, United Kingdom.
}
\renewcommand{\thefootnote}{\arabic{footnote}}

An \emph{ordinal election} consists of a set of candidates and a
collection of votes, ranking these candidates from the most to the
least desirable one, where each vote comes from a given
\emph{domain}.
We study the diversity of structured domains, such as the
single-peaked~\citep{bla:b:polsci:committees-elections},
single-crossing~\citep{mir:j:single-crossing,rob:j:tax},
group-separable~\citep{ina:j:group-separable,ina:j:simple-majority},
and Euclidean ones~\citep{ene-hin:b:spatial,ene-hin:b:spatial2}, as
well as the diversity of elections with votes from these domains (see
\Cref{sec:prelim} for detailed definitions). In essence, structured
domains restrict possible votes to those that are somehow
\emph{reasonable}; for example, in the single-peaked domain over the
standard political left-right axis, one could not rank the extreme
left-wing and right-wing candidates on the two top positions.
To capture diversity, we employ a technique based on solving the
$k$-Kemeny problem of \citet{fal-kac-sor-szu-was:c:div-agr-pol-map}.
The idea is that an election---or, a structured
domain---is diverse if it includes many very different votes that
cannot be easily grouped into (a small number of) clusters.  
Our results come in two main flavors.
First, we establish the computational complexity of the $k$-Kemeny 
problem across our domains. 
Second, we rank these domains---as well as several 
statistical cultures used to sample elections 
from them---with respect to their diversity.

Studying the diversity of structured domains 
and structured elections is
important for several reasons:
\begin{enumerate}
\item Diversity is a fundamental property of elections and domains,
  yet formally capturing this intuitive notion is challenging 
  (see discussion of the related work below). Hence, it is
  valuable to both develop tools for analyzing diversity and to
  use these tools to gain better insights into various 
  domains and elections.
\item \citet{fal-kac-sor-szu-was:c:div-agr-pol-map} argued that one
  can understand the nature of elections---specifically, their
  locations on the map of
  elections~\citep{szu-boe-bre-fal-nie-sko-sli-tal:j:map,fal-kac-sor-szu-was:c:div-agr-pol-map,boe-fal-nie-szu-was:c:distance-measures-zoometrics}---by
  analyzing their three natural properties, including diversity (the
  other two are polarization and agreement).
\item When designing numerical experiments on elections, one may wish
  to consider synthetic datasets with different levels of
  diversity. In particular, one may choose various statistical
  cultures---i.e., probabilistic distributions over votes---based on
  the diversity of the data they produce.
  \end{enumerate}

The classic Kemeny score of an election is the smallest number of
swaps of adjacent candidates required to ensure that all votes are
identical (in \Cref{sec:prelim} we give a different but equivalent
definition; this one follows the distance-rationalization
framework~\citep{bai:j:distance-rationalisation,mes-nur:b:distance-realizability,elk-fal-sli:j:dr}).
Similarly, the $k$-Kemeny score is the smallest number of swaps, which
ensure that the election consists of at most $k$ different
votes. Intuitively, to compute the $k$-Kemeny score we need to find
$k$ groups of similar votes (thereby solving a clustering problem) and
compute their Kemeny scores separately.
\citet{fal-kac-sor-szu-was:c:div-agr-pol-map} argued that a weighted
sum of $k$-Kemeny scores for different values of $k$ gives a good
measure of diversity; the same approach, albeit with different
weights, was taken by
\citet{fal-mer-nun-szu-was:c:map-dap-top-truncated}.

\subsection{Our Contributions}
We obtain the following sets of results. First, we find a
polynomial-time algorithm for computing the $k$-Kemeny score of
single-crossing elections, but show $\np$-hardness for single-peaked
and group-separable ones, already for $k=2$. For Euclidean elections,
the results are more varied and depend on the exact
assumptions.

Second, using $k$-Kemeny scores, we rank our structured
domains from the most to the least diverse one. Surprisingly, the
caterpillar group-separable domain turns out to be the most diverse
one, suggesting that it might deserve to be used in experiments more
often.

Third, we find that the typical way of sampling Euclidean elections,
used in most computational social choice papers that employ these
models---see the survey of
\citet{boe-fal-jan-kac-lis-pie-rey-sto-szu-was:c:guide}---by necessity
does not generate a sizable fraction of votes that belong to this
domain. We show how this affects the diversity of generated elections,
as compared to sampling votes from these domains uniformly at random.

We also make a number of remarks about our domains, showing
their various quirks. All proofs are in the appendix.

\subsection{Related Work}
The two most related papers are those of
\citet{fal-kac-sor-szu-was:c:div-agr-pol-map,fal-mer-nun-szu-was:c:map-dap-top-truncated},
where the authors introduce and use measures of diversity based on
computing weighted sums of $k$-Kemeny scores. The former work also
argues that many previously studied notions of diversity---such as
those considered by
\citet{alc:vor:j:cohesiveness,alc:vor:j:cohesiveness2},
\citet{can-ozk-sto:j:polarization,can-ozk-sto:j:polarization2} and
\citet{has-end:c:diversity-indices}---capture (dis)agreement among the
votes rather than their diversity; in particular, they conflate the
notions of diversity and polarization.

Recently, \citet{amm-pup:j:domain-diversity} developed a number of
diversity notions for preference domains, based on the multi-attribute
approach of \citet{neh-pup:j:diversity}. The main idea is that the
diversity of a domain depends on how many (possibly weighted)
``attributes'' its votes have, where an attribute can be a property
such as ``candidate $c$ is ranked on top.'' However,
\citet{amm-pup:j:domain-diversity} mostly focus on the properties of
their measures and not on applying them to particular domains.
\citet{kar-mar-rii-zho:t:domain-diversity} analyze how many different
rankings of at least $s$ candidates appear in a given Condorcet
domain. While we focus on several mainstream domains, they consider
numerous special ones. Overall, the main difference between our
approach and those of \citet{amm-pup:j:domain-diversity} and
\citet{kar-mar-rii-zho:t:domain-diversity} is that they count
occurrences of particular structures in the votes, whereas our $k$-Kemeny-based approach analyzes interrelations between
these votes.

The problem of computing the Kemeny score was shown to be $\np$-hard
by \citet{bar-tov-tri:j:who-won}, and its exact complexity was
established by \citet{hem-spa-vog:j:kemeny}. There are various ways of
circumventing these intractability results, ranging from approximation
algorithms (see, e.g., the work
of~\citet{ail-cha-new:j:kemeny-approx}), through parameterized
approaches (see, e.g., the works of
\citet{bet-fel-guo-nie-ros:j:fpt-kemeny-aaim,bet-bre-nie:j:kemeny}),
and heuristics (see, e.g., the work of \citet{con-dav-kal:c:kemeny}).
While it is well-known that Kemeny score can be computed in
polynomial-time for Condorcet
domains~\citep{bar:j:kemeny-condorcet,tru:t:kemeny-condorcet}, doing
so is also possible if the input election is, in a certain formal
sense, close to being in some such
domains~\citep{cor-gal-spa:c:sp-width,cor-gal-spa:c:spsc-width}.  Researchers also considered
the complexity of computing Kemeny scores in Euclidean
elections~\citep{esc-spa-tyd:j:kemeny-2d,ham-lac-rap:c:kemeny-2d-embedding}.
Computational aspects of $k$-Kemeny scores were, so far, considered
only by \citet{fal-kac-sor-szu-was:c:div-agr-pol-map}.

\section{Preliminaries}\label{sec:prelim}

Given a set of candidates $C = \{c_1, \ldots, c_m\}$, we write
$\calL(C)$ to denote the set of all strict rankings (linear orders)
over $C$, and we refer to $\calL(C)$ as the \emph{full domain} over
$C$. We often focus on various other domains $\calD$ that are subsets of
$\calL(C)$, typically referred to as \emph{structured domains}. For a
ranking $v$ and candidates $a$ and $b$, we write $v \colon a \pref b$
to indicate that $v$ ranks $a$ higher than $b$. If $A$ and $B$ are two
disjoint subsets of candidates, then by $v \colon A \pref B$ we mean
that $v$ prefers each member of $A$ to each member of $B$.  The
\emph{swap distance} of two rankings $u, v \in \calL(C)$ is the number
of swaps of adjacent candidates needed to transform $u$ into $v$
(equal to the number of inversions, i.e., the number of pairs of
candidates that are ranked differently in $u$ and $v$).  

We assume
knowledge of standard notions from computational complexity
theory~\citep{pap:b:complexity} and its parameterized
variant~\citep{cyg-fom-kow-lok-mar-pil-pil-sau:b:parameterized-algorithms}.

\subsection{Elections}
An \emph{election} $E = (C,V)$ consists of a set $C$ of candidates and
a collection $V$ of voters, where every voter has a vote from
$\calL(C)$, ranking the candidates from the most to the least
desirable one. To streamline our discussion, we use the same symbols
to refer to both the voters and their votes, with the exact meaning
clear from the context.  We often focus on elections where the votes
are restricted to belong to some structured domain, rather than the
full one.

\paragraph[(k-)Kemeny Rankings]{($\boldsymbol{k}$-)Kemeny Rankings.}
Let us fix an election $E = (C,V)$ and let $r$ be some ranking from
$\calL(C)$. Its \emph{Kemeny score} is:
\[
  \kemeny_E(r) = \textstyle\sum_{v \in V} \swap(v,r).
\]
A ranking with the lowest Kemeny score is known as a \emph{Kemeny ranking}~\cite{kem:j:no-numbers}.
Similarly, the $k$-Kemeny score of a set $R = \{r_1, \ldots, r_k\}$ of
$k$ rankings is:
\[
  k\hbox{-}\kemeny_E(R) = \textstyle\sum_{v \in V} \min_{i\in[k]}\swap(v,r_i).
\]
We refer to the set that minimizes the $k$-Kemeny score as the
\emph{$k$-Kemeny set} and to its members as \emph{$k$-Kemeny
  rankings}~\citep{fal-kac-sor-szu-was:c:div-agr-pol-map}.  We study
the problem below.

\begin{definition}
  In the \textsc{$k$-Kemeny} problem we are given an election $E$ as
  well as integers $k$ and $q$, and we ask if there is a set $R$ of
  $k$ rankings such that $k\hbox{-}\kemeny_E(R) \leq q$.
\end{definition}

\subsection{Structured Domains}
In addition to the full domain, we also consider its various
structured variants defined below:
\begin{description}
\item[Single-Peaked Domain (SP).] Let $\lhd$ be some ranking from
  $\calL(C)$, referred to as an axis. The \emph{single-peaked domain (SP)} for
  $\lhd$ consists of all rankings $v \in \calL(C)$ that satisfy the
  following condition: For every $t \in [|C|]$ the top $t$ candidates
  in $v$ form an interval within $\lhd$. This domain was introduced
  by \citet{bla:b:polsci:committees-elections}.

\item[Single-Crossing Domains (SC).] A subset of $\calL(C)$ is
  \emph{single-crossing (SC)} if it is possible to order its members as
  $(v_1, \ldots, v_n)$, so that for each pair of candidates
  $a, b \in C$ there is a number $t_{ab} \in [n]$ such that voters
  $v_1, \ldots, v_{t_{ab}}$ rank $a$ and $b$ in one way, and the
  remaining ones rank them in the opposite way. These domains were
  introduced by \citet{mir:j:single-crossing} and \citet{rob:j:tax}.

\item[Group-Separable Domains (GS).] Let $\calT$ be a rooted, ordered
  tree, where each leaf is labeled with a unique candidate from $C$
  and each internal node has at least two children. A vote
  $v \in \calL(C)$ is \emph{consistent} with $\calT$ if we can obtain
  it by reading the labels of the leaves from left to right, after
  possibly reversing the order of some nodes' children. A
  \emph{group-separable domain (GS)} for tree $\calT$ contains all
  votes consistent with $\calT$. The notion of group-separability is
  due to \citet{ina:j:group-separable,ina:j:simple-majority}, but we
  follow the equivalent definition of \citet{kar:j:group-separable}.
  Whenever we speak of GS, we mean its variant for a given binary
  tree.
\end{description}

It is well-known that all single-peaked domains for candidate sets of
the same cardinality are isomorphic, so we typically speak of
\emph{the} single-peaked domain. On the other hand, if there are at
least three candidates then there are many different single-crossing
domains; these observations are made explicitly, e.g., by
\citet{fal-sko-sli-szu-tal:c:isomorphism-jcss}. Group-separable
domains are isomorphic, provided that the underlying trees are
isomorphic.  We are particularly interested~in:
\begin{description}
\item[Balanced Group-Separable Domain (GS/bal).]  A~GS
  domain is \emph{balanced} if its underlying tree is a binary tree
  where each level---except possibly the last one---is completely
  filled (we refer to such trees as \emph{balanced})

\item[Caterpillar Group-Separable Domain (GS/cat).]  A GS domain is
  \emph{caterpillar} if its underlying tree is binary, where each
  internal node has at least one leaf as a child (we refer to such
  trees as \emph{caterpillar}).
\end{description}

A domain $\calD$ is a \emph{Condorcet domain} if for each election
in $\calD$ there is a ranking $r$, called \emph{Condorcet ranking}, such that for each two
candidates $a$ and $b$, $r \colon a \pref b$ implies that at least half of the
voters prefer $a$ to $b$. All the above domains are Condorcet.
Other domains we consider are:

\begin{description}
\item[Single-Peaked on a Graph Domains (SP/G).]  For a graph $G$ where
  each vertex is labeled with a unique candidate, a vote
  $v \in \calL(C)$ is single-peaked with respect to $G$ if for each
  $t \in [|C|]$ the graph induced by the top $t$ candidates is
  connected (so, for a path we get the classic single-peaked domain).
      Similarly to GS, in each SP/G domain we assume a specific connected graph $G$
  for each candidate set, which we can compute in polynomial-time.
  This domain was
  mentioned, e.g., by \citet{elk-lac-pet:b:structured-domains}; its
  variant for trees is due to \citet{dem:j:sp-trees}.
  
\item[Single-Peaked on a Circle Domain (SPOC).] This is the
  SP/G domain for the case where graph $G$ is a cycle; it is due to \citet{pet-lac:j:spoc}.
  
\item[$\boldsymbol{d}$-Euclidean Domains.] Let
  $d$ be a positive integer and let $x \colon C \rightarrow \reals^d$
  be a function that associates each candidate with a point in the
  Euclidean space (called \emph{embedding function}). A vote
  $v \in \calL(C)$ belongs to the domain induced by $x$
  if there is a point $x_v \in \reals^d$ such that for
  each two candidates $a, b \in C$, if $v \colon a \pref b$ then
  the Euclidean distance between $x_v$ and $x(a)$ is smaller than
  between $x_v$ and $x(b)$.
  These domains are discussed, e.g., by
  \citet{ene-hin:b:spatial,ene-hin:b:spatial2}.
\end{description}

\paragraph{Statistical Cultures.} Fix a candidate set $C$ and some
domain $\calD \subseteq \calL(C)$. A statistical culture is a
distribution over the votes from $\calD$. In particular, by
\emph{impartial culture over $\calD$} we mean the uniform distribution
over $\calD$. If we omit $\calD$, then we mean impartial culture of
$\calL(C)$. We introduce further statistical cultures in
\Cref{sec:setup-experiments}.

\section[Complexity of k-Kemeny]{Computational Complexity of $\boldsymbol{k}$-\textsc{Kemeny}}

The \kkemeny problem is $\np$-hard even if $k=1$ and $n=4$
voters~\citep{dwo-kum-nao-siv:c:rank-aggregation,bie-bra-den:j:kemeny-hardness}. However, for
Condorcet domains and $k=1$ it can be solved trivially
(it suffices to compute the Kemeny score of the Condorcet ranking,
which is guaranteed to be optimal).
We show that for some prominent Condorcet domains, including SP,
GS/bal, and GS/cat, as well as for SP/G domains, \kkemeny becomes
$\np$-hard already for $k=2$.  We also discuss the complexity of
\kkemeny on Euclidean domains. We supplement these results with a
general $\fpt$ algorithm for Condorcet domains, parameterized by the
number of voters, and an outright polynomial-time algorithm for SC
elections.

\subsection{Intractability Results for Structured Domains}

The key idea of our hardness proofs is to give reductions from the
\textsc{Hypercube 2-Segmentation} (\textsc{H2S}) problem. To define
this problem, we need some additional notation. For a binary string
$x$, we write $x[j]$ to refer to its $j$-th symbol. \emph{Hamming}
distance between two equal-length strings $x$ and $y$, denoted
$\ham(x,y)$, is the number of positions on which these two strings
differ.  For a sequence $S = (s_1, \ldots, s_n)$ of binary strings,
each of length $m$, the Hamming distance between $S$ and another
binary string $r$ of length $m$ is
$\ham(S,r) = \sum_{i=1}^n \ham(s_i,r)$. A string that minimizes the
Hamming distance to $S$ is called \emph{central} for $S$ and for every
position $i \in [m]$, has the same symbol on this position as at least
half of the strings in $S$ (in case of a tie, a central string can
take either symbol). The Hamming distance of such a string to $S$ is
the \emph{Hamming score} of $S$, denoted $\ham(S)$.

\begin{definition}
  An instance of \textsc{H2S} consists of a sequence
  $S = (s_1, \ldots, s_n)$ of binary strings and of an integer $t$.
  We ask if it is possible to partition $S$ into two groups, such that
  the sum of their Hamming scores is at most $t$.
\end{definition}

The first $\np$-completeness claim for \textsc{H2S} appears in the
conference paper of \citet{kle-pap-rag:c:segmentation},
but without a proof.
Its journal version 
does not include a proof
either~\citep{kle-pap-rag:j:segmentation}, but claims
$\mathrm{MAXSNP}$-hardness (also without proof). The $\np$-hardness
proof was eventually presented 16
years later  
by \citet{fei:t:segmentation}, who also argued why the
$\mathrm{MAXSNP}$-hardness claim
was incorrect.

\begin{theorem}\label{thm:hard:sp:gsbal}
  \kkemeny is $\np$-complete even for $k=2$ and elections that are
  both SP and GS/bal.
\end{theorem}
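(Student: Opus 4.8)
The plan is to reduce from \textsc{H2S}, engineering things so that binary strings become votes, Hamming scores become Kemeny scores of clusters, and the partition of the strings into two groups becomes the choice of two Kemeny rankings. Membership in $\np$ is immediate, since given a set $R$ of $k$ rankings we can evaluate $k\hbox{-}\kemeny_E(R)$ in polynomial time, so we may guess $R$ and verify. The gadget encoding a single bit will be a \emph{cherry} $\{a_j,b_j\}$: a tree node with two leaf children, whose two admissible orders $a_j\pref b_j$ and $b_j\pref a_j$ encode the symbols $0$ and $1$, and whose flip costs exactly one swap.

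\textbf{Construction.} Given an \textsc{H2S} instance $(S=(s_1,\dots,s_n),t)$ with strings of length $m$, first pad every string on the right with zeros until the length is the smallest power of two $m'\ge m$; the added columns are unanimous and contribute nothing to any Hamming score, so the answer is preserved. We introduce candidates $a_1,b_1,\dots,a_{m'},b_{m'}$ and let $\calT$ be the perfect binary tree whose $2m'$ leaves, read left to right, are $a_1,b_1,\dots,a_{m'},b_{m'}$, so that its $j$-th bottom node is the cherry $\{a_j,b_j\}$; this tree is balanced. For each string $s_i$ we build the vote $v_i$ by flipping the $j$-th cherry exactly when $s_i[j]=1$ and leaving every internal (non-cherry) node unflipped. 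Each $v_i$ is consistent with $\calT$, so the election is GS/bal. To see that it is also single-peaked, we use the interleaved axis
\[
  a_{m'} \lhd a_{m'-1} \lhd \cdots \lhd a_1 \lhd b_1 \lhd b_2 \lhd \cdots \lhd b_{m'}.
\]
For any flip pattern the top $2j$ candidates of the resulting vote are exactly $\{a_1,b_1,\dots,a_j,b_j\}$, which occupy the contiguous block from $a_j$ to $b_j$ on this axis, while the intermediate top $2j-1$ candidates extend the previous block by a single endpoint ($a_j$ on the left or $b_j$ on the right). Every prefix is therefore an interval, so each $v_i$ is SP.

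\textbf{Correctness.} We claim that for any subset $V'$ of the votes, the Kemeny score of $V'$, namely $\min_{r}\sum_{v\in V'}\swap(v,r)$, equals the Hamming score of the corresponding strings. For the upper bound, order the cherries $1,\dots,m'$ and, within cherry $j$, take the majority order among $V'$; since all votes agree on every cross-cherry pair, the swap distance from each vote to this ranking counts only its minority cherries, and the total is $\ham$ of the strings. For the lower bound, fix any ranking $r$ and restrict attention to the pairs $\{a_j,b_j\}$: then $\sum_{v\in V'}\swap(v,r)\ge\sum_{j}\#\{v\in V':v,r\text{ disagree on }\{a_j,b_j\}\}\ge\sum_j(\text{minority on }j)=\ham$. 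Hence cluster Kemeny scores coincide with Hamming scores. For $k=2$ the optimal $k$-Kemeny value equals the minimum over $2$-partitions $(V_1,V_2)$ of the sum of the two clusters' Kemeny scores (the minimum is attained by letting each $r_i$ be a Kemeny ranking of $V_i$), which by the claim equals $\min_{(V_1,V_2)}(\ham(V_1)+\ham(V_2))$, i.e.\ the optimal \textsc{H2S} value. Setting $q=t$ completes the reduction.

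\textbf{Main obstacle.} The delicate point is making every vote simultaneously SP and GS/bal. The cherry gadget is forced by the balanced tree, but we must also exhibit a single axis on which \emph{all} $2^{m'}$ cherry-flip combinations are single-peaked; this is exactly what the interleaved axis above achieves, and verifying its interval (telescoping) property uniformly over all flip patterns is the crux of the argument. The remaining care is the padding step, needed so that $\calT$ is a perfect, and hence balanced, binary tree with all cherries at the deepest level.
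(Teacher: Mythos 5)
Your construction is identical to the paper's: the same aligned votes encoding the strings via the order of $a_j$ versus $b_j$, the same interleaved single-peaked axis $a_m \lhd \cdots \lhd a_1 \lhd b_1 \lhd \cdots \lhd b_m$, the same balanced tree with leaf order $a_1,b_1,\dots,a_m,b_m$, the same padding to a power of two, and the same target $q=t$. The only (harmless) deviation is that for the lower bound you count disagreements on the pairs $\{a_j,b_j\}$ directly for an arbitrary ranking, whereas the paper invokes its majority-partition property of Kemeny rankings to argue the optimal rankings are aligned; both yield the same conclusion and your version is self-contained.
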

\begin{proof}[Proof sketch]
  To show $\np$-hardness, we reduce from \textsc{H2S}.
  Let the input instance consist of
  an integer $t$ and a sequence $S = (s_1, \ldots, s_n)$ of binary
  strings, each of length $m$.
  We form a
  set of candidates
  $C = \{a_1, \ldots, a_m\} \cup \{b_1, \ldots, b_m\}$ and we say that
  a preference order $v$ is \emph{aligned} if it is of the following
  form:
  \[
    v \colon \{a_1,b_1\} \pref \{a_2,b_2\} \pref \cdots \pref \{a_m,b_m\}.
  \]
  An aligned vote is consistent with length-$m$ binary string $x$ if
  for each $j \in [m]$, we have $v \colon a_j \pref b_j$ when $x[j]=1$
  and $v_i \colon b_j \pref a_j$ if $x[j]=0$.  We form an election
  $E = (C,V)$, where for each string $s_i$ we have exactly one vote
  $v_i$, aligned and consistent with it. We form a \kkemeny instance
  with election $E$, $k=2$, and $q = t$.  We observe that election
  $E$ is single-peaked with respect to societal axis:
  \[
    a_m \lhd \cdots \lhd a_2 \lhd a_1 \lhd b_1 \lhd b_2 \lhd \cdots
    \lhd b_m.
  \]
  It is also balanced group-separable, as witnesses by balanced binary
  tree $\calT$ with $2m$ leaves, where reading the labels of the
  leaves from left to right gives order
  $a_1 \pref b_1 \pref a_2 \pref b_2 \pref \cdots \pref a_m \pref
  b_m$. We ask if there are two rankings $r'$ and $r''$ such that $\kkemscore_E(\{r',r''\}) \leq t$.
\end{proof}
\appendixproof{Theorem}{thm:hard:sp:gsbal}{
  It is immediate to see that \kkemeny is in $\np$. To show
  $\np$-hardness, we give a reduction from \textsc{H2S}.  Let
  the input instance consist of an integer $t$ and a sequence
  $S = (s_1, \ldots, s_n)$ of binary strings, each of length $m$,
  where $m$ is a power of two (this assumption is w.l.o.g. as we can
  always ensure that it holds by extending each of the strings with no
  more than $m-1$ identical symbols). We form a set of candidates
  $C = \{a_1, \ldots, a_m\} \cup \{b_1, \ldots, b_m\}$ and we say that
  a preference order $v$ is \emph{aligned} if it is of the following
  form:
  \[
    v \colon \{a_1,b_1\} \pref \{a_2,b_2\} \pref \cdots \pref \{a_m,b_m\}.
  \]
  Further, we say that an aligned vote is consistent with length-$m$
  binary string $x$ if for each $j \in [m]$, we have
  $v \colon a_j \pref b_j$ when $x[j]=1$ and
  $v_i \colon b_j \pref a_j$ if $x[j]=0$.  We form an election
  $E = (C,V)$, where $V$ contains exactly one vote $v_i$ for each
  string $s_i$, such that $v_i$ is aligned and consistent with
  $s_i$. We form a \kkemeny instance with elections $E$, $k=2$, and
  $q = t$.  We observe that election $E$ is single-peaked with respect
  to societal axis:
  \[
    a_m \lhd \cdots \lhd a_2 \lhd a_1 \lhd b_1 \lhd b_2 \lhd \cdots
    \lhd b_m.
  \]
  It is also balanced group-separable, as witnesses by balanced binary
  tree $\calT$ with $2m$ leaves, where reading the labels of the
  leaves from left to right gives order
  $a_1 \pref b_1 \pref a_2 \pref b_2 \pref \cdots \pref a_m \pref
  b_m$.  It is immediate that the reduction can be computed in
  polynomial time. We ask if there are two rankings $r'$ and $r''$ such that $\kkemscore_E(\{r',r''\}) \leq t$.
   
  Let us now argue that the reduction is correct. Suppose there is a
  solution to the input \textsc{H2S} problem that partitions
  $S$ into $S'$ and $S''$, such that $\ham(S') + \ham(S'') \leq
  t$. Let $s'$ and $s''$ be two arbitrary central strings for $S'$ and
  $S''$. We partition the voter collection $V$ into $V'$ and $V''$
  such that each vote $v_i$ belongs to $V'$ exactly if string $s_i$
  belongs to $S'$ and we form two rankings, $r'$ and $r''$ that both
  are aligned, $r'$ is consistent with $s'$, and $r''$ is consistent
  with $s''$. It is immediate that $\kkemscore_E(\{r',r''\}) \leq t$
  as, indeed, we have:
  \begin{align*}
    \kkemscore_E(\{r',r''\})  \leq \textstyle \sum_{v_i \in V'}\swap(v_i,r')
                             + \textstyle \sum_{v_i \in V''}\swap(v_i,r'') = t.
  \end{align*}

  The other direction proceeds analogously. Assume that there are two
  rankings $r'$ and $r''$ in $\calL(C)$ such that
  $\kkemscore_E(\{r',r''\}) \leq q = t$.  By
  \Cref{pro:kemeny-condorcet}, we know that both $r'$ and $r''$ are
  aligned, and so we can derive length-$m$ binary strings $s'$ and
  $s''$ such that $r'$ and $r''$ are consistent with $s'$ and $s''$,
  respectively. We partition $S$ into $S'$ and $S''$ so that each
  string $s_i$ belongs to $S'$ exactly if
  $\swap(v_i,r') \leq \swap(v_i,r'')$, and it belongs to $S''$
  otherwise. One can verify that $\ham(S',s') + \ham(S'',s'') \leq t$
  and, hence, $\ham(S') + \ham(S'') \leq t$.
}

Before we discuss an analogous result for caterpillar group-separable
elections, we make an observation about how one can 
generate
GS/cat votes, or verify that votes are GS/cat.
Take an axis $c_1 \lhd \cdots \lhd c_m$, corresponding to a
caterpillar tree (the 
leaves, read from left to right, are labeled with $c_1, \ldots,
c_m$). To form a GS/cat vote for this axis, we consider the candidates
in the order $c_1, c_2, \ldots, c_m$ and for each $c_i$ we choose
whether to place it on the highest or the lowest still available
position. We refer to this as \emph{caterpillar vote construction
  (CVC)}.
The appeal is that if we sample decisions in CVC uniformly at random,
it is very similar to uniform sampling of SP votes: There, we always
place the considered candidate in the lowest available position,
randomizing between selecting the top- or bottom not-yet-ranked
candidate from the axis~\citep{wal:t:generate-sp};
a similar relation between GS/cat and SP was already noted by
\citet{boe-bre-elk-fal-szu:c:map-pref-learning}.

\begin{theorem}\label{thm:hard:gscat}
  \kkemeny is $\np$-complete even for $k=2$ and elections that are caterpillar
  group-separable.
\end{theorem}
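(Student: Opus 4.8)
The plan is to reduce from \textsc{H2S}, exactly in the spirit of the proof of \Cref{thm:hard:sp:gsbal}, but encoding binary strings into caterpillar group-separable votes rather than into aligned (SP/GS-bal) votes. Membership in $\np$ is again immediate. Given an \textsc{H2S} instance consisting of strings $S=(s_1,\dots,s_n)$ of length $m$ and a threshold $t$, I build a caterpillar over $2m$ candidates $C=\{c_1,\dots,c_{2m}\}$ (with the axis $c_1 \lhd \cdots \lhd c_{2m}$ corresponding to the caterpillar tree) and encode each string $s_i$ into a GS/cat vote $v_i$ via the caterpillar vote construction (CVC): I process $c_1,\dots,c_{2m}$ in order and, for each candidate, decide whether to place it on the highest or the lowest still-available position. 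The election $E=(C,V)$ has one vote per string, $k=2$, and a threshold $q$ fixed below; by construction $E$ is GS/cat.

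The heart of the argument is the relationship between the swap distance of CVC votes and the Hamming distance of their underlying decision strings. Writing the decisions of a CVC vote as a binary string $x \in \{0,1\}^{2m}$ (say, $1$ means ``place on top'' and $0$ means ``place on bottom''), a short recursion on the caterpillar shows that two CVC votes with decision strings $x$ and $x'$ lie at swap distance $\sum_{j=1}^{2m}(2m-j)\,[x[j]\neq x'[j]]$: if the two votes agree on the placement of $c_j$, that candidate lands at the same extreme of the remaining block in both and contributes no inversions; if they disagree, $c_j$ is separated from all $2m-j$ candidates placed after it, contributing exactly $2m-j$ inversions while leaving the relative order of the later candidates unchanged. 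Thus the CVC swap distance is a \emph{weighted} Hamming distance with the non-uniform, position-dependent weights $w_j = 2m-j$.

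This non-uniformity is the main obstacle, since \textsc{H2S} concerns \emph{unweighted} Hamming distance. I overcome it with a balancing trick: I let each string position $\ell \in [m]$ control the symmetric pair of caterpillar positions $\ell$ and $2m+1-\ell$, and I make every vote \emph{pair-consistent}, setting both decisions of the pair equal to $s_i[\ell]$. Since $w_\ell + w_{2m+1-\ell} = (2m-\ell)+(\ell-1) = 2m-1$ for every $\ell$, the weighted Hamming distance collapses to a uniform one, giving $\swap(v_i,v_{i'}) = (2m-1)\,\ham(s_i,s_{i'})$. I then set $q = (2m-1)\,t$.

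For correctness I argue that an optimal pair of $2$-Kemeny rankings may be taken inside the domain and pair-consistent. Because GS/cat is a Condorcet domain, \Cref{pro:kemeny-condorcet} lets me assume $r'$ and $r''$ are themselves GS/cat (i.e., CVC) votes, so each carries a decision string; and for a fixed clustering of the votes the optimal center is the coordinate-wise majority, which---because all votes agree within each pair---is automatically pair-consistent. Hence $\kkemscore_E(\{r',r''\})$ equals $(2m-1)$ times the optimal cost of a $2$-clustering of $S$ under unweighted Hamming distance, i.e.\ $(2m-1)$ times the \textsc{H2S} optimum. Consequently $\kkemscore_E(\{r',r''\}) \le q = (2m-1)t$ holds iff $S$ can be split into two groups whose Hamming scores sum to at most $t$, completing the reduction; both directions mirror the corresponding parts of the proof of \Cref{thm:hard:sp:gsbal}.
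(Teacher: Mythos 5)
Your reduction is correct, and while it shares the paper's overall strategy (reduce from \textsc{H2S}, use \Cref{pro:kemeny-condorcet}/\Cref{cor:kemeny-condorcet} to pull the two centers back into the domain, where they become coordinate-wise majorities of decision strings), the encoding mechanism is genuinely different from the one in the paper. The paper keeps the ``mirror-image'' representation $c_z(1)\pref\cdots\pref c_z(m)\pref X\pref \overline{c}_z(m)\pref\cdots\pref\overline{c}_z(1)$ and pads with a block $X$ of $M=m^{10}n^{10}$ dummy candidates so that each Hamming mismatch costs roughly $2M$ swaps; the lower-order positional costs are then absorbed by the slack term $2nm^2$ in the threshold $q=2Mt+2nm^2$. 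You instead prove the exact identity $\swap(v,v')=\sum_{j}(2m-j)[x[j]\neq x'[j]]$ for CVC votes---which is right, since the decision at step $j$ places $c_j$ above or below all $2m-j$ later candidates without disturbing their internal order---and then cancel the non-uniform weights by tying each string bit to the symmetric pair of positions $\ell$ and $2m+1-\ell$, whose weights sum to the constant $2m-1$. This yields an exact correspondence $\swap = (2m-1)\cdot\ham$ with threshold $q=(2m-1)t$, no dummy candidates, no slack, and only $2m$ candidates in total, which is tidier than the paper's construction. Two points you should state more carefully if you write this up: first, it is \Cref{cor:kemeny-condorcet} (the Condorcet ranking of each induced cluster is a globally optimal Kemeny ranking, and for a caterpillar it is precisely the majority CVC vote) rather than \Cref{pro:kemeny-condorcet} alone that licenses replacing arbitrary centers $r',r''$ by in-domain, pair-consistent ones without increasing the cost; second, when a coordinate is tied you must break the tie identically at the two paired positions so that the resulting center remains pair-consistent. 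Neither point is a gap, just a matter of making the backward direction explicit.
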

\begin{proof}[Proof sketch]
  We sketch a reduction from \textsc{H2S}, similar in spirit to the
  one for \Cref{thm:hard:sp:gsbal}, but with a different representation
  of the votes. Consider an \textsc{H2S}  instance with sequence
  $S = (s_1, \ldots, s_n)$ of binary strings of length-$m$ each, and
  an integer $t$. 
  Let $M = m^{10}n^{10}$.

  We form a candidate set $C = A \cup B \cup X$, where
  $A = \{a_1, \ldots, a_m\}$, $B = \{b_1, \ldots, b_m\}$ and
  $X = \{x_1, \ldots, x_M\}$, where $X$ is a large set of dummy
  candidates.  For each binary string $z$ of length $m$ and each
  position $j \in [m]$, we let $c_z(j)$ be $a_j$ if $z[j]=1$, and we
  let $c_z(j)$ be $b_j$ if $z[j]=0$. By $\overline{c}_z(j)$ we mean
  the unique candidate in $\{a_j,b_j\} \setminus \{c_z(j)\}$. For
  each binary string $z$ of length $m$, we define ranking $r(z)$:
  \begin{align*}
    r(z) \colon c_z(1) \pref  \cdots \pref c_z(m) \pref X  \pref \overline{c}_z(m) \pref \cdots  \pref \overline{c}_z(1),
  \end{align*}
  where by $X$ we mean listing members of $X$ from $x_1$ to $x_M$.
  For example, if $z$ has prefix $101$, then $r(z)$ is of the
  form:
  \[
    r(101...) \colon a_1 \pref b_2 \pref a_3 \pref \cdots \pref X \pref \cdots \pref b_3 \pref a_2 \pref b_1.
  \]
  We form an election $E$ with candidate set $C$ and a single vote
  $r(s_i)$ for each string $s_i \in S$. Note that $E$ is GS/cat for
  axis
  $ a_1 \lhd b_1 \lhd a_2 \lhd b_2 \lhd \cdots \lhd a_m \lhd b_m \lhd
  x_1 \lhd \cdots \lhd x_{M}$.
  
  Our $\kkemeny$ instance consists of election $E$, $k=2$, and
  $q = 2Mt + 2nm^2$.
  Intuitively, whenever some symbol differs between a string from the
  input and the central one for its group, this corresponds to, at
  most, $2M+2m^2$ swaps within a corresponding vote ($2M$ to swap the
  respective members of $A$ and $B$ between the sides of $X$, and
  the remaining ones to arrange their final positions).
\end{proof}
\appendixproof{Theorem}{thm:hard:gscat}{
  It is clear that our problem is in $\np$ and we give a reduction
  from \textsc{H2S}. Consider its instance with sequence
  $S = \{s_1, \ldots, s_n\}$ of binary strings of length-$m$ each, and
  an integer $t$. Our proof is similar in spirit to that of
  \Cref{thm:hard:sp:gsbal}, but uses a different representation of the
  strings from the \textsc{H2S} instance.

  Let $M = m^{10}n^{10}$.  We form a candidate set
  $C = A \cup B \cup X$, where $A = \{a_1, \ldots, a_m\}$,
  $B = \{b_1, \ldots, b_m\}$ and $X = \{x_1, \ldots, x_M\}$. $X$ is a
  large set of dummy candidates; while we could take much smaller $M$,
  the current choice makes our reasoning clear and easy to follow.
  For each binary string $z$ of length $m$ and each position
  $j \in [m]$, we let $c_z(j)$ be $a_j$ if $z[j]=1$, and we let
  $c_z(j)$ be $b_j$ if $z[j]=0$. By $\overline{c}_z(j)$ we mean the
  unique candidate in $\{a_j,b_j\} \setminus \{c_z(j)\}$ (so
  $\overline{c}_z(j)$ can be seen as a complement of $c_z(j)$). For
  each binary string $z$ of length $m$, we define $r(z)$ to be the
  following preference ranking:
  \begin{align*}
    r(z) \colon c_z(1) \pref c_z(2) \pref \cdots \pref c_z(m) \pref X \pref \overline{c}_z(m) \pref \cdots \pref \overline{c}_z(2) \pref \overline{c}_z(1),
  \end{align*}
  where by $X$ we mean listing members of $X$ from $x_1$ to $x_M$.
  For example, if $z$ has prefix $101$, then $r(z)$ is of the
  form:
  \[
    r(101...) \colon a_1 \pref b_2 \pref a_3 \pref \cdots \pref X \pref \cdots \pref b_3 \pref a_2 \pref b_1.
  \]
  We form an election $E = (C,V)$ with candidate set $C$ and voter
  collection $V = (v_1, \ldots, v_n)$, where for each $i \in [n]$,
  $v_i$ has preference order $r(s_i)$. Note that this election is
  caterpillar group-separable for the following axis (this follows by
  applying the CVC procedure in a straightforward way):
  \[
    a_1 \lhd b_1 \lhd a_2 \lhd b_2 \lhd \cdots \lhd a_m \lhd b_m \lhd x_1 \lhd \cdots \lhd x_{M}.
  \]
  Our $\kkemeny$ instance consists of election $E$, $k=2$, and
  $q = 2Mt + 2nm^2$. It is immediate that the reduction can be
  computed in polynomial time.

  Let us now argue that the reduction is correct. First, assume that
  there is a partition of $S$ into $S'$ and $S''$ such that
  $\ham(S') + \ham(S'') \leq t$, and let $s'$ and $s''$ be two central
  strings for $S'$ and $S''$, respectively. We partition $V$ into $V'$
  and $V''$ so that $V'$ contains voters corresponding to the members
  of $S'$, and $V''$ contains voters corresponding to members of
  $S''$. We claim that for each vote $v_i \in V'$ it holds that:
  \[
    \swap(v_i, r(s')) \leq 2M \cdot \ham(s_i,s') + 2m^2.
  \]
  Indeed, for each position $j \in [m]$ such that $s_i[j] \neq s'[j]$,
  transforming $v_i$ into $r(s')$ requires shifting $a_j$ and $b_j$
  across the members of $X$, which requires $2M$ swaps, and then
  arranging the candidates on the top and bottom $m$ positions
  requires, altogether, at most $2m^2$ swaps. Altogether, we need at
  most $2M \cdot \ham(s_i,s') + 2m^2$ swaps per voter. As the same
  analysis applies to votes in $V''$ and $r(s'')$, altogether we have:
  \begin{align*}
    \kkemscore_E(\{r(s'),r(s'')\})  \leq 2M \big(\ham(S') + \ham(S'')\big) + 2nm^2 = 2Mt + 2nm^2.
  \end{align*}
  So, if the \textsc{H2S} instance has a solution then so does our
  \kkemeny instance.

  Let us now assume that there are rankings $r'$ and
  $r'' \in \calL(C)$ such that $\kkemscore_E(\{r',r''\}) \leq q$.  We
  partition $S$ into $S'$ and $S''$ so that a string $s_i$ belongs to
  $S'$ exactly if $\swap(v_i,r') \leq \swap(v_i,r'')$ (we could also
  use a strict inequality here, or distribute strings $s_i$ for which
  $\swap(v_i,r') = \swap(v_i,r'')$ between $S'$ and $S''$ in any
  arbitrary way).  In the analysis below we focus on $r'$, $V'$, and
  $S'$, but the reasoning for $r''$, $V''$, and $S''$ is analogous.

  Due to \Cref{pro:kemeny-condorcet}, we can assume that there is some
  partition of $A \cup B$ into sets $T$ and $F$ such that $r'$ is of
  the form $r' \colon T \pref X \pref B$ and for each $j \in [m]$ we
  either have that $a_j \in T$ and $b_j \in F$, or we have that
  $a_j \in F$ and $b_j \in T$. We form a length-$m$ binary string $s'$
  so that for each $j \in [m]$, $s'[j] = 1$ if $a_j \in T$, and
  $s'[j] = 0$ if $a_j \in F$. We observe that for each vote $v_i \in V'$
  we have:
  \[
    \swap(v_i, r') \leq 2M \cdot \ham(s_i,s') + 2m^2.
  \]
  Indeed, this follows by the same reasoning as in the first part of
  the correctness proof. Consequently, we have that (we let
  $E' = (C,V')$ and $E'' = (C,V'')$):
  \begin{align*}
    \kemscore_{E'}(r') &= \textstyle \sum_{v_i \in V} \swap(v_i, r') \\
                           &\leq 2M\cdot\ham(S') + 2|V'|\cdot m^2.
  \end{align*}
  By repeating the same analysis for $r''$, $V''$, and $S''$, we see
  that:
  \begin{align*}
    \kkemscore_{E}(\{r',r''\}) &\leq \kemscore_{E'}(r')\! +\! \kemscore_{E''}(r'') \\
                               &\leq 2M\cdot\ham(S) + 2n\cdot m^2\\ &\leq q.
  \end{align*}
  However, since $q = 2Mt + 2nm^2$, this inequality means that
  $\ham(S) \leq t$. Indeed, if $\ham(S) > t$ then we would have that
  $2M\cdot\ham(S) > 2Mt + 2nm^2$, which would be a contradiction.
  This shows that if there is a solution for our \kkemeny instance,
  then there also is one for \textsc{H2S}, which completes
  the correctness proof.
}

We can also extend
\Cref{thm:hard:sp:gsbal} to show $\np$-hardness for all
SP/G domains, which, e.g., include the SPOC domain.
\begin{theorem}
  \label{thm:hard:sp-on-graph}
  For every SP/G domain, \kkemeny is $\np$-complete even for $k=2$ and elections from this domain.
\end{theorem}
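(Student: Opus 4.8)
The plan is to reduce from \textsc{H2S}, reusing the skeleton of the proof of \Cref{thm:hard:sp:gsbal} but realizing all votes inside an arbitrary SP/G domain. Fix the connected graph $G$ that the domain associates with a candidate set of the size we will use (recall $G$ is computable in polynomial time). The starting point is the combinatorial characterization of SP/G votes: a ranking is single-peaked on $G$ exactly when, reading it from the top, every prefix induces a connected subgraph of $G$, i.e., each candidate other than the top one is adjacent in $G$ to some candidate ranked above it. Thus SP/G votes are precisely the ``connected-growth'' orderings of $G$; in particular, any order that respects the levels of a BFS spanning tree of $G$---all of level $i$ before level $i+1$, with an arbitrary order within each level---is automatically SP/G.

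Given an \textsc{H2S} instance with strings $s_1,\dots,s_n$ of length $m$, I would compute a BFS (or DFS) spanning tree $T$ of $G$ and use it to designate, for each of the $m$ bit positions, a pair of candidates $\{a_j,b_j\}$ sharing a common earlier neighbor in $T$ (e.g.\ two children of a common node), so that either order of the pair extends to a valid SP/G ranking. As in the caterpillar proof of \Cref{thm:hard:gscat}, I would then interleave a large block $X$ of $M=\poly(m,n)$ dummy candidates so that relocating a mismatched pair from the ``high'' side of $X$ to the ``low'' side costs exactly $2M$ swaps, while every remaining reordering costs only $O(nm^2)\ll M$. Each input string $s_i$ then yields one SP/G vote $v_i$ whose swap distances to the canonical votes track the Hamming distances of the underlying strings up to a controlled $O(m^2)$ overhead. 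Setting $k=2$ and $q=2Mt+O(nm^2)$ mirrors the caterpillar instance, and the forward direction is then routine: a partition of $S$ into $S',S''$ with central strings $s',s''$ gives two SP/G rankings of $2$-Kemeny score at most $2M(\ham(S')+\ham(S''))+O(nm^2)\le q$.

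The delicate direction---extracting binary strings and a partition from any two rankings $r',r''$ of score at most $q$---is where the main obstacle lies, and the reason this is not an immediate corollary of \Cref{thm:hard:sp:gsbal}: SP/G domains need not be Condorcet, so I cannot invoke \Cref{pro:kemeny-condorcet} to assume the optimal rankings are themselves of the canonical aligned form. I would instead let the dummy block $X$ do this work. Because moving any dummy, or misplacing any pair $\{a_j,b_j\}$ relative to $X$, costs $\Theta(M)$, while the budget leaves only $O(nm^2)$ slack beyond the unavoidable $2Mt$, every near-optimal pair of rankings must agree with a canonical SP/G ranking on the gross placement of all candidates. Hence each of $r',r''$ induces a well-defined length-$m$ string, and assigning every $v_i$ to the closer ranking yields $\ham(S')+\ham(S'')\le t$.

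The second place requiring care is guaranteeing that the needed backbone and swappable pairs exist in \emph{every} connected graph, not only in paths. I would address this with a structural lemma: using the connected-prefix characterization together with a spanning tree, one can always extract $\Omega(N)$ candidate pairs that are freely swappable within a common connected prefix (taking the candidate count $N$ large enough to host the $m$ bits and the $M$ dummies). This is what makes the construction uniform across all SP/G domains---including degenerate ones such as the path (the classic SP case, already covered by \Cref{thm:hard:sp:gsbal}) and, as a special case, the SPOC domain. Establishing this lemma cleanly and verifying the swap-cost bookkeeping for the specific backbone it produces is where most of the technical work will concentrate.
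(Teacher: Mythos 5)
Your reduction hinges on a structural lemma that is false for a large class of connected graphs. For the encoding $c_z(1)\succ\cdots\succ c_z(m)\succ X\succ \overline{c}_z(m)\succ\cdots\succ\overline{c}_z(1)$ to produce SP/G votes for \emph{every} choice of the bits, each candidate must be adjacent in $G$ to some candidate ranked above it no matter which elements of the earlier pairs were chosen; in particular, after the connected prefix has swallowed the chosen elements and all of $X$, each complement $\overline{c}_z(j)$ must still have a neighbour inside that prefix. On a path (and on any graph whose spanning tree has few leaves---a path with a few pendant vertices, a broom, a cycle, \dots) a connected prefix has at most two ``exposed'' neighbours at any moment, so one cannot freely append $m$ independent complements at the bottom; one cannot even realize the top segment, since choosing $a_1$ on one side of the peak and $b_2$ on the other already yields a disconnected prefix. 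Hence there are no $\Omega(N)$ pairs that are ``freely swappable within a common connected prefix'' in the sense your construction needs, and deferring to \Cref{thm:hard:sp:gsbal} only for the literal path does not rescue the argument: the theorem quantifies over all SP/G domains, and the obstruction persists for every path-like graph.

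What the paper does instead is both less demanding and simpler: it reduces from \kkemeny on single-peaked elections (already $\np$-hard for $k=2$ by \Cref{thm:hard:sp:gsbal}) rather than from \textsc{H2S}, blows the candidate set up to $m^2$, and observes that a spanning tree of any connected graph on $m^2$ vertices either contains a path on $m$ vertices or has at least $m$ leaves. In the first case the hard SP instance is embedded on the path and the remaining $m^2-m$ candidates are frozen at the bottom of every vote in a fixed connectivity-respecting order; in the second, the remaining candidates are frozen at the top, and the $m$ leaves---each adjacent to the frozen part---can then be ordered arbitrarily at the bottom, so the image election is SP/G and score-equivalent to the original. If you insist on a direct \textsc{H2S} reduction you are forced into essentially the same dichotomy, using the aligned encoding of \Cref{thm:hard:sp:gsbal} in the long-path case and your caterpillar-style encoding only when the spanning tree has enough leaves, at which point reducing from the already-established SP hardness is strictly less work. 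Your worry about \Cref{pro:kemeny-condorcet}, by contrast, is not a real obstacle: that proposition concerns the majority relation of the constructed election, not the ambient domain, so it applies exactly as in the caterpillar proof.
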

\appendixproof{Theorem}{thm:hard:sp-on-graph}{
    \kkemeny is in NP,
    thus we focus on showing the hardness,
    for which we reduce from \kkemeny for single-peaked elections.
    Given a single-peaked election $E = (C,V)$
    with $m = |C|$ candidates,
    we will construct a corresponding instance of
    SP/G election with a set of $m^2$ candidates $X$.
    We consider the same constants $k$ and $q$ 
    in both instances.

    At a high level, a graph with $m^2$ vertices has to either
    contain a path of length $m$
    or $m$ vertices that are leaves of the spanning tree.
    In both cases, we can represent each ordering on $C$
    as an ordering on these $m$ vertices
    and fix the positions of the remaining vertices.
    If we do that for all votes in election $E$
    we obtain an SP/G election,
    in which the \kkemeny problem is equivalent
    to the \kkemeny problem on $E$.

    Formally,
    let us denote the societal axis of $E$
    by $c_1 \lhd c_2 \lhd \dots \lhd c_m$
    and consider arbitrary graph $G \in \mathcal{G}$,
    with the set of vertices being $X$.
    Let $T$ be an arbitrary spanning tree of $G$.
    We will consider two cases based on whether
    the diameter of $T$,
    defined as the number of vertices in the longest path, $\diam(T)$,
    is smaller than $m$ or not.
    
    \textbf{Case 1.}
    $\diam(T) \ge m$.
    In such a case, there exists a path of $m$ vertices
    $p = (x_1, x_2, \dots, x_m)$ in $T$, and thus also in $G$.
    Since $G$ is connected,
    for the remaining vertices in $X \setminus \{x_1, x_2, \dots, x_m\}$
    we can find an ordering
    $x_{m+1},x_{m+2},\dots,x_{m^2}$ such that
    for every $i \in \{m+1,\dots, m^2\}$
    vertex $x_{i}$ is connected by an edge to
    some vertex in the set $\{x_1,x_2,\dots,x_{i-1}\}$.
    Now, let $f \colon \calL(C) \rightarrow \calL(X)$
    be a function that transforms any ranking
    $r$ over the original candidates $C$ given by
    $c_{r_1} \succ c_{r_2} \succ \dots \succ c_{r_m}$
    into a ranking over $X$ that is:
    \[
        f(r)\colon x_{r_1} \succ x_{r_2} \succ \dots \succ x_{r_m}
            \succ x_{m+1} \succ \dots \succ x_{m^2}.
    \] 
    Then, we define a new election $E' = (X,V')$
    which we obtain by transforming all of the original votes
    in this way, i.e., $V' = \{f(v) : v \in V\}$.
    
    Observe that each vote $v \in V'$ is single-peaked on $G$.
    Indeed, for each $t \in [m]$,
    the first $t$ candidates in the ranking of $v$
    form an interval on path $p$
    since $f^{-1}(v)$ is single-peaked.
    For every larger $t$, the first $t$ candidates
    form a connected subgraph of $G$
    by the way we have chosen vertices
    $x_{m+1},\dots,x_{m^2}$.
    Thus, $E'$ is single-peaked on $G$.

    Finally, observe that in the set of rankings $R'$
    that minimize $\kkemscore_{E'}(R')$,
    every ranking will be of the form
    \[
        \{x_1,x_2,\dots,x_m\} \succ x_{m+1} \succ \dots \succ x_{m^2}
    \]
    (otherwise, the distance can be decreased by moving candidates $x_i$ for $i \ge m+1$ into their respective positions).
    Also, for every ranking set $R$ over the original set of candidates $C$
    and ranking set $R' = \{f(r) : r \in R\}$,
    we have
    $\kkemscore_{E}(R) = \kkemscore_{E'}(R')$.
    Both facts imply that there exists a set of
    $k$ rankings $R' \subseteq \calL(X)$,
    with $\kkemscore_{E'}(R') \le q$,
    if and only if,
    there exists a set of $k$ rankings $R \subseteq \calL(C)$
    such that $\kkemscore_{E}(R) \le q$.
    This concludes the analysis of this case.

    \textbf{Case 2.}
    $\diam(T) < m$.
    In this case, let us first show that necessarily
    $T$ has at least $m$ leaves.
    Assume otherwise, i.e., there are at most $m-1$ leaves.
    Let $r \in X$ denote an arbitrary vertex, which we from now call a \emph{root},
    and consider paths in $T$ from every leaf to $r$.
    Every such path is of length at most $m-1$ (since $\diam(T) < m$)
    and there are at most $m-1$ of them,
    hence, in total,
    they pass through at most $(m-1)^2$ distinct vertices.
    On the other hand,
    every vertex in a tree is on the path from some leaf to the root,
    thus this number should be at least $|X| = m^2$.
    Hence, we arrive at a contradiction.

    Therefore, we can denote some $m$ leaves in $T$
    as $x_1, x_2, \dots, x_m$.
    Consider a subgraph of $T$ induced by all of the remaining vertices
    and observe that it is connected
    (as removing leaves keeps $T$ connected).
    Hence, there exists an ordering 
    $x_{m^2}, x_{m^2 - 1}, \dots, x_{m+1}$
    of vertices in $X \setminus \{x_1, x_2,\dots, x_m\}$
    such that
    for every $i \in [m^2-1]$
    vertex $x_{m^2 -i}$ is connected to some vertex in
    $\{x_{m^2}, x_{m^2 - 1}, \dots, x_{m^2 - i+1}\}$.
    
    Next, similarly as in Case 1,
    we take function
    $f \colon \calL(C) \rightarrow \calL(X)$
    that transforms any ranking
    $r$ over the original candidates $C$ given by
    $c_{r_1} \succ c_{r_2} \succ \dots \succ c_{r_m}$
    into a ranking over $X$ that is:
    \[
        f(r)\colon x_{m^2} \succ \dots \succ x_{m+1}
            \succ 
            x_{r_1} \succ x_{r_2} \succ \dots \succ x_{r_m}.
    \]
    Again, we define the new set of voters as a set of all voters in $V$
    transformed using $f$, i.e., $V' = \{f(v) : v \in V\}$.
    Then, election $E' = (X, V')$ is single-peaked on $G$.
    Indeed, for every vote $v \in V'$
    and $t \in [m^2]$
    subgraph induced by the first $t$ candidates
    is connected, by the way we defined $x_{m^2},\dots,x_{m+1}$
    and the fact that each leaf $x_{1},\dots,x_{m}$
    is connected to some vertex in $\{x_{m^2},\dots,x_{m+1}\}$.
    
    Finally, as in Case 1, we observe that
    in the set of rankings $R'$
    that minimize $\kkemscore_{E'}(R')$
    every ranking will be of the form
    \[
        x_{m^2} \succ \dots \succ x_{m+1} \succ  \{x_1,\dots,x_m\}
    \]
    and for every ranking set $R$ over the original set of candidates $C$
    and ranking set $R' = \{f(r) : r \in R\}$,
    we have
    $\kkemscore_{E}(R) = \kkemscore_{E'}(R')$.
    Therefore, there exists a set of
    $k$ rankings $R' \subseteq \calL(X)$,
    with $\kkemscore_{E'}(R') \le q$,
    if and only if,
    there exists a set of $k$ rankings $R \subseteq \calL(C)$
    such that $\kkemscore_{E}(R) \le q$.
}

For the case of $2$-Euclidean elections (and, naturally,
higher-dimensional ones), \citet{esc-spa-tyd:j:kemeny-2d} have shown
that already deciding if there is a Kemeny ranking with a given score
is $\np$-complete, even if the embedding function is given.
One of the reasons why this $\np$-hardness is possible is that
there is no guarantee that the Kemeny ranking belongs to the given
$2$-Euclidean domain
(which stands in contrast to Condorcet domains).
Indeed, if we seek a ranking that minimizes the
Kemeny score and belongs to the domain, then
\citet{ham-lac-rap:c:kemeny-2d-embedding} gave a polynomial-time
algorithm for this problem. Briefly put, the size of each
$d$-Euclidean domain is at most $O(m^{2d})$, where $m$ is the number
of candidates, so one can use brute-force search (the approach of
\citet{ham-lac-rap:c:kemeny-2d-embedding} is faster, though).
We can also perform such a brute-force
search for $k$-Kemeny scores.
\begin{definition}
  In the \textsc{$d$-Embeddable $k$-Kemeny} problem we are given an
  election $E$ over some $d$-Euclidean domain, an embedding function
  $x$ for this domain, and an integer $q$. We ask if there is a set
  $R = \{r_1, \ldots, r_k\}$ of $k$ rankings from the domain, such
  that $\kkemscore_E(R) \leq q$.
\end{definition}

\begin{corollary}  
  For each fixed $d$ and $k$, \textsc{$d$-Embeddable $k$-Kemeny} is polynomial-time solvable.
\end{corollary}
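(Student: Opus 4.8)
The plan is to solve the problem by exhaustive search, relying on the fact—already noted in the text preceding the statement—that a $d$-Euclidean domain induced by a fixed embedding $x$ contains only polynomially many rankings. Concretely, I would first compute the entire domain $\calD \subseteq \calL(C)$ explicitly as a list of rankings, and then try every $k$-element subset of this list, evaluating its $k$-Kemeny score and checking whether any of them is at most $q$. For fixed $d$ and $k$, both the size of $\calD$ and the number of $k$-subsets are polynomial in $m = |C|$, so the whole procedure runs in polynomial time.

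The one nonobvious step is producing $\calD$ algorithmically rather than merely bounding its cardinality. Two rankings of $C$ by distance from a point $y \in \reals^d$ can differ only across the perpendicular bisector hyperplane $H_{ab} = \{\, y : \|y - x(a)\| = \|y - x(b)\| \,\}$ of some candidate pair $\{a,b\}$; within any full-dimensional cell of the arrangement $\mathcal{A}$ of these $\binom{m}{2}$ hyperplanes the induced ranking is constant. Hence the rankings of $\calD$ are exactly those realized at interior points of the cells of $\mathcal{A}$, and by the standard bound on arrangements of $N$ hyperplanes in $\reals^d$ (at most $O(N^d)$ cells), with $N = \binom{m}{2} = O(m^2)$, we recover $|\calD| = O(m^{2d})$. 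For fixed $d$ this arrangement, together with one witness point per cell, can be constructed in time polynomial in $m$ by standard arrangement-construction techniques; from each witness point we read off the corresponding ranking by sorting the candidates according to their distance to that point, which takes $O(m \log m)$ time. Alternatively, one may invoke the polynomial-time enumeration of \citet{ham-lac-rap:c:kemeny-2d-embedding} directly.

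With $\calD$ in hand as a list of $O(m^{2d})$ rankings, I would enumerate all $\binom{|\calD|}{k} = O(m^{2dk})$ subsets $R$ of size $k$. For each one I compute $\kkemscore_E(R) = \sum_{v \in V} \min_{r \in R} \swap(v,r)$; each swap distance is obtainable in $O(m^2)$ time by counting inversions, so one subset costs $O(n k m^2)$, and the total running time is $O(m^{2dk} \cdot n k m^2)$, which is polynomial for fixed $d$ and $k$. The algorithm answers ``yes'' precisely when some enumerated $R$ attains score at most $q$; correctness is immediate, since the problem restricts the sought rankings $r_1, \ldots, r_k$ to lie in the domain, and we examine every admissible candidate set. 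The only genuine obstacle is the enumeration of $\calD$ in the second step, but this is exactly where the smallness of Euclidean domains is decisive, and it is precisely the restriction of $R$ to the domain—in contrast to the unrestricted, $\np$-hard \kkemeny problem—that makes brute force viable.
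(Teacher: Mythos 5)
Your proposal is correct and matches the paper's own argument: the paper justifies this corollary exactly by noting that a $d$-Euclidean domain has at most $O(m^{2d})$ rankings, so brute-force search over all $k$-element subsets of the domain is polynomial for fixed $d$ and $k$. Your additional detail on enumerating the domain via the arrangement of perpendicular-bisector hyperplanes is a correct and welcome fleshing-out of the step the paper leaves implicit.
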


So, as opposed to \kkemeny for SP, GS/bal and GS/cat domains,
\textsc{$d$-Embeddable $k$-Kemeny} is tractable for $k=2$.  Yet, if
$k$ is part of the input then \textsc{$d$-Embeddable $k$-Kemeny} is
$\np$-complete, even for $2$-Euclidean elections.

\begin{theorem}
  \label{thm:hard:euclidean}
  \textsc{$d$-Embeddable $k$-Kemeny} is $\np$-complete for $d \geq 2$.
\end{theorem}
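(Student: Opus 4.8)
The plan is to first dispatch membership in $\np$ and then give a reduction, keeping in mind that the preceding corollary makes the problem polynomial-time solvable for every fixed $k$; hence the reduction must produce instances in which $k$ grows, and we reduce from a \emph{variable-$k$} clustering problem rather than from \textsc{H2S}. For membership, a certificate is a set $R = \{r_1, \ldots, r_k\}$ of domain rankings: checking that each $r_i$ lies in the given domain reduces to deciding feasibility of a linear system (after squaring, each constraint of the form $\|x_v - x(a)\| < \|x_v - x(b)\|$ is linear in the coordinates of a realizing point), which is polynomial, and then $\kkemscore_E(R) \leq q$ is verified directly. It also suffices to treat $d = 2$: placing all candidate points in a $2$-flat of $\reals^d$ makes the orthogonal coordinates contribute the same additive term to every squared distance and thus irrelevant to every ranking, so one fixed $2$-Euclidean domain is simultaneously $d$-Euclidean for all $d \geq 2$, and hardness transfers upward verbatim.

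The structural observation driving the reduction is that a $d$-Euclidean domain is exactly the set of cells of the arrangement of the $\binom{m}{2}$ perpendicular bisectors of the candidate points, and that a pair $\{a,b\}$ is ordered differently by two realizing points iff those points lie on opposite sides of the $ab$-bisector. Consequently $\swap(v,r_i)$ equals the number of bisectors separating the point realizing $v$ from the point realizing $r_i$, so that
\[
  \kkemscore_E(R) = \textstyle\sum_{v \in V} \min_{i \in [k]} \big(\text{number of bisectors separating } v \text{ from } r_i \big).
\]
Thus \textsc{$2$-Embeddable $k$-Kemeny} is precisely a geometric $k$-median problem in the \emph{line-crossing} (separation) metric of an arrangement we are free to design, with both the clients (votes) and the admissible facilities (medians, forced in-domain) ranging over cells of that arrangement. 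I would exploit this by sculpting the bisector arrangement to realize a hard planar clustering instance: two large families of near-parallel bisectors, obtained from candidate pairs that differ essentially in a single coordinate, yield inside a bounded region a grid-like arrangement whose separation metric coincides with a scaled rectilinear metric, and a block of dummy candidates can amplify the cost of crossing designated bisectors, mirroring the $2M$-amplification of \Cref{thm:hard:gscat}. Within such a region I would encode a known $\np$-hard variable-$k$ clustering problem (geometric/rectilinear $k$-median in the plane, or a combinatorial covering problem realized in the grid metric), with the admissible centers placed at cells guaranteed to lie in the domain.

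For correctness I would argue both directions as in the earlier proofs: a good clustering of the source instance yields $k$ domain rankings of total assignment cost at most $q$, and conversely any $k$ domain rankings of score at most $q$ can be snapped to the designated cells without increasing the cost, using the amplification to show that crossing an amplified bisector, or separating a median from an entire intended cluster, is never profitable; a clustering of cost at most $t$ is then read off. The main obstacle is geometric realizability. Unlike the SP, GS/bal, and GS/cat constructions, a $2$-Euclidean domain contains only $O(m^4)$ rankings, so string encodings with exponentially many aligned votes are impossible, and every vote \emph{and} every needed median must be an honest cell of one fixed bisector arrangement. Concretely, the hard part is controlling the stray diagonal bisectors arising from cross-pairs of candidates so that they do not corrupt the intended grid metric in the region where voters live, and guaranteeing that the cells serving as admissible centers are exactly those the reduction requires---this is where I expect the dummy-candidate padding and careful coordinate choices to do the heavy lifting.
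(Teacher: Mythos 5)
Your high-level plan is the right one and matches the paper's in spirit: view the $2$-Euclidean domain as the cells of the bisector arrangement, note that $\swap(v,r)$ counts separating bisectors, engineer two families of bisectors into a grid inside a bounded region, and encode a variable-$k$ covering/clustering problem there (the paper reduces from \textsc{GridGraphDominatingSet}, which is exactly the ``combinatorial covering problem realized in the grid metric'' you mention). Your $\np$-membership argument and the lift from $d=2$ to general $d\geq 2$ are also fine and agree with the paper.

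However, the proposal stops precisely where the proof begins. The entire technical content of the theorem is the concrete realizable construction, and you explicitly defer it: you say the hard part is ``controlling the stray diagonal bisectors arising from cross-pairs'' and that you ``expect the dummy-candidate padding and careful coordinate choices to do the heavy lifting,'' without exhibiting such coordinates or verifying anything. The paper does this explicitly: candidates $a_i$ at $(4i,4i)$ and $b_i$ at $(12n+4i,0)$ for $i\in\{0,\dots,n\}\cup\{2n\}$, so that the bisectors $d_{b_{2n},b_i}$ are vertical lines and the bisectors $d_{a_{2n},a_i}$ have slope $-45^\circ$, forming an $n\times n$ grid in a designated region; a short calculation then shows every cross-pair bisector $d_{a_i,b_j}$ (and every $a$--$a$ or $b$--$b$ bisector not in the two distinguished families) misses that region entirely. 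With one voter per grid-graph vertex, adjacent cells are at swap distance exactly $1$, and setting $k=s$, $q=|V|-k$ makes the equivalence with dominating set immediate---no cost amplification via dummy candidates is needed, contrary to what you anticipate by analogy with \Cref{thm:hard:gscat}. A secondary soft spot: you float ``rectilinear $k$-median in the plane'' as an alternative source problem, but realizing arbitrary median instances (with prescribed facility cells) inside a bisector arrangement is considerably harder to justify than the $0/1$-distance covering structure the paper uses; without the explicit arrangement and the verification that no unwanted bisector crosses the voter region, neither direction of your correctness argument can be checked. As written, this is a credible research plan rather than a proof.
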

\appendixproof{Theorem}{thm:hard:euclidean}{
  Clearly, the problem is in NP.
  For the hardness, we will focus only on the case of $d=2$
  as 2-Euclidean domain is a subdomain
  of every $d$-Euclidean domain,
  hence the negative results carry over.

  \begin{figure}
    \centering
    \begin{tikzpicture}
        \def\xs{0.5cm} 
        \def\ys{0.5cm} 
        \def\x{0cm} 
        \def\y{0cm} 
        \def\ls{\footnotesize} 
        
        \tikzset{
            node/.style={circle, draw, minimum size=0.3cm, inner sep=0, fill = black!05},
            edge/.style={draw, thick, sloped, -, above, font=\footnotesize},
        }

        \node[node] (11) at (\x + 1*\xs, 1*\ys + \y) {};
        \node[node] (12) at (\x + 2*\xs, 1*\ys + \y) {};
        \node[node] (21) at (\x + 1*\xs, 2*\ys + \y) {};
        \node[node] (22) at (\x + 2*\xs, 2*\ys + \y) {};
        \node[node] (23) at (\x + 3*\xs, 2*\ys + \y) {};
        \node[node] (31) at (\x + 1*\xs, 3*\ys + \y) {};
        \node[node] (33) at (\x + 3*\xs, 3*\ys + \y) {};
        
        \path[edge]
        (11) edge (12)
        (11) edge (21)
        (12) edge (22)
        (21) edge (22)
        (22) edge (23)
        (21) edge (31)
        (23) edge (33)
        ;
      
    \end{tikzpicture}
    \caption{An example instance of \textsc{GridGraphDominatingSet} problem used in the proof of \Cref{thm:hard:euclidean}.}
    \label{fig:thm:hard:euclidean:1}
\end{figure}
  
  We provide a reduction from \textsc{GridGraphDominatingSet}.
  In this problem,
  we are given a constant $s \in \mathbb{N}$ and a graph $G=(V,E)$,
  in which a set of vertices is a subset of cells
  in a $n \times n$ Cartesian grid,
  i.e., $V \subseteq [n]^2$,
  and there is an edge between two vertices, if and only if,
  their cells are neighboring, i.e.,
  $\{(x_1,y_1), (x_2,y_2)\} \in E$, if and only if,
  $|x_1 - x_2| = 1$ and $y_1=y_2$, or
  $x_1 = x_2$ and $|y_1 - y_2| =1$
  (see \Cref{fig:thm:hard:euclidean:1} for an example).
  The question is whether there exists a subset $S \subseteq V$ of size $s$, known as \emph{dominating set},
  such that every vertex in the graph
  is adjacent to a vertex in $S$ or is in $S$ itself.
  The problem is known to be NP-complete~\cite{cla-col-joh:j:grid-graphs}.
  
  For every such graph $G = (V,E)$ with $V \subseteq [n]^2$ and constant $s$ we will construct a corresponding instance of 
  \textsc{$2$-Embeddable $k$-Kemeny}.
  To this end, let us first introduce some additional notation
  specific to 2-Euclidean elections.
  We say that the \emph{divider} between pair of candidates $a$ and $b$ embedded in 2-dimensional Euclidean space,
  denoted by $d_{a,b}$,
  is a perpendicular bisection of a line segment
  connecting $a$ and $b$.
  This is also the exact set of points that are equidistant from $a$ and $b$.
  Thus, every voter located on one side of $d_{a,b}$
  prefers $a$ over $b$, and every voter on the other side
  prefers $b$ over $a$.
  Furthermore, for any two voters $u$ and $v$
  that do not lay any divider
  (and we will consider only such voters),
  the number of dividers we cross
  when we go from $u$ to $v$ via a straight line
  is equal to $\swap(u,v)$.

  Let us now describe the construction of the corresponding
  \textsc{$2$-Embeddable $k$-Kemeny} instance $E$
  (see \Cref{fig:thm:hard:euclidean:2} for an illustration).
  For each $i \in \{0,1,\dots,n\} \cup \{2n\}$
  we put candidate $a_i$ in position $(4i,4i)$
  and candidate $b_i$ in position $(12n + 4i, 0)$.
  This gives us the total of $2n + 4$ candidates.

\begin{figure}
    \centering
    \begin{tikzpicture}
    \def\s{0.45} 
    \def\d{9 * \s}

    \draw[nearly transparent, gray] (0.5*\s - 1.5*\s, 0.5*\s + 1.5*\s) --
    (0.5*\s + 8.5*\s, 0.5*\s -8.5*\s);
    \draw[nearly transparent, gray] (1*\s - 2*\s, 1*\s + 2*\s) -- 
    (1.0*\s + 9*\s, 1.0*\s -9*\s);
    \draw[nearly transparent, gray] (1.5*\s - 2.5*\s, 1.5*\s + 2.5*\s) -- 
    (1.5*\s + 9.5*\s, 1.5*\s -9.5*\s);
    \draw[nearly transparent, gray] (2*\s - 3*\s, 2*\s + 3*\s) -- 
    (2.0*\s + 10*\s, 2.0*\s -10*\s);
    \draw[nearly transparent, gray] (2.5*\s - 3.5*\s, 2.5*\s + 3.5*\s) -- 
    (2.5*\s + 10.5*\s, 2.5*\s -10.5*\s);
    \draw[gray] (3*\s - 4*\s, 3*\s + 4*\s) -- 
    (3.0*\s + 11*\s, 3.0*\s -11*\s);
    \draw[gray] (3.5*\s - 3.5*\s, 3.5*\s + 3.5*\s) -- 
    (3.5*\s + 11.5*\s, 3.5*\s -11.5*\s);
    \draw[gray] (4*\s - 3*\s, 4*\s + 3*\s) -- 
    (4.0*\s + 12*\s, 4.0*\s -12*\s);
    \draw[gray] (4.5*\s - 2.5*\s, 4.5*\s + 2.5*\s) -- 
    (4.5*\s + 11.5*\s, 4.5*\s -11.5*\s);

    \draw[nearly transparent, gray] (\d + 0.5*\s, 7*\s) -- (\d + 0.5*\s,-8*\s);
    \draw[nearly transparent, gray] (\d + 1.0*\s, 7*\s) -- (\d + 1.0*\s,-8*\s);
    \draw[nearly transparent, gray] (\d + 1.5*\s, 7*\s) -- (\d + 1.5*\s,-8*\s);
    \draw[nearly transparent, gray] (\d + 2.0*\s, 7*\s) -- (\d + 2.0*\s,-8*\s);
    \draw[nearly transparent, gray] (\d + 2.5*\s, 7*\s) -- (\d + 2.5*\s,-8*\s);
    \draw[gray] (\d +3.0*\s, 7*\s) -- (\d +3.0*\s,-8*\s);
    \draw[gray] (\d +3.5*\s, 7*\s) -- (\d +3.5*\s,-8*\s);
    \draw[gray] (\d +4.0*\s, 7*\s) -- (\d +4.0*\s,-8*\s);
    \draw[gray] (\d +4.5*\s, 7*\s) -- (\d +4.5*\s,-8*\s);

    \draw[nearly transparent, gray] (\d/2 + 0.0*\s, 7*\s) -- (\d/2 + 0.0*\s, -8*\s);
    \draw[nearly transparent, gray] (\d/2 + 0.5*\s, 7*\s) -- (\d/2 + 0.5*\s, -8*\s);
    \draw[nearly transparent, gray] (\d/2 + 1.0*\s, 7*\s) -- (\d/2 + 1.0*\s, -8*\s);
    \draw[nearly transparent, gray] (\d/2 + 1.5*\s, 7*\s) -- (\d/2 + 1.5*\s, -8*\s);
    \draw[nearly transparent, gray] (\d/2 + 3.0*\s, 7*\s) -- (\d/2 + 3.0*\s, -8*\s);

    \def\x{\d/2 + 0.5*\s}
    \def\y{0.5*\s}
    \def\slope{8}
    \draw[nearly transparent, gray]
    (\x - \y/\slope - 8*\s/\slope, \y - \y - 8*\s) --
    (\x - \y/\slope + 7*\s/\slope, \y - \y + 7*\s);
    \def\x{\d/2 + 1*\s}
    \def\y{0.5*\s}
    \def\slope{9}
    \draw[nearly transparent, gray]
    (\x - \y/\slope - 8*\s/\slope, \y - \y - 8*\s) --
    (\x - \y/\slope + 7*\s/\slope, \y - \y + 7*\s);
    \def\x{\d/2 + 1.5*\s}
    \def\y{0.5*\s}
    \def\slope{10}
    \draw[nearly transparent, gray]
    (\x - \y/\slope - 8*\s/\slope, \y - \y - 8*\s) --
    (\x - \y/\slope + 7*\s/\slope, \y - \y + 7*\s);
    \def\x{\d/2 + 2*\s}
    \def\y{0.5*\s}
    \def\slope{11}
    \draw[nearly transparent, gray]
    (\x - \y/\slope - 8*\s/\slope, \y - \y - 8*\s) --
    (\x - \y/\slope + 7*\s/\slope, \y - \y + 7*\s);
    \def\x{\d/2 + 3.5*\s}
    \def\y{0.5*\s}
    \def\slope{14}
    \draw[nearly transparent, gray]
    (\x - \y/\slope - 8*\s/\slope, \y - \y - 8*\s) --
    (\x - \y/\slope + 7*\s/\slope, \y - \y + 7*\s);

    \def\x{\d/2 + 1*\s}
    \def\y{1*\s}
    \def\slope{3.5}
    \draw[nearly transparent, gray]
    (\x - \y/\slope - 8*\s/\slope, \y - \y - 8*\s) --
    (\x - \y/\slope + 7*\s/\slope, \y - \y + 7*\s);
    \def\x{\d/2 + 1.5*\s}
    \def\y{1*\s}
    \def\slope{4}
    \draw[nearly transparent, gray]
    (\x - \y/\slope - 8*\s/\slope, \y - \y - 8*\s) --
    (\x - \y/\slope + 7*\s/\slope, \y - \y + 7*\s);
    \def\x{\d/2 + 2*\s}
    \def\y{1*\s}
    \def\slope{4.5}
    \draw[nearly transparent, gray]
    (\x - \y/\slope - 8*\s/\slope, \y - \y - 8*\s) --
    (\x - \y/\slope + 7*\s/\slope, \y - \y + 7*\s);
    \def\x{\d/2 + 2.5*\s}
    \def\y{1*\s}
    \def\slope{5}
    \draw[nearly transparent, gray]
    (\x - \y/\slope - 8*\s/\slope, \y - \y - 8*\s) --
    (\x - \y/\slope + 7*\s/\slope, \y - \y + 7*\s);
    \def\x{\d/2 + 4*\s}
    \def\y{1*\s}
    \def\slope{6.5}
    \draw[nearly transparent, gray]
    (\x - \y/\slope - 8*\s/\slope, \y - \y - 8*\s) --
    (\x - \y/\slope + 7*\s/\slope, \y - \y + 7*\s);

    \def\x{\d/2 + 1.5*\s}
    \def\y{1.5*\s}
    \def\slope{2}
    \draw[nearly transparent, gray]
    (\x - \y/\slope - 8*\s/\slope, \y - \y - 8*\s) --
    (\x - \y/\slope + 7*\s/\slope, \y - \y + 7*\s);
    \def\x{\d/2 + 2*\s}
    \def\y{1.5*\s}
    \def\slope{2.333}
    \draw[nearly transparent, gray]
    (\x - \y/\slope - 8*\s/\slope, \y - \y - 8*\s) --
    (\x - \y/\slope + 7*\s/\slope, \y - \y + 7*\s);
    \def\x{\d/2 + 2.5*\s}
    \def\y{1.5*\s}
    \def\slope{2.666}
    \draw[nearly transparent, gray]
    (\x - \y/\slope - 8*\s/\slope, \y - \y - 8*\s) --
    (\x - \y/\slope + 7*\s/\slope, \y - \y + 7*\s);
    \def\x{\d/2 + 3*\s}
    \def\y{1.5*\s}
    \def\slope{3}
    \draw[nearly transparent, gray]
    (\x - \y/\slope - 8*\s/\slope, \y - \y - 8*\s) --
    (\x - \y/\slope + 7*\s/\slope, \y - \y + 7*\s);
    \def\x{\d/2 + 3.5*\s}
    \def\y{1.5*\s}
    \def\slope{4}
    \draw[nearly transparent, gray]
    (\x - \y/\slope - 8*\s/\slope, \y - \y - 8*\s) --
    (\x - \y/\slope + 7*\s/\slope, \y - \y + 7*\s);

    \def\x{\d/2 + 3*\s}
    \def\y{3*\s}
    \def\slope{0.5}
    \draw[nearly transparent, gray]
    (\x - \d/2 - 4*\s, \y - \d*\slope/2 - 4*\s*\slope) --
    (\x - \y/\slope + 7*\s/\slope, \y - \y + 7*\s);
    \def\x{\d/2 + 3.5*\s}
    \def\y{3*\s}
    \def\slope{0.666}
    \draw[nearly transparent, gray]
    (\x - \d/2 - 4.5*\s, \y - \d*\slope/2 - 4.5*\s*\slope) --
    (\x - \y/\slope + 7*\s/\slope, \y - \y + 7*\s);
    \def\x{\d/2 + 4*\s}
    \def\y{3*\s}
    \def\slope{0.833}
    \draw[nearly transparent, gray]
    (\x - \d/2 - 5*\s, \y - \d*\slope/2 - 5*\s*\slope) --
    (\x - \y/\slope + 7*\s/\slope, \y - \y + 7*\s);
    \def\x{\d/2 + 4.5*\s}
    \def\y{3*\s}
    \def\slope{1}
    \draw[nearly transparent, gray]
    (\x - \d/2 - 5.5*\s, \y - \d*\slope/2 - 5.5*\s*\slope) --
    (\x - \y/\slope + 7*\s/\slope, \y - \y + 7*\s);
    \def\x{\d/2 + 6*\s}
    \def\y{3*\s}
    \def\slope{1.5}
    \draw[nearly transparent, gray]
    (\x - \y/\slope - 8*\s/\slope, \y - \y - 8*\s) --
    (\x - \y/\slope + 7*\s/\slope, \y - \y + 7*\s);

    \draw[thick, blue] (0 * \s, 0 * \s) circle (1pt);
    \draw[thick, blue] (1 * \s, 1 * \s) circle (1pt);
    \draw[thick, blue] (2 * \s, 2 * \s) circle (1pt);
    \draw[thick, blue] (3 * \s, 3 * \s) circle (1pt);
    \draw[thick, blue] (6 * \s, 6 * \s) circle (1pt);
    \node at (0 * \s, 0.6 * \s) {\footnotesize $a_0$};
    \node at (1 * \s, 1.6 * \s) {\footnotesize $a_1$};
    \node at (2 * \s, 2.6 * \s) {\footnotesize $a_2$};
    \node at (3 * \s, 3.6 * \s) {\footnotesize $a_3$};
    \node at (6 * \s, 6.6 * \s) {\footnotesize $a_6$};

    \draw[thick, blue] (\d + 0*\s, 0*\s) circle (1pt);
    \draw[thick, blue] (\d + 1*\s, 0*\s) circle (1pt);
    \draw[thick, blue] (\d + 2*\s, 0*\s) circle (1pt);
    \draw[thick, blue] (\d + 3*\s, 0*\s) circle (1pt);
    \draw[thick, blue] (\d + 6*\s, 0*\s) circle (1pt);
    \node at (\d + 0*\s, 0.6*\s) {\footnotesize $b_0$};
    \node at (\d + 1*\s, 0.6*\s) {\footnotesize $b_1$};
    \node at (\d + 2*\s, 0.6*\s) {\footnotesize $b_2$};
    \node at (\d + 3*\s, 0.6*\s) {\footnotesize $b_3$};
    \node at (\d + 6*\s, 0.6*\s) {\footnotesize $b_6$};

    \fill[Green] (\d + 3.25*\s, -3.75*\s) circle (1pt);
    \fill[Green] (\d + 3.25*\s, -4.75*\s) circle (1pt);
    \fill[Green] (\d + 3.25*\s, -5.75*\s) circle (1pt);
    \fill[Green] (\d + 3.75*\s, -5.25*\s) circle (1pt);
    \fill[Green] (\d + 3.75*\s, -6.25*\s) circle (1pt);
    \fill[Green] (\d + 4.25*\s, -4.75*\s) circle (1pt);
    \fill[Green] (\d + 4.25*\s, -5.75*\s) circle (1pt);

    \draw[thick, gray] (\d + 3*\s, -3*\s) --
        (\d + 4.5*\s, -4.5*\s) --
        (\d + 4.5*\s, -7.5*\s) --
        (\d + 3*\s, -6*\s) --
        (\d + 3*\s, -3*\s);
    \end{tikzpicture}
    \caption{Illustration of the corresponding 2-Euclidean election to the graph instance in \Cref{fig:thm:hard:euclidean:1},
    as constructed in the proof of \Cref{thm:hard:euclidean}.
    Blue empty dots represent candidates,
    full green dots represent voters,
    and each line is a divider between a pair of candidates.
    Thicker fragments of the lines
    mark the edges of the grid structure.
    The dividers outside of the grid structure do not play a role in the construction
    (they are marked with lighter gray in the figure).}
    \label{fig:thm:hard:euclidean:2}
\end{figure}

  Next, let us analyze the dividers between all pairs of the candidates.
  For each $i \in \{0,\dots,n\}$,
  the divider $d_{b_{2n},b_i}$ is a vertical line passing through the point $(16n + 2i, 0)$.
  Analogously, for each $i \in \{0,\dots,n\}$,
  the divider $d_{a_{2n},a_i}$
  is a line with $45^\circ$ negative slope,
  passing through the point $(4n + 2i, 4n + 2i)$.
  Thus, we obtain a grid structure in which $n$ rows
  are divided by lines
  $d_{a_{2n},a_0},\dots,d_{a_{2n},a_n}$
  and $n$ columns by lines
  $d_{b_{2n},b_0},\dots,d_{b_{2n},b_n}$.
  The grid is bounded by $d_{a_{2n},a_0}$ from the bottom, $d_{a_{2n},a_n}$ from the top, $d_{b_{2n},b_0}$ from the left, and $d_{b_{2n},b_n}$ from the right.
  Let us show that no other divider passes through that area.

  The divider between each pair of candidates among $b_0, b_1,\dots,b_n$
  is a vertical line to the left of $d_{b_{2n},b_0}$.
  Similarly,
  the divider between each pair of candidates among $a_0, a_1\dots,a_n$
  is a parallel line to $d_{a_{2n},a_0}$ going below it.
  Thus, none of these dividers pass through the grid area.
  It remains to check the dividers between $a_i$ and $b_j$
  for each $i,j \in \{0,1,\dots,n,2n\}$.
  Fix arbitrary such $i$ and $j$,
  and observe that $d_{a_i,b_j}$ is
  either a vertical or a positively sloped line
  that passes through the middle point between $a_i$ and $b_j$
  that is $(6n + 2i + 2j, 2i)$.
  Since $6n + 2i + 2j \le 6n + 4n + 4n = 14n < 16n$,
  divider $d_{a_i,b_j}$ crosses $d_{b_{2n},b_0}$,
  (i.e., the left bound of the grid), above point $(16n,2i)$
  (or not crosses it at all and is always to the left of $d_{b_{2n},b_0}$).
  On the other hand, $d_{a_{2n},a_n}$, so the top bound of the grid,
  crosses $d_{b_{2n},b_0}$ in the point $(16n, -4n)$.
  Thus, divider $d_{a_i,b_j}$ does not pass through the grid area
  for each $i,j \in \{0,1,\dots,n,2n\}$.

  Finally, we place one voter for each vertex of the graph
  in the $\textsc{GridGraphDominatingSet}$ instance
  in a cell of our grid
  corresponding to its cell in the $n \times n$ Cartesian grid.
  Specifically, for a given vertex $(i,j) \in V \subseteq [n]^2$,
  we place a single voter $v_{i,j}$
  in position $(16n + 2i -1, -8n + 4j - 2i - 1)$.
  This is to the right of $d_{b_{2n},b_{i-1}}$,
  and to the left of $d_{b_{2n},b_{i}}$
  that cover all points with the first coordinate being
  $16n + 2i - 2$ and $16n + 2i$, respectively.
  Also, this is above $d_{a_{2n},a_{j-1}}$
  and below $d_{a_{2n},a_j}$,
  which go through the points
  \begin{align*}
      &(16n + 2i -1, -8n + 4(j-1) - 2i + 1) \quad \mbox{and} \\
      &(16n + 2i -1, -8n + 4j - 2i + 1),
  \end{align*}
  respectively.
  Finally, we set $k=s$ and $q = |V| - k$.

  In the remainder of the proof,
  we show that there exists a dominating set of size $s$ in graph $G$, if and only if,
  there is a set $R$ of $k$ rankings embeddable in our instance
  such that $k\hbox{-}\kemeny_E(R) \le q$.

  If there exists a dominating set $S$,
  then we can select rankings given by voters that correspond to vertices in $S$, i.e., $R = \{v_{i,j} : (i,j) \in S\}$.
  Clearly, for every voter in $R$,
  the distance to the closest ranking in $R$ is $0$.
  Moreover, for every other voter, $v_{i,j} \not \in R$,
  there is voter $u$ that corresponds to vertex in $S$
  that is neighboring node $(i,j)$.
  But this means
  that there is only a single divider between $u$ and $v_{i,j}$,
  hence  $\swap(u,v_{i,j}) = 1$.
  Therefore, we obtain that
  $k\hbox{-}\kemeny_E(R) = |V| - |R| = |V| - k = q$.

  For the other direction,
  assume that there is a set of $k$ rankings $R$
  such that $k\hbox{-}\kemeny_E(R) \le q$.
  Observe that it has to mean
  that the rankings of at least $k$ voters are selected to $R$.
  Otherwise, we would have at least $|V| - k + 1$ voters
  with distance at least 1, which would yield
  $k\hbox{-}\kemeny_E(R) \ge |V| - k + 1 = q + 1$.
  On the other hand, every other voter, $v \not \in R$,
  has to be at distance 1 to some ranking in $R$.
  Otherwise, if we have at least a single voter
  with a distance of at least $2$ to every ranking in $R$,
  then again $k\hbox{-}\kemeny_E(R) \ge (|V| - k - 1) + 2 = q + 1$.
  Thus, for every $v \in V \setminus R$
  there must exist voter $u \in R$
  such that $v$ and $u$ belong to neighboring cells.
  But in such a case,
  $S = \{(i,j) : v_{i,j} \in R\}$ is a dominating set in the original instance,
  which concludes the proof.
}

\subsection{Algorithms for Condorcet Domains}

\citet{fal-kac-sor-szu-was:c:div-agr-pol-map} have shown that for
every $\varepsilon > 0$ there is an $\fpt$ approximation algorithm for
\kkemeny with $1+\varepsilon$ approximation ratio, parameterized by
the number $n$ of the voters and running in time $O^*(n^n)$.
Using dynamic programming, for Condorcet domains we improve
this to an exact $\fpt$ algorithm, running in time
$O^*(3^n)$. For the general domain such a result
seems impossible, as
$1$-Kemeny is $\np$-hard even for $n=4$
voters~\citep{dwo-kum-nao-siv:c:rank-aggregation,bie-bra-den:j:kemeny-hardness}.

\begin{theorem}\label{thm:fpt-n}
  There exists an $\fpt$ algorithm parameterized by the number $n$ of the
  votes that given an instance of $\kkemeny$---where the votes come
  from a Condorcet domain---solves it in time $O^*(3^n)$.
\end{theorem}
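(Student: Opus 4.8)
The plan is to reformulate \kkemeny as a minimum-cost set-partition problem over the voters and solve it by dynamic programming over subsets, exploiting the Condorcet property to evaluate the cost of each candidate cluster in polynomial time. For a set $W \subseteq V$ of voters, let $\kemscore(W) = \min_{r \in \calL(C)} \sum_{v \in W} \swap(v,r)$ denote the Kemeny score of the sub-election $(C,W)$. I would first establish the structural equivalence $\min_{R} \kkemscore_E(R) = \min \sum_i \kemscore(W_i)$, where the right-hand minimum ranges over all partitions of $V$ into at most $k$ nonempty parts $W_1, \ldots, W_j$. For one direction, given $R = \{r_1,\ldots,r_k\}$, assign each voter to a nearest ranking to obtain parts $W_1,\ldots,W_k$; then $\kkemscore_E(R) = \sum_i \sum_{v \in W_i} \swap(v, r_i) \geq \sum_i \kemscore(W_i)$, since $r_i$ need not be optimal for its own part. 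For the converse, given a partition, take $r_i$ to be a Kemeny ranking of $W_i$; then each voter's distance to the closest ranking is at most the distance to its own part's ranking, so $\kkemscore_E(\{r_1,\ldots,r_k\}) \leq \sum_i \kemscore(W_i)$. Using fewer than $k$ rankings is subsumed by padding $R$ arbitrarily.

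Second, I would precompute $\kemscore(W)$ for every $W \subseteq V$. This is exactly where the Condorcet assumption is used: any sub-collection $W$ still consists of votes from the Condorcet domain $\calD$, so by \Cref{pro:kemeny-condorcet} its Kemeny ranking is its Condorcet ranking, which we obtain in polynomial time by reading off, for each pair of candidates, the majority preference within $W$, and then summing the swap distances to the resulting ranking. This yields all $2^n$ values in $O^*(2^n)$ time.

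Third comes the dynamic program. For $j \in \{0,1,\ldots,k\}$ and $W \subseteq V$, let $D_j(W)$ be the minimum of $\sum_i \kemscore(W_i)$ over partitions of $W$ into at most $j$ nonempty parts, with $D_0(\emptyset)=0$, $D_0(W)=+\infty$ for $W \neq \emptyset$, and $D_j(\emptyset)=0$. The recurrence is
\[
  D_j(W) = \min_{\emptyset \neq W' \subseteq W} \big( \kemscore(W') + D_{j-1}(W \setminus W') \big),
\]
and by the equivalence above the instance is a yes-instance iff $D_{\min(k,n)}(V) \leq q$ (if $k \geq n$, each voter forms its own part, giving score $0$). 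Filling one layer $j$ requires, summed over all $W$, iterating over every $W' \subseteq W$, i.e., $\sum_{W \subseteq V} 2^{|W|} = 3^n$ steps by the binomial identity; with $\min(k,n) \leq n$ layers and polynomial overhead per step this gives the claimed $O^*(3^n)$ bound, which dominates the $O^*(2^n)$ precomputation.

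The main obstacle---and the only place the hypothesis really bites---is the per-subset cost evaluation: computing $\kemscore(W)$ is $\np$-hard in general (already for four votes), so the algorithm would collapse without the Condorcet structure that, via \Cref{pro:kemeny-condorcet}, makes each cluster's Kemeny ranking coincide with an efficiently computable Condorcet ranking. The remaining ingredients---the set-partition recurrence and its $3^n$ analysis---are standard; the only care needed is the bookkeeping for ``at most $k$ parts'' and checking that the peeling recurrence indeed ranges over all partitions into between $1$ and $j$ nonempty parts.
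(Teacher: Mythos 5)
Your proposal is correct and follows essentially the same route as the paper: use the Condorcet property (via \Cref{pro:kemeny-condorcet}) to evaluate the Kemeny score of every voter subset in polynomial time via its Condorcet ranking, then run a subset dynamic program that peels off one cluster per layer, with the $\sum_{W\subseteq V}2^{|W|}=3^n$ count giving the stated bound. The only cosmetic difference is that the paper enumerates functions $g\colon V\to\{0,1,2\}$ rather than iterating over pairs $W'\subseteq W$, which is the same $3^n$ enumeration; your write-up is if anything more explicit about the partition/ranking-set equivalence and the ``at most $k$ parts'' bookkeeping.
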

\appendixproof{Theorem}{thm:fpt-n}{
  Consider an instance of $\kkemeny$ with election $E = (C,V)$ and
  integers $k$ and $q$. Let $n = |V|$; w.l.o.g., we assume that
  $n > k$ (otherwise there is a trivial $k$-Kemeny set with score
  equal to $0$).  The election comes from a Condorcet domain so, as a
  consequence, for every subsequence $V'$ of its voters election
  $(C,V')$ has at least one Condorcet ranking $r(V')$ which can be
  computed in polynomial time and which also is a Kemeny ranking for
  $(C,V')$ (recall the discussion in \Cref{sec:prelim}). Hence the job
  of our algorithm is to check if it is possible to partition $V$ into
  $V_1, \ldots, V_k$ such that the sum of Kemeny scores of election
  $(C,V_1), \ldots, (C,V_k)$ is at most $q$.

  To this end, we use dynamic programming based on computing the
  following function. For a subsequence $V'$ of voters and a positive
  integer $t$, we define $f(V',t)$ to be $t$-Kemeny score of election
  $(C,V')$. Our goal is to check if $f(V,k) \leq q$. As a base case,
  we can compute $f(V',1)$ for every $V' \subseteq V$, by evaluating
  the Kemeny score of the Condorcet ranking for $(C,V')$. We compute
  $f$ for $t = 2, 3, \ldots k$ in $k-1$ iterations as follows (let us
  fix the iteration number $t$ that we execute):
  \begin{enumerate}
  \item Initially, we set $f(V',t) = \infty$ for each $V' \subseteq V$. We will be
    updating these values throughout the iteration.
  \item We consider all functions $g \colon V \rightarrow \{0,1,2\}$.
  \item For each function $g$, we partition $V$ into collections
    $V^{(0)}$, $V^{(1)}$, and $V^{(2)}$, so that each voter $v \in V$
    belongs to this $V^{(g(v))}$. If the current value we have stored
    for $f(V^{(1)} \cup V^{(2)},t)$ is greater than
    $f(V^{(1)},1) + f(V^{(2)},t-1)$ then we update
    $f(V^{(1)} \cup V^{(2)},t)$ to be this value (intuitively, this
    corresponds to a scenario where there is a $t$-Kemeny set such
    that voters from $V^{(1)}$ are closest to one ranking, namely the
    Condorcet ranking of $(C,V^{(1)})$, and each voter in $V^{(2)}$ is
    closest to one of $t-1$ other rankings from this $t$-Kemeny set).
  \end{enumerate}
  We accept if thus-computed value $f(V,k)$ is at most $q$ and reject
  otherwise.  The correctness of the algorithm is immediate. The
  running time follows from the fact that in each iteration we need to
  consider $3^n$ functions $g$ and for each of them the algorithm runs
  in time $O(|C|+|V|)$. Thus, altogether, the running time is
  $O(3^n \cdot t \cdot (|C|+|V|))$; note that this time is also
  sufficient to compute Condorcet rankings for all subsets of voters.
}

For single-crossing elections we even get  a polynomial-time
algorithm.  The idea is to model $\kkemeny$ as a Chamberlin--Courant
(CC) multiwinner election~\citep{cha-cou:j:cc}, where each vote from
the original election is both a candidate and a voter in the CC one,
ranking itself and the others with respect to the swap distance from
itself. This yields single-peaked election, for which CC is
tractable~\citep{bet-sli-uhl:j:mon-cc,sor-vas-xu:c:cc-sp}.

\begin{theorem}\label{thm:sc-algorithm}
    Single-crossing instances of \kkemeny can be solved in time $\tilde{O}(nm+n^2)$.
\end{theorem}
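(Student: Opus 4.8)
The plan is to reduce single-crossing \kkemeny to a Chamberlin--Courant (CC) election that turns out to be single-peaked, and then to exploit that structure directly. Fix a single-crossing election $E=(C,V)$ with $n$ voters and $m$ candidates, and list its voters in a single-crossing order $v_1,\ldots,v_n$ (finding this order, when not given, fits within the claimed time bound, e.g.\ by sorting the voters by swap distance from an extreme vote). The first step I would carry out is to record the additivity of swap distance along such an order: for all indices $a\le b\le c$ we have $\swap(v_a,v_c)=\swap(v_a,v_b)+\swap(v_b,v_c)$. This holds because each candidate pair flips at most once along the order, so a pair is inverted between $v_a$ and $v_c$ exactly when its flip point lies in precisely one of the two subintervals. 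Equivalently, writing $d_i=\swap(v_i,v_{i+1})$ and $P_j=\sum_{i<j}d_i$, this embeds the votes onto the real line so that $\swap(v_a,v_b)=|P_a-P_b|$.

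Second, I would argue that we may restrict the $k$ output rankings to input votes. Any candidate solution $R$ induces a clustering of $V$ (each voter assigned to its nearest ranking in $R$), and the best ranking for a cluster $V_i$ is the Kemeny ranking of $(C,V_i)$. Since every subsequence of a single-crossing profile is again single-crossing, hence a Condorcet domain, this Kemeny ranking equals the Condorcet ranking, which in turn equals the vote of the (index-)median voter of $V_i$ --- an input vote, because the majority direction on each pair is decided by the middle-indexed voter. Thus replacing each $r_i$ by the median vote of its cluster never increases the score, so solving \kkemeny on $E$ is equivalent to selecting a size-$k$ subset $W\subseteq\{v_1,\ldots,v_n\}$ minimizing $\sum_{i}\min_{w\in W}\swap(v_i,w)$ --- exactly the CC election in which every vote is simultaneously a candidate and a voter, ranking the others by swap distance.

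Third, I would observe that this CC election is single-peaked on the axis $v_1\lhd\cdots\lhd v_n$: by additivity, voter $v_i$'s cost $\swap(v_i,v_j)=|P_i-P_j|$ is a valley-shaped function of $j$ minimized at $j=i$, so $v_i$'s preference decreases monotonically as one moves away from $v_i$ along the axis (identical consecutive votes create ties, broken by index). This is precisely the setting in which CC is polynomial-time solvable~\citep{bet-sli-uhl:j:mon-cc,sor-vas-xu:c:cc-sp}; concretely, our instance is a one-dimensional $k$-median problem on the points $P_1,\ldots,P_n$. I would solve it by dynamic programming: using prefix sums of $P$, precompute any cluster cost $c(i,j)=\sum_{l=i}^{j}|P_l-P_{\mu}|$ (with $\mu$ the median index of $[i,j]$) in $O(1)$ per query, and then evaluate $F_t(j)=\min_{i\le j}\bigl(F_{t-1}(i-1)+c(i,j)\bigr)$. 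Since $c$ satisfies the quadrangle (Monge) inequality, each layer can be computed in $O(n)$ time by SMAWK, giving $O(kn)=O(n^2)$ overall.

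For the running time, computing the single-crossing order and the $n-1$ consecutive distances $d_i$ (each an inversion count between two length-$m$ rankings) costs $\tilde O(nm)$, the prefix sums and cost queries are set up in $O(n)$, and the DP runs in $O(n^2)$, for $\tilde O(nm+n^2)$ in total. I expect the main obstacle to be the correctness of the reduction rather than the algorithmics: namely, proving the additivity property cleanly and justifying that optimal cluster representatives are input votes via the median-voter characterization of Condorcet rankings in single-crossing subprofiles. Once single-peakedness is established, the CC machinery and the one-dimensional $k$-median dynamic program are routine.
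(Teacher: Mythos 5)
Your proposal is correct and follows essentially the same route as the paper: restrict the $k$ output rankings to input votes via the Condorcet/median-voter property of single-crossing subprofiles, reduce to a Chamberlin--Courant instance that is single-peaked on the single-crossing axis (using additivity of swap distance along that order, which also yields all pairwise distances in $\tilde O(nm+n^2)$ time), and solve it by dynamic programming. The only difference is cosmetic: you make the one-dimensional embedding explicit and solve the resulting 1D $k$-median directly with SMAWK, whereas the paper cites the known $\tilde O(n^2)$ algorithms for single-peaked Chamberlin--Courant.
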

\appendixproof{Theorem}{thm:sc-algorithm}{ 
We provide an algorithm which finds a $k$-Kemeny set, i.e., the set that minimizes the $k$-Kemeny score.
This is enough to solve the decision variant of \kkemeny.
Let $E = (C,V)$ be a single-crossing instance of \kkemeny and let $(v_1, \dots, v_n)$ be an ordering of voters realizing the single-crossing property.
The ordering can be found in polynomial time~\citep[Theorem 4.8]{elk-lac-pet:t:restricted-domains-survey}.

First, we observe that we can look for the $k$-Kemeny set among the votes.

\begin{lemma}\label{lemma:kkemeny-sc-among-votes}
  There exists an optimal solution to \kkemeny on a single-crossing election that is a subset of $V$.
\end{lemma}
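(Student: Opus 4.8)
The plan is to start from an arbitrary optimal solution and \emph{locally} replace each of its $k$ rankings by one of the votes, without increasing the $k$-Kemeny score. Fix an optimal $k$-Kemeny set $R = \{r_1, \dots, r_k\}$ and use it to partition the voters: assign each voter $v \in V$ to a cluster $V_i$ consisting of those voters for which $r_i$ is a closest ranking (breaking ties arbitrarily). Then $\kkemscore_E(R) = \sum_{i \in [k]} \sum_{v \in V_i} \swap(v, r_i)$, so each $r_i$ pays exactly the Kemeny score of the sub-election $(C, V_i)$ evaluated at the fixed ranking $r_i$; consequently, replacing $r_i$ by any ranking that is Kemeny-optimal for $(C, V_i)$ can only decrease $\sum_{v \in V_i} \swap(v, \cdot)$.

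Next I would argue that each sub-election $(C, V_i)$ admits a Kemeny-optimal ranking that is itself one of its own votes. Since $V_i$ is a subsequence of the single-crossing ordering $(v_1, \dots, v_n)$, the election $(C, V_i)$ is again single-crossing (the crossing thresholds simply restrict to the retained voters), and single-crossing elections are Condorcet; hence $(C, V_i)$ has a Condorcet ranking, which, as noted in \Cref{sec:prelim}, is guaranteed to be Kemeny-optimal. The crucial structural fact is the median-voter property of single-crossing elections: ordering the voters of $V_i$ along the single-crossing order, the ranking of a median voter agrees with the pairwise majority on every pair of candidates, and is therefore a Condorcet ranking of $(C, V_i)$. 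As this ranking is the preference order of an actual voter, it belongs to $V_i \subseteq V$.

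Combining the two steps, let $r_i^\star \in V_i$ be such a median vote and set $R^\star = \{r_1^\star, \dots, r_k^\star\} \subseteq V$. Then
\[
  \kkemscore_E(R^\star) = \sum_{v \in V} \min_{i \in [k]} \swap(v, r_i^\star) \leq \sum_{i \in [k]} \sum_{v \in V_i} \swap(v, r_i^\star) \leq \sum_{i \in [k]} \sum_{v \in V_i} \swap(v, r_i) = \kkemscore_E(R),
\]
where the first inequality assigns each voter to the ranking of its original cluster and the second uses Kemeny-optimality of $r_i^\star$ for $(C, V_i)$. As $R$ was optimal, $R^\star$ is an optimal solution contained in $V$, as claimed.

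The main obstacle is the middle step: establishing (or citing) the median-voter property in a form robust to ties---when a cluster has even size, the majority may tie on some pairs, so one must verify that a median vote is still a (weak) Condorcet ranking and hence still Kemeny-optimal for its cluster. Everything else is routine: checking that subsequences preserve the single-crossing property, and that the reassignment argument can only lower the score.
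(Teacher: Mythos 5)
Your proposal is correct and follows essentially the same route as the paper: partition the voters by their closest ranking in an optimal set, note that each cluster is again single-crossing and hence contains a Condorcet (median) vote that is Kemeny-optimal for that cluster, and swap each center for such a vote without increasing the score. The tie concern you raise is already handled by the paper's definition of a Condorcet ranking, which only requires weak majorities, so the median vote qualifies even when some pairs are tied.
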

\begin{proof}
    Let $R$ be any fixed optimal solution.
    By the definition of $k\hbox{-}\kemeny_E(R)$, every voter from $V$ is assigned to a ranking in $R$ (specifically, to the one that minimizes the swap distance to $R$).
    Therefore, we can define a $k$-partition of $V$ into $V_1, \dots, V_k$,
    where voters from $V_i$ are assigned to the same $r_i \in R$.
    
    We observe that, for every $i \in [k]$, $r_i$ is an optimal solution to an instance $(C,V_i)$ of \onekemeny.
    Otherwise we could decrease the $k$-Kemeny score of $R$ by replacing $r_i$ with an optimal solution to \onekemeny on $(C,V_i)$---this would be in contradiction with optimality of $R$.
    
    Moreover, we observe that $(C,V_i)$ is a single-crossing instance (with respective suborder of voters) and since $(C,V_i)$ has rankings from a Condorcet domain it contains at least one Condorcet ranking which we call $o_i \in V_i$.
    \Cref{cor:kemeny-condorcet} implies that $o_i$ is an optimal solution to \onekemeny on $(C,V_i)$.
    It means that $O = \{o_1, \dots, o_k\}$ is a solution to $E$ and has the $k$-Kemeny score equal to
    \begin{align*}
      \textstyle\sum_{i \in [k]}\sum_{v \in V_i} \min_{o \in O}\swap(v,o)
      \leq \textstyle\sum_{i \in [k]} \sum_{v \in V_i} \swap(v,o_i)
      = \sum_{i \in [k]} \sum_{v \in V_i} \swap(v,r_i),
    \end{align*}
    where the inequality follows from the fact that on the LHS a voter $v \in V_i$ is assigned to nearest $o \in O$, but on the RHS it is assigned to concrete $o_i$.
    The last one term equals $k\hbox{-}\kemeny_E(R)$
    (by optimality of $R$ the inequality becomes an equality).
    Therefore, $O$ is a requested solution by the lemma statement.
\end{proof}

Using \Cref{lemma:kkemeny-sc-among-votes} we will solve the problem by considering a problem of choosing size-$k$ subset $W$ of $V$,
where the objective function is minimizing the sum of $\swap$ distances between elements of $V$ and $W$.
This can be done by solving a well known Chamberlin-Courant (CC) multiwinner election problem~\citep{cha-cou:j:cc,bet-sli-uhl:j:mon-cc,sor-vas-xu:c:cc-sp}.
Formally, we create an instance $E' = (V',C',\rho,k)$ of CC,
in which we want to find a size-$k$ subset $W \subseteq C'$ such that the \emph{total misrepresentation} of $W$ defined as
$\rho(W) = \sum_{v \in V'} \rho(v, W) = \sum_{v \in V'} \min_{c \in W} \rho(v, c)$
is minimized, and in our case we have:
\begin{itemize}
    \item $V' = V$, i.e., $E'$ has the same set of voters as $E$.
    \item $C' = V$, i.e., we select $k$ winners from the set of candidates $C'$ corresponding the set of voters in $E$.
    \item A misrepresentation function $\rho: V' \times C' \to \reals_{\geq 0}$ is defined for every $v_i \in V', c_j \in C'$ as:
    $$
      \rho(v_i, c_j) = \swap(v_i, c_j).
    $$
\end{itemize}
Therefore, we can define the new instance of CC as $E' = (V,V,\swap,k)$ but we will be using the terms $V',C',\rho$ for clearer relation to CC.
Formally, we have a correspondence between solutions in both instances (hence, also between optimal solutions of size $k$).
\begin{lemma}\label{lemma:kkemeny-sc-equi-cc}
    For every subset of rankings $R \subseteq V$ 
    we have
    $$k\hbox{-}\kemeny_E(R) = \rho(R).$$
\end{lemma}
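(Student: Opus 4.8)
The plan is to prove this equality by directly unfolding both definitions and observing that, by the very construction of the CC instance $E'$, they agree term by term. There is no genuine combinatorial content here: the lemma is the definitional bridge that lets us transport a $\kkemeny$ instance to a Chamberlin--Courant instance, and the work of the theorem proper is deferred to the later steps (showing $E'$ is single-peaked and invoking the polynomial-time CC algorithm).

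First I would fix an arbitrary $R \subseteq V$ and note that both sides of the claimed identity are well-defined for this single object $R$. Indeed, since $V \subseteq \calL(C)$, the set $R$ is a legitimate collection of rankings over $C$, so $k\hbox{-}\kemeny_E(R)$ is defined; and since $C' = V$, the same $R$ is a valid committee in $E'$, so $\rho(R)$ is defined. Next I would expand the CC total misrepresentation, using that $V' = V$ and that $\rho(v_i,c_j) = \swap(v_i,c_j)$ by construction:
\[
  \rho(R) = \sum_{v \in V'} \min_{c \in R} \rho(v,c)
          = \sum_{v \in V} \min_{r \in R} \swap(v,r)
          = k\hbox{-}\kemeny_E(R).
\]
The last equality is exactly the definition of the $k$-Kemeny score, with the index set $[k]$ replaced by the set $R$ itself; this substitution is harmless since the score depends only on the \emph{set} of rankings used, and every voter is charged the swap distance to its nearest member of $R$ in both expressions.

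I do not anticipate any real obstacle: once the two instances are aligned (same ground set $V' = C' = V$, same distance $\rho = \swap$), the per-voter contributions to the two objectives coincide and the sums are literally identical. The only point I would state explicitly is that the minimization in the $k$-Kemeny score (over the rankings in $R$) and the minimization in the CC misrepresentation (over the committee $R$) range over the same set and use the same cost function, so no reindexing or case analysis is needed.
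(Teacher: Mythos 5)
Your proof is correct and is essentially identical to the paper's: both simply unfold the definition of $\rho(R)$ and of the $k$-Kemeny score, use $V'=C'=V$ and $\rho=\swap$, and observe the two sums coincide term by term. No further comment is needed.
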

\begin{proof}
    Using definitions of the swap distance and $\rho$ we have
    \begin{align*}
        k\hbox{-}\kemeny_E(R)
        &= \textstyle\sum_{v \in V} \min_{r \in R}\swap(v,r)\\
        &= \textstyle\sum_{v \in V} \min_{r \in R}\rho(v,r)  = \rho(R).\qedhere
    \end{align*}
\end{proof}

Moreover, we observe that $E'$ is actually a single-peaked instance of CC.
For that we will use the following definition of single-peakedness~\citep[Proposition~3]{bet-sli-uhl:j:mon-cc}.
\begin{definition}\label{def:sp}
    $(V^*,C^*,\rho^*)$ is called a single-peaked preference profile with an axis $c_1^* \lhd c_2^* \lhd \cdots \lhd c_m^*$ (a linear order $\lhd$ over $C^*$)
    if for every triple of distinct candidates $c_i^*,c_j^*,c_k^* \in C^*$ with
    $c_i^* \lhd c_j^* \lhd c_k^*$ or with $c_k^* \lhd c_j^* \lhd c_i^*$ we have the following implication for every voter $v^* \in V^*$:
    $$
      \rho^*(v^*,c_i^*) < \rho^*(v^*,c_j^*)
      \implies \rho^*(v^*,c_j^*) \leq \rho^*(v^*,c_k^*).
    $$
\end{definition}

\begin{lemma}\label{lemma:cc-is-sp}
    $(V',C',\rho)$ is a single-peaked preference profile with an axis $v_1 \lhd v_2 \lhd \cdots \lhd v_n$.
\end{lemma}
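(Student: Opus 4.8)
The plan is to fix an arbitrary voter $v_\ell \in V'$ and analyze the cost function $f(j) = \rho(v_\ell, v_j) = \swap(v_\ell, v_j)$ along the axis $v_1 \lhd \cdots \lhd v_n$, showing that it is ``V-shaped'', i.e., non-increasing up to index $\ell$ and non-decreasing afterwards. This is exactly the quasi-convexity that \Cref{def:sp} asks for, and the single-peakedness condition will then follow by a short case analysis. Note that since $V' = V$ and $C' = V$, the voter $v_\ell$ and each candidate $v_j$ are both votes of the original single-crossing election.

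First I would decompose the swap distance over candidate pairs. Writing $\binom{C}{2}$ for the set of unordered candidate pairs, recall that $\swap(v_\ell, v_j) = \sum_{\{a,b\} \in \binom{C}{2}} D_{ab}(j)$, where $D_{ab}(j) = 1$ if $v_\ell$ and $v_j$ order $a$ and $b$ oppositely, and $D_{ab}(j) = 0$ otherwise. The key step is to understand, for a single pair $\{a,b\}$, how $D_{ab}(j)$ behaves as $j$ ranges over $1, \ldots, n$. By the single-crossing property there is a threshold $t_{ab}$ such that $v_1, \ldots, v_{t_{ab}}$ rank $a$ and $b$ one way and $v_{t_{ab}+1}, \ldots, v_n$ the opposite way. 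Since $v_\ell$ agrees with exactly one of these two groups, $D_{ab}(\cdot)$ is the indicator of the complementary group: it equals $0$ on the contiguous block of indices containing $\ell$ and $1$ on the other contiguous block. Consequently $D_{ab}$ is non-increasing on $\{1, \ldots, \ell\}$ and non-decreasing on $\{\ell, \ldots, n\}$, regardless of whether $\ell \le t_{ab}$ or $\ell > t_{ab}$.

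Summing these per-pair inequalities, $f(j)=\sum_{\{a,b\}} D_{ab}(j)$ inherits the same shape: it is non-increasing on $\{1, \ldots, \ell\}$ and non-decreasing on $\{\ell, \ldots, n\}$, with global minimum $f(\ell) = 0$. It then remains to verify the implication of \Cref{def:sp} for the voter $v_\ell$. Consider a triple with $v_b$ lying between $v_a$ and $v_c$ on the axis, and suppose $f(a) < f(b)$. If $a < b < c$, then $f(a) < f(b)$ together with $f$ being non-increasing on $\{1, \ldots, \ell\}$ forces $b > \ell$ (otherwise $a<b\le\ell$ would give $f(a)\ge f(b)$); hence $\ell < b < c$ both lie in the non-decreasing part, giving $f(b) \le f(c)$. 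The symmetric case $c < b < a$ is handled analogously: $f(a)<f(b)$ now forces $b < \ell$ via non-decreasingness on $\{\ell,\dots,n\}$, and then $c < b < \ell$ with $f$ non-increasing yields $f(c) \ge f(b)$. This establishes single-peakedness.

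I expect the only real content to be the per-pair monotonicity argument in the second paragraph, namely translating the single-crossing ``single flip'' of each pair $\{a,b\}$ into the statement that $D_{ab}$ dips to $0$ exactly on the block containing $v_\ell$; everything afterwards is bookkeeping. The main point to get right is the case split $\ell \le t_{ab}$ versus $\ell > t_{ab}$, checking that the monotonicity of $D_{ab}$ holds on both sides of $\ell$ in each case.
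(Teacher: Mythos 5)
Your proof is correct and follows essentially the same route as the paper's: both arguments reduce to the fact that, by the single-crossing thresholds $t_{ab}$, the swap distance from a fixed vote $v_\ell$ is quasi-convex (non-increasing then non-decreasing) along the axis $v_1 \lhd \cdots \lhd v_n$, and then verify the implication of \Cref{def:sp} by a case split on where the middle candidate of the triple sits relative to $\ell$. Your per-pair indicator decomposition $D_{ab}$ is just a restatement of the paper's identity $\swap(v_x,v_k) = |\{(a,b) : x \le t_{ab} < k\}|$, so the two proofs coincide in substance.
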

\begin{proof}
    We need to check the condition (the implication) from \Cref{def:sp} for every voter $v_x \in V = V'$
    and for every triple of distinct candidates $v_i, v_j, v_k \in V = C'$ with $v_i \lhd v_j \lhd v_k$ or with $v_k \lhd v_j \lhd v_i$.
    Below we consider only the case $v_i \lhd v_j \lhd v_k$ as the proof for the other case is analogous.
    
    If $\rho(v_x,v_i) < \rho(v_x,v_j)$ then by the definition of $\rho$ we have
    $\swap(v_x,v_i) < \swap(v_x,v_j)$.
    We consider two cases depending on how $v_x$ is positioned in an order $v_1 \lhd v_2 \lhd \cdots \lhd v_n$ comparing to $v_j$.
    
    The first case is $v_x \lhd v_j \lhd v_k$.
    By the definition of single-crossingness on $(v_1,\dots v_n)$ we have
        \begin{align*}
          \swap(v_x,v_k)
          &= |\{(a,b) \in C \times C : x \leq t_{ab} < k \}|\\
          &\geq |\{(a,b) \in C \times C : x \leq t_{ab} < j \}|\\
          &= \swap(v_x,v_j).  
        \end{align*}
    By the equivalence of $\swap$ and $\rho$ we obtain $\rho(v_x,v_j) \leq \rho(v_x,v_k)$ as requested.
        
    The above might be derived from observations made by~\citet[Proposition~4.7]{elk-lac-pet:c:restricted-domains} regarding the relation of single-craziness of $v_1 \lhd v_2 \lhd \cdots \lhd v_n$ and $\swap$ distance between $v_1$ and $v_i$ and between $v_i$ and $v_{i+1}$.

    The second case is $v_i \lhd v_j \lhd v_x$.
    By the definition of single-crossingness on $(v_1,\dots v_n)$ we have
        \begin{align*}
          \swap(v_i,v_x)
          &= |\{(a,b) \in C \times C : i \leq t_{ab} < x \}|\\
          &\geq |\{(a,b) \in C \times C : j \leq t_{ab} < x \}|\\
          &= \swap(v_j,v_x).  
        \end{align*}
    By the equivalence of $\swap$ and $\rho$ we obtain $\rho(v_x,v_i) \geq \rho(v_x,v_j)$,
    but this is in contradiction with the initial assumption $\rho(v_x,v_i) < \rho(v_x,v_j)$.

    This finishes the proof of the lemma.
\end{proof}

Finally, \Cref{lemma:kkemeny-sc-equi-cc} together with \Cref{lemma:cc-is-sp} imply that in order to provide a solution to \kkemeny on single-crossing elections it is enough to select an optimal committee under the Chamberlin-Courant rule on single-peaked elections.
This can be done in polynomial-time using known algorithms, e.g.,
a classic dynamic program of~\citet[Theorem 8]{bet-sli-uhl:j:mon-cc} with the running time $O(|V'| \cdot |C'|^2)=O(n^3)$ or
an almost linear (in the input size, i.e. , $\tilde{O}(|V'| \cdot |C'|) = \tilde{O}(n^2)$) time algorithm of~\citet[Theorem 6]{sor-vas-xu:c:cc-sp}.\footnote{
The algorithm has running time $\tilde{O}(|V'| \cdot |C'|) = \tilde{O}(n^2)$ ($\tilde{O}$ hides subpolynomial factors) assuming that the single-peaked order of voters is given in the input.
In the instance we have constructed each preference from $(V',C',\rho)$---after removing duplicates from $V'$ (i.e. having the same vote in $E$)---is a linear order over $C'$, hence we can recognize an axis in $O(|V'| \cdot |C'|) = O(n^2)$ time~\citep{esc-lan-ozt:c:single-peaked-consistency} so this does not increase the claimed running time.
}

Unfortunately, both algorithms require an oracle access to values of $\rho$ therefore we need to define its $O(n^2)$ many values.
It can be done using a straightforward computation of the $\swap$ distance in time $O(m\sqrt{\log m})$~\citep{cha-pat:c:fast-swap-distance} for every pair of voters from $V$.
In total, this would take $O(n^2 \cdot m\sqrt{\log m})$ time for computing $\rho$ and such an approach would be a bottle-neck of the whole algorithm.

A more refined approach is as follows.
First, we find an SC ordering of voters $(v_1,\dots,v_n)$ in time $O(nm\log m)$~\citep[Theorem 4.8]{elk-lac-pet:t:restricted-domains-survey}.
Then, we compute $n-1$ swap distances between each consecutive voters $v_i$, $v_{i+1}$ in time $O(nm\sqrt{\log m})$~\citep{cha-pat:c:fast-swap-distance}.
These values allow us to compute all values of $\swap(v_1, v_i)$ in additional time $O(n)$ using a formula $\swap(v_1, v_{i+1}) = \swap(v_1, v_i) + \swap(v_i, v_{i+1})$~\citep[Proposition~4.7]{elk-lac-pet:c:restricted-domains}.
Next, we can compute all $O(n^2)$ values of $\rho = \swap$ using a formula $\swap(v_{i+1},v_j) = \swap(v_i,v_j) - \swap(v_i, v_{i+1})$, each in constant time.
Therefore, computing all values of $\rho$ can be done in $O(nm\log m + n^2)$ time.

The total running time of our algorithm via finding an SC ordering, computing all values of $\rho$ and applying an algorithm of \citet{sor-vas-xu:c:cc-sp} is $\tilde{O}(nm + n^2)$.
}

\section{Diversity of Structured Domains}\label{sec:diversity-experiments}
Next we move on to the diversity analysis of our domains and the
elections that one can sample from them. 
Given an election $E$ and an integer $k$, we write $\kappa_E(k)$ to
denote its $k$-Kemeny score, i.e., the $k$-Kemeny score of its
$k$-Kemeny set.  In this section we often view domains as elections that include a
single copy of each possible vote.

\citet{fal-kac-sor-szu-was:c:div-agr-pol-map,fal-mer-nun-szu-was:c:map-dap-top-truncated}
proposed to measure the diversity of an election
$E = (C,V)$ using function $D$ defined as follows
($w_1 \geq w_2 \geq \cdots$ are weights and $N(E)$ is a
normalizing factor, depending on $|C|$ and $|V|$):
\[
  D(E) = N(E)\cdot \big( w_1 \kappa_E(1) + w_2 \kappa_E(2) + w_3 \kappa_E(3) + \cdots \big);
\]
the larger is the value $D(E)$, the more diverse is $E$.
However, the choice of $w_1, w_2, \ldots$ and $N(E)$ is not obvious
and the two above-cited papers make different ones. We follow their
general approach, but focusing on qualitative comparisons between
diversities of various elections/domains, by analyzing vectors
\[
  \kappa(E) = \left(\frac{1}{|V|}\kappa_E(1), \ldots,
  \frac{1}{|V|}\kappa_E(|C|)\right).
\]
We refer to values $\frac{1}{|V|}\kappa_E(k)$ as normalized $k$-Kemeny
scores.  We view an election $E' = (C,V')$ as more diverse than
election $E'' = (C,V'')$ if $\kappa(E')$ dominates
$\kappa(E'')$---i.e., has greater-or-equal values on each
coordinate---occasionally arguing how to resolve the situation when
dominance does not hold either way.
\citet{fal-kac-sor-szu-was:c:div-agr-pol-map} also suggested that
$\kappa_E(1)-\kappa_E(2)$, normalized appropriately, is a good measure
of $E$'s polarization.

\subsection{Setup, Domains, and Statistical Cultures}\label{sec:setup-experiments}
Throughout this section, we almost exclusively focus on the case of
$m=8$ candidates. This suffices for many realistic settings
and is small enough for efficient computations and clear visualizations.
Unless specified otherwise, all elections sampled from statistical
cultures consist of $512$ voters.  All experiments calculating the
$k$-Kemeny scores or domain sizes are averaged across $100$ samples.

Let us fix $C = \{c_1, \ldots, c_8\}$. We focus on the following nine
domains, explained below (some of the acronyms were already introduced
in \Cref{sec:prelim}):
\begin{quote}
  \text{1D-Int.}, \text{2D-Sq.}, \text{3D-Cb.}, \text{SC}, 
                     \text{SP}, \text{SP/DF}, \text{SPOC}, \text{GS/bal}, \text{GS/cat}.
\end{quote}
\paragraph{Euclidean Domains.}
For positive integer $t$, $t$D-Hyper\-Cube domain is a Euclidean
domain where candidate points are selected uniformly at random from
$[-1,1]^t$ (so, in fact, this is a family of domains, one for each embedding function).
Note that these domains include all preference orders that can be
obtained by putting a voter point somewhere in $\reals^t$, even if
this location is outside of $[-1,1]^t$. For $t \in \{1,2,3\}$, we refer
to these domains as 1D-Interval, 2D-Square, and 3D-Cube, respectively
(abbreviated as 1D-Int., 2D-Sq., and 3D-Cb.).

\paragraph{Single Crossing Domains.} To obtain a single-crossing
domain (SC), we follow the approach of
\citet{szu-boe-bre-fal-nie-sko-sli-tal:j:map}: We form a sequence
$v_1, \ldots, v_{\binom{m}{2}}$ of votes, where
$v_1 \colon c_1 \pref \cdots \pref c_8$ and for each $i > 1$ we obtain
$v_i$ from $v_{i-1}$ by swapping a pair of candidates $c_p, c_q$,
$p < q$, that are ranked consecutively (we select such a pair
uniformly at random).

\paragraph{SP/DF and the Remaining Domains.} The remaining five domains are unique
up to renaming the candidates and, with the exception of SP/DF, were
described in \Cref{sec:prelim}.  The SP/DF domain (or, the
SP/double-forked domain) is a domain of votes that are single-peaked
on the following tree:
\begin{center}
  \begin{tikzpicture}
          \node at (-0.6, 0.35) {$c_1$};
          \node at (-0.6, -0.35) {$c_2$};
          \node at (0, 0) {$c_3$};
          \node at (0.8, 0) {$c_4$};
          \node at (1.6, 0) {$c_5$};
          \node at (2.4, 0) {$c_6$};
          \node at (3, 0.35) {$c_7$};
          \node at (3, -0.35) {$c_8$};
          \draw (-0.4,0.3) -- (-0.2,0.1);
          \draw (-0.4,-0.3) -- (-0.2,-0.1);
          \draw (0.2,0) -- (0.6,0);
          \draw (1.0,0) -- (1.4,0);
          \draw (1.8,0) -- (2.2,0);
          \draw (2.6,0.1) -- (2.8,0.3);
          \draw (2.6,-0.1) -- (2.8,-0.3);
  \end{tikzpicture}
\end{center}

\paragraph{Sizes of the Domains.}
In \Cref{fig:domain_sizes} we plot the sizes of our domains---i.e.,
the numbers of distinct votes included in each---depending on the
number of candidates (note that the $y$ axis has a logarithmic scale).
More details can be found in \Cref{app:domain-sizes}.
Interestingly, even though the sizes of SP (which is equal to the size of GS), SP/DF, and SPOC grow
exponentially and the sizes of Euclidean domains grow polynomially,
for $m=8$, 2D-Square is larger than SP (GS) and almost as large as SPOC
and SP/DF. 3D-Cube is outright larger and remains so until $m=16$.

\begin{figure}[t]
    \centering
    \begin{minipage}[t]{0.45\columnwidth}
        \centering
        \includegraphics[width=0.99\linewidth]{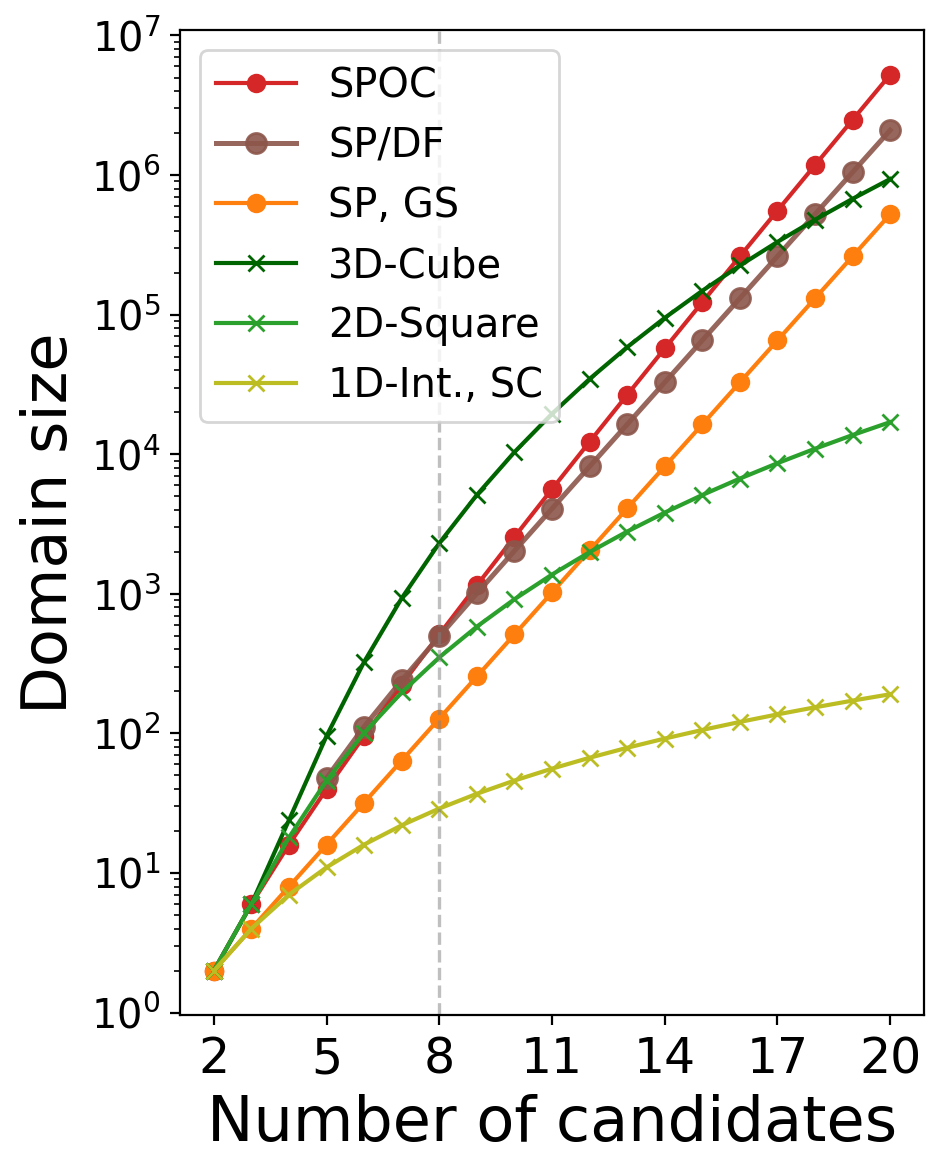}
        \caption{\label{fig:domain_sizes} Domain size as a function of the number of candidates.}
    \end{minipage}
    \hspace{0.2cm}
    \begin{minipage}[t]{0.442\columnwidth}
        \centering
        \includegraphics[width=0.99\linewidth]{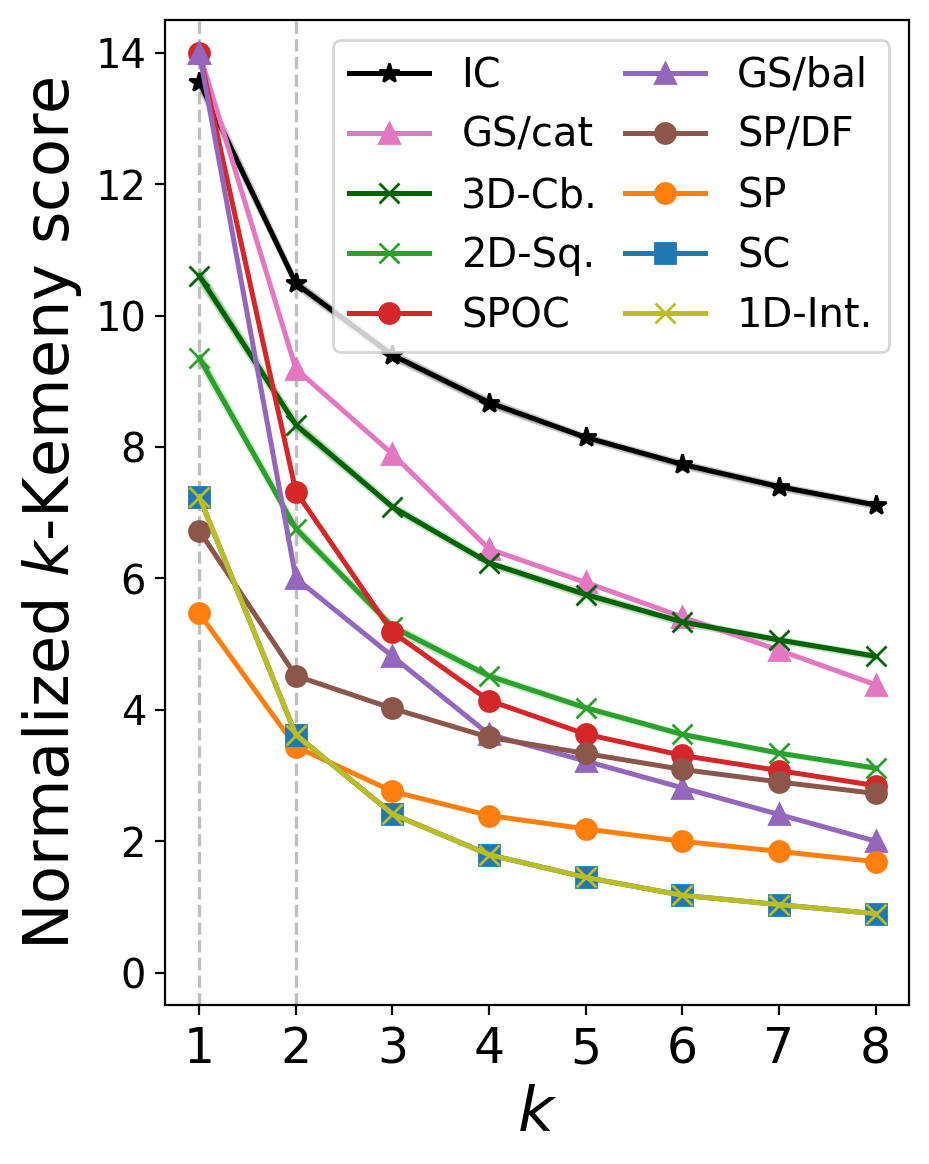}
        \caption{\label{fig:diversity} Average $\kappa(E)$ as a function of $k$.}
    \end{minipage}%
\end{figure}

\paragraph{Statistical Cultures.} Whenever we generate an
election with votes from a given domain, we do so by sampling the
required number of votes using a respective statistical culture. In
particular, for each of our domains we consider its variant of
impartial culture, i.e., sampling votes uniformly at random. For SP,
this is known as the Walsh model~\citep{wal:t:generate-sp}.
We also consider the Conitzer model of sampling SP
votes~\citep{con:j:eliciting-singlepeaked}:
First, we choose candidate $c_i$ uniformly at random and place it
on top of the vote.  Next, we perform $|C|-1$ iterations, so that in
each we extend by one candidate, selected uniformly at random among at
most two, so that the top-ranked candidates in the generated vote form
an interval with respect to the axis (we assume
$c_1 \lhd \cdots \lhd c_8$ for both models).
For $t$D-HyperCube domains, we use the $r$-Box model, $r \in \reals$,
which works as follows:
Given a domain and its embedding function, to sample a vote we choose
its point uniformly from $[-r, r]^t$
(recall that candidates are chosen from $[-1,1]^t$).
Large fraction of literature
in computational social choice that uses the $t$D-HyperCube models
uses the $1$-Box model (or an isomorphic one) and not impartial
culture over the given
domain~\citep{boe-fal-jan-kac-lis-pie-rey-sto-szu-was:c:guide}.

\paragraph[Computing k-Kemeny]{Computing $\boldsymbol{k}$-Kemeny.}
Given an election $E$, an integer $k$, and a search space (i.e., a
list of votes) we compute $\kappa_E(k)$ using a local search approximation algorithm
of \citet{fal-kac-sor-szu-was:c:div-agr-pol-map}. We start with a set
of $k$ centers, i.e., 
rankings selected uniformly at random from the search space,
and iteratively improve this set by replacing one of the
centers with a better one, from the search space, until no improvement
is found.  For Condorcet domains, it suffices to search over the
domain itself~\citep{bar:j:kemeny-condorcet,tru:t:kemeny-condorcet}.
For non-Condorcet domains, the search space consists of all the votes
from the domain and of $512$ votes sampled from IC (using more IC votes
did not lead to notable improvements).
We perform
10 random starts and select the best outcome.

\subsection{Microscope View of Our Domains}

\begin{figure}[t]
    \centering
    \includegraphics[width=0.9\linewidth, trim=0 50 0 0, clip]{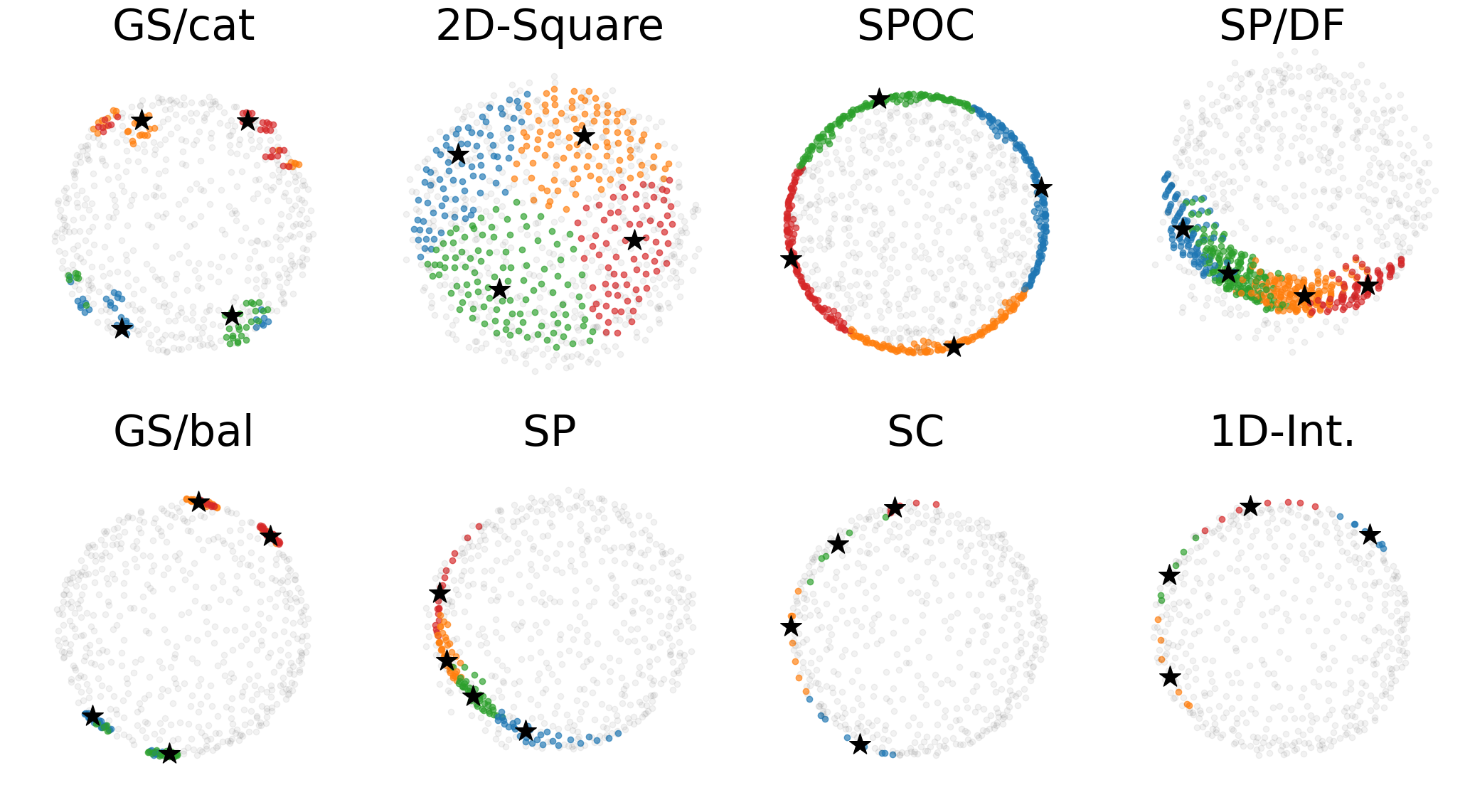}
    \caption{Microscopes of our domains with IC.}
    \label{fig:microscope_with_ic}
\end{figure}

Let us fix some election $E = (C,V)$, with $V = (v_1, \ldots, v_n)$.
Following
\citet{fal-kac-sor-szu-was:c:div-agr-pol-map},
we visualize $E$ as follows: We depict each vote $v \in V$ as a point
on a plane, so that the Euclidean distance between points
corresponding to votes $v_i, v_j \in V$ is as similar to
$\swap(v_i,v_j)$ as possible; to ensure such an embedding, we use the
multidimensional scaling algorithm~\citep{kru:j:mds,lee:j:mds}. While
not perfectly accurate, such visualizations---which we call
\emph{microscopes}---offer some intuition of the nature of the votes
in the election. They follow the general ideas of the map framework of
\citet{szu-boe-bre-fal-nie-sko-sli-tal:j:map}.

In~\Cref{fig:microscope_with_ic}, we present one microscope for each
of our domains, except 3D-Cb. (for 1D-Int., 2D-Sq., and SC we show one
representative example of a domain). The microscope shows the domain
with
$512$ additional votes generated from IC (light gray dots);
microscopes without added IC votes are in the appendix.
For each domain, we have computed an approximate $4$-Kemeny set, we
indicate the rankings from these sets with stars, and we colored the
votes according to the closest one.
As expected, votes of the same
color cluster together, providing a sanity check.
We refer to these plots in the following discussions.

\begin{figure}[t]
    \centering
    \includegraphics[width=0.2826\linewidth]{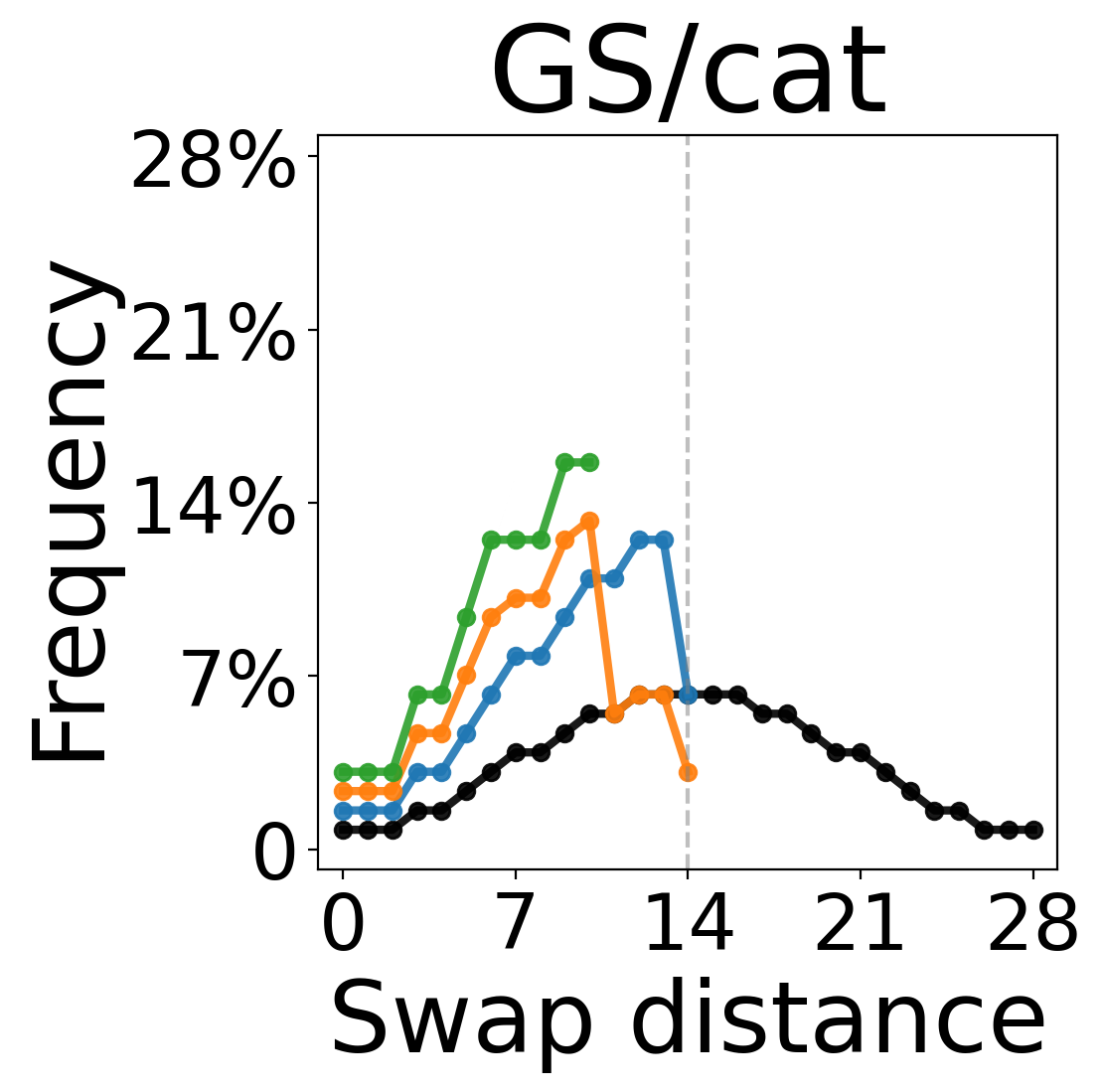}%
    \includegraphics[width=0.20592\linewidth]{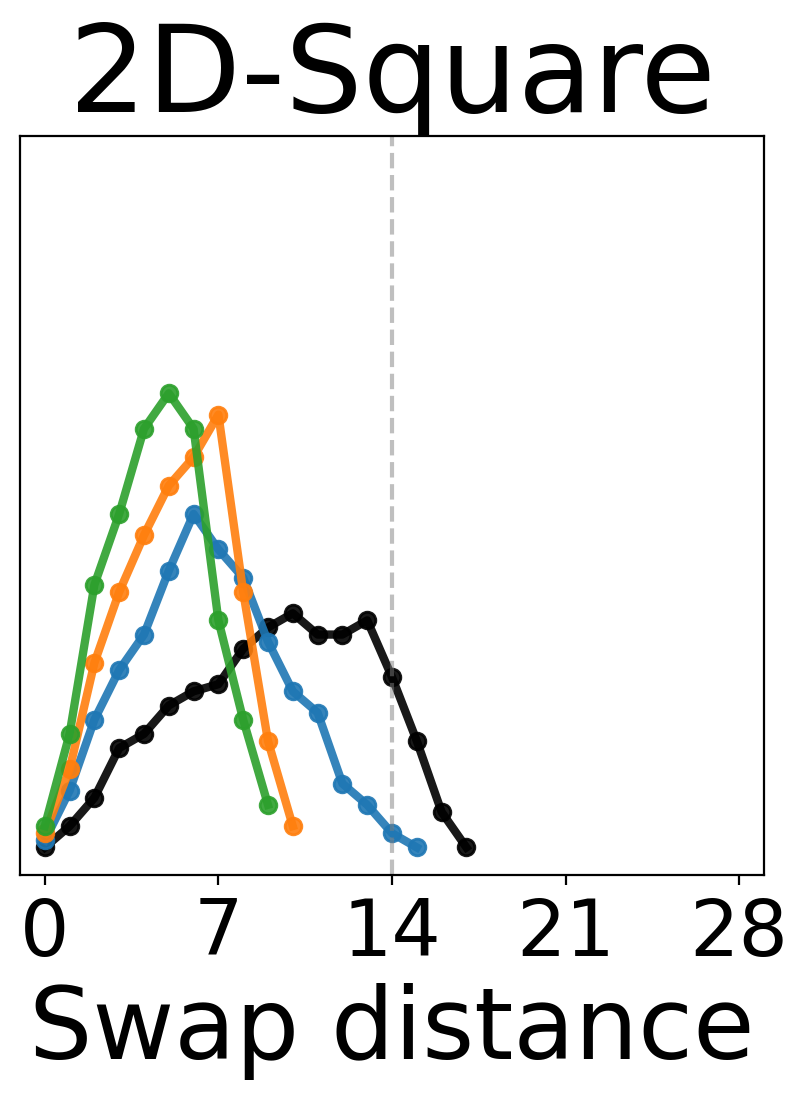}%
    \includegraphics[width=0.20592\linewidth]{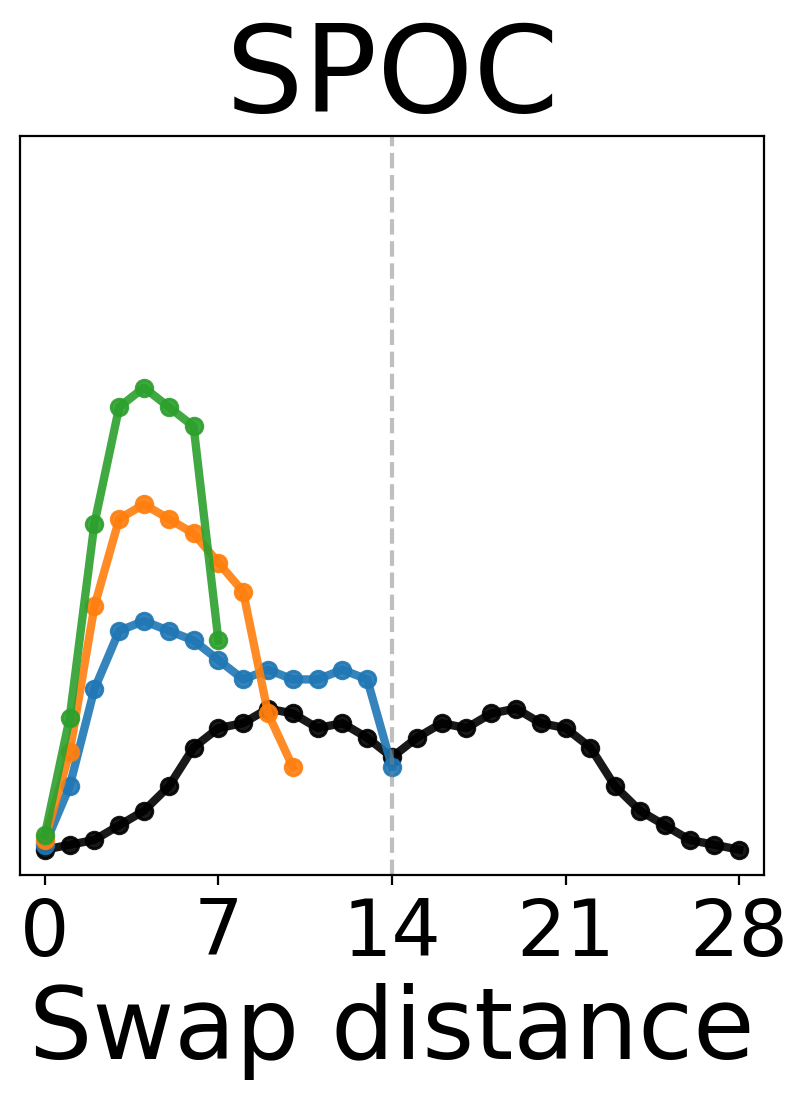}%
    \includegraphics[width=0.20592\linewidth]{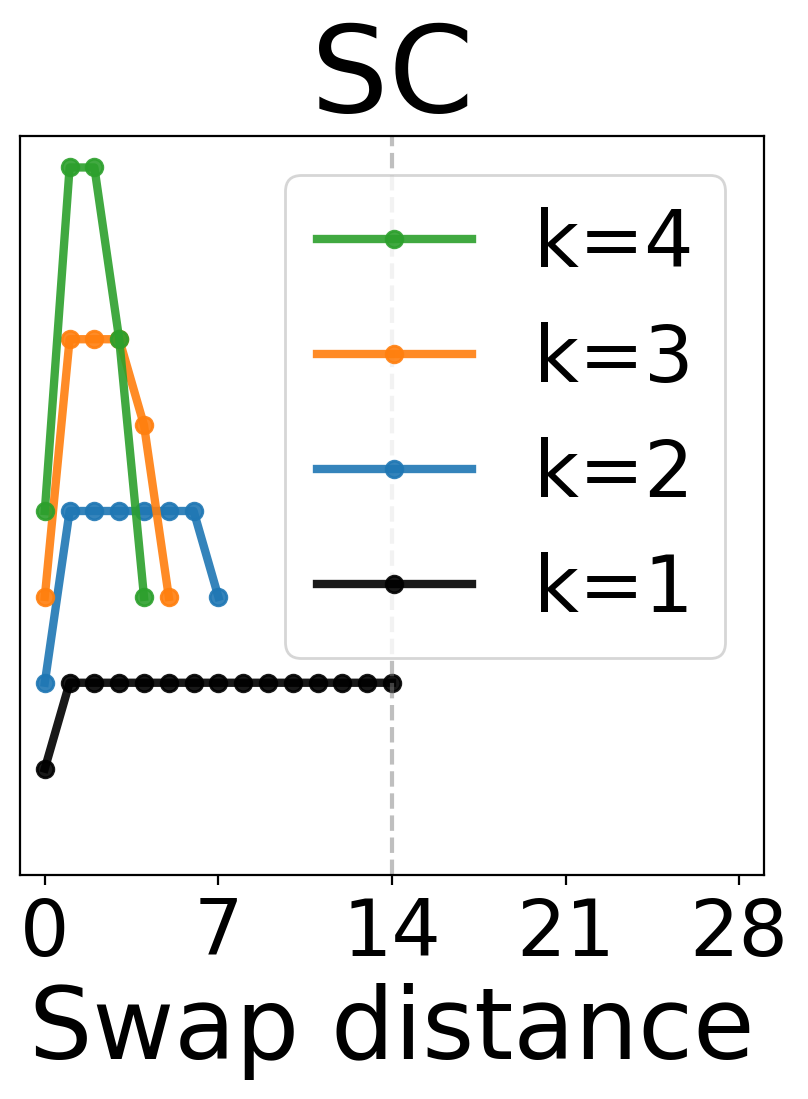}

    \caption{Histograms of the distances between the votes 
    and their closest $k$-Kemeny ranking, for $k \in \{1,2,3,4\}$.
    (For the remaining models, see the appendix.) 
    }
    \label{fig:histogram}
\end{figure}

\subsection{Diversity of the Domains}

For each of our domains---treated as an election $\calD$---we have
computed vector $\kappa(\calD)$ and present it on
\Cref{fig:diversity}. We repeated the computation 100 times and show
averaged results, together with standard deviation in the shaded areas
(nearly invisible in most cases; the randomness is over both the exact
choice of the domain---for SC and the Euclidean ones---and on the
randomness within our $k$-Kemeny heuristic). The plot also includes IC
for reference (in this case, for each of the 100 repetitions we
sampled new 512 votes from IC). As argued, for two domains $\calD'$
and $\calD''$ where $\kappa(\calD')$ dominates $\kappa(\calD'')$, we
say that $\calD'$ is (certainly) more diverse
($\calD' \pref \calD''$), leading to the following ranking (we also
resolved some unclear cases and left others as ties):
\begin{align*}
  \text{GS/cat}  \pref \text{3D-Cube} \pref \{ \text{2D-Square, SPOC}\}  
                 \pref \{\text{SP/DF, GS/bal}\} \pref \text{SP} \pref \{\text{SC, 1D-Interval}\}.
\end{align*}
The three ties that we have left seem to be genuine---looking at the
$k$-Kemeny scores of these pairs of domains we do not see strong
arguments to view either of them as more diverse than the other. This
is most clearly seen in case of SC and 1D-Interval, as their
$k$-Kemeny scores are identical.

\begin{remark}
  It is well-known that every 1D-Interval election is both SP and SC,
  but there are elections that are SP and SC, which are not
  1D-Interval~\citep{che-pru-woe:j:1d-characterization,elk-fal-sko:j:spsc}.
  Yet, the set of the maximal-sized 1D-Interval domains is a subset of
  the set of maximal-sized SC domains.
\end{remark}

Our diversity ranking is in agreement with many intuitions that one
might get by looking at the microscopes in \Cref{fig:microscope_with_ic}, but
the fact that GS/cat is the most diverse among our domains, or that
SPOC and GS/bal are ranked fairly highly, is surprising. Yet, this
seems justified. Foremost, the microscopes only give approximate views
of the domains and one of their features is that the points (i.e., the
votes) that are presented on the outer part are typically much farther
apart from each other than their Euclidean distance would
suggest. Hence, e.g., the votes in the 2D-Square domain---which appear
as a nice cluster in the inner part of the plot---are closer to each
other than the SPOC votes, which are placed on the outer part.
This is clearly seen in \Cref{fig:histogram}, where we show histograms
of the distances of the votes from their closest $k$-Kemeny ranking,
for $k \in \{1,2,3,4\}$.  In particular, for each $1$-Kemeny ranking
there are some GS/cat, SPOC, and GS/bal votes that are at the maximum
possible distance from them (which is 28 for $m=8$), meaning that
these domains include reverses of their $1$-Kemeny rankings (see also \Cref{sec:reverse-symmetric} below). Generally,
GS/cat votes are typically farther from their $k$-Kemeny rankings than
the 2D-Square ones. 

\subsubsection{Diversity of Conitzer and Walsh Elections.}

In \Cref{fig:sp_diversity} we show analogous plots as in
\Cref{fig:diversity}, but for elections sampled using the Walsh and
Conitzer models.
Intuitively, the Walsh model should produce more diverse elections as
their votes cover SP more evenly, but this conclusion is not as strong
as one might expect: Indeed $1$-Kemeny score is higher for elections
generated under the Conitzer model. However, if we disregard this one
entry, then Walsh SP elections are indeed more diverse.  Conitzer
elections, on the other hand, are more polarized as their
$\kappa_E(1) - \kappa_E(2)$ value, which measures polarization, is
larger~\citep{fal-kac-sor-szu-was:c:div-agr-pol-map}.

\begin{figure}[t]
    \centering
     \begin{minipage}[t]{0.33\columnwidth}
        \centering
        \includegraphics[width=0.99\linewidth]{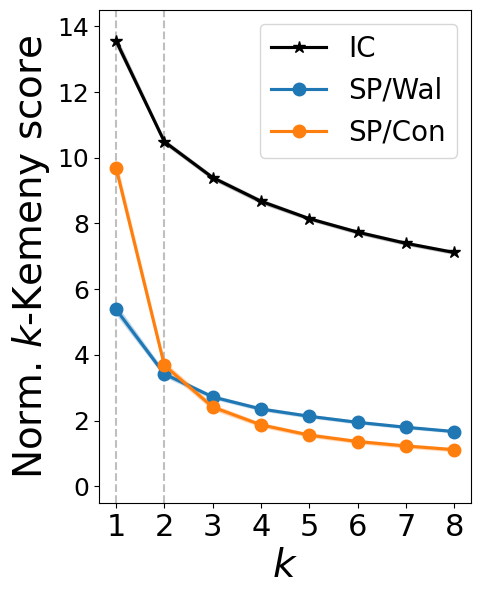}
        \captionsetup{width=0.9\linewidth, font=small}
        \caption{Average $\kappa(E)$ for SP elections sampled from Wal./Con. models.}
        \label{fig:sp_diversity}
    \end{minipage}%
    \begin{minipage}[t]{0.33\columnwidth}
        \centering
        \includegraphics[width=0.99\linewidth]{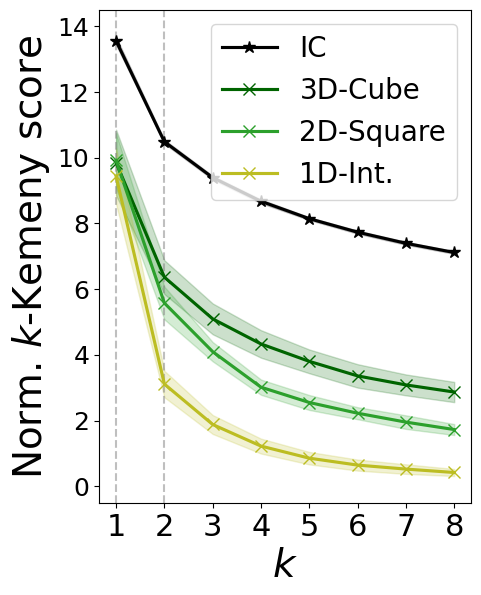} 
        \captionsetup{width=0.9\linewidth, font=small}
        \caption{Average $\kappa(E)$ for Euclidean elections sampled using 1-Box.}      
        \label{fig:euc_diversity}
    \end{minipage}%
    \begin{minipage}[t]{0.33\columnwidth}
        \centering
        \includegraphics[width=0.99\linewidth]{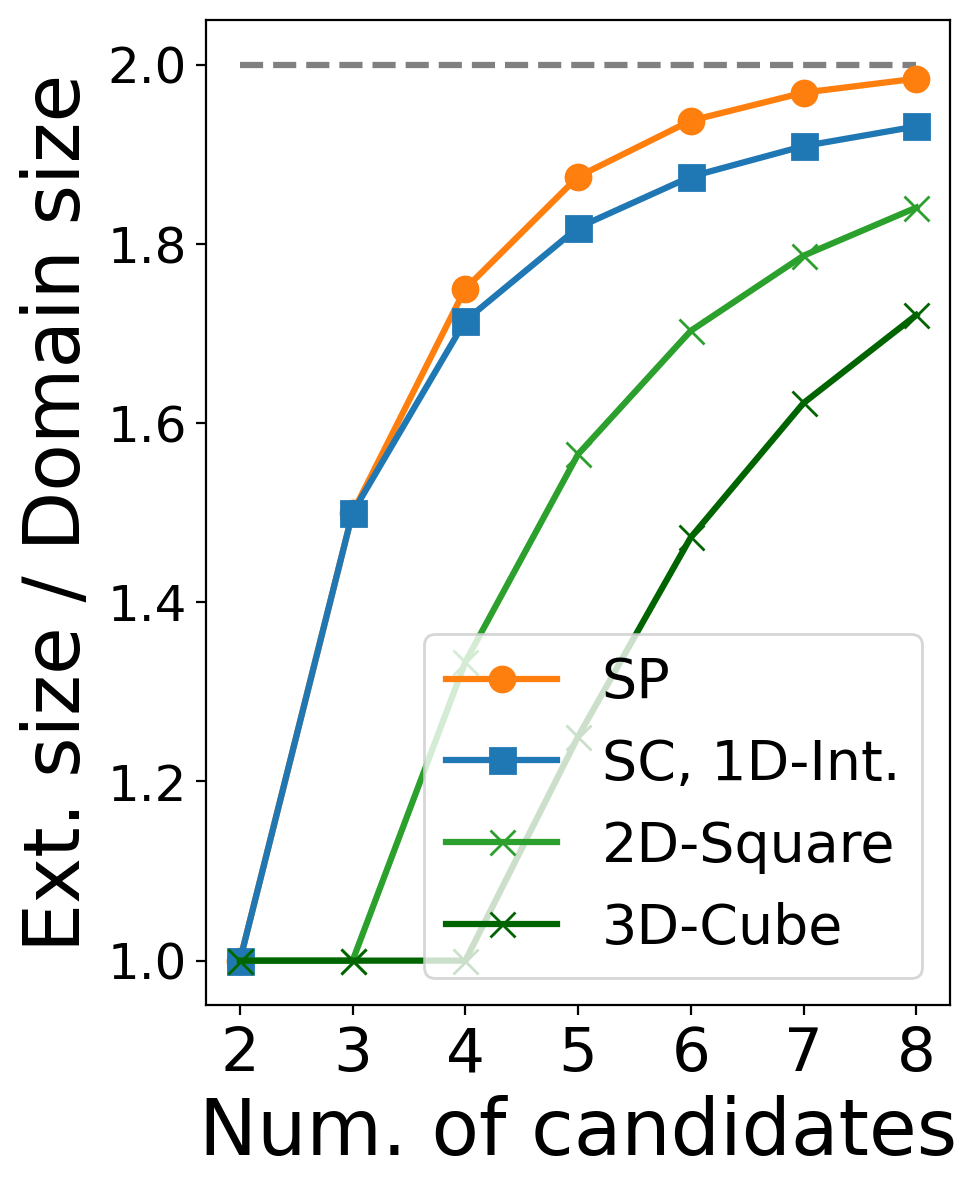}
        \captionsetup{width=0.9\linewidth, font=small}
        \caption{Sizes~of the rev.-sym. ext. of the domains over their
          original sizes.}
        \label{fig:ext_ratio}
    \end{minipage}
\end{figure}

It is tempting to
think that sampling
votes from a domain uniformly at random maximizes the diversity
of the resulting elections, but this is not always true.
For an example of a domain where this does not hold, see~\Cref{apdx:sec:sc_with_gaps_diversity}.

\subsection{Diversity of Euclidean Elections}
Euclidean domains (particularly 1D and 2D ones) are among the most frequently used structured domains in experiments~\citep{boe-fal-jan-kac-lis-pie-rey-sto-szu-was:c:guide}. While both voters' and candidates' ideal points can be drawn from various distributions, a common approach is to sample all points uniformly at random from the $1$-Box models (or isomorphic ones).

In~\Cref{fig:euc_diversity} we show an analogous plot as in
\Cref{fig:diversity}, but for elections sampled from 1D-Interval,
2D-Square, and 3D-Cube using the $1$-Box model, with IC included as a
reference point. 
As expected, the lower the dimension, the less diverse are the sampled
elections. Notably, $1$-Kemeny scores are nearly identical across all
three Euclidean models.

Yet, we observe that Euclidean elections sampled from the $1$-Box
model are visibly less diverse than their respective domains. Let us
explain this on the example of the 2D-Square domain.
Here, each possible vote corresponds to a polygon, but given the candidates' points, some polygons may lie outside of the $[-1,1]^2$
square, or they may be so small that we never ``hit'' them during
sampling. To measure how these two factors reduce the number of
distinct votes in our samples, we conducted the following
experiment. For each value of $r \in \{0.5,0.75,\dots,4\}$, we sampled
100 elections from the $r$-Box model. Then, using the candidates'
positions in each election, we calculated both the maximum number of
distinct votes possible within the given $r$-Box and the total number
of possible votes overall.\footnote{For experiments in this section, for a given dimension we
  sample 10 times more votes than in the domain (starting from 
  $290$ for 1D, up to $222120$ for 5D).}
Results are presented in~\Cref{fig:euclidean_box}~(left). As expected,
the maximum number of votes increases as we enlarge the box
size. However, interestingly, the number of distinct sampled votes
does not. Indeed, we find that the $r$-Box model produces most
distinct sampled votes for $r=1$ (we verified that this also gives
the highest diversity). Jointly with the fact that this model captures
the scenario where candidates and voters come from the same
population, we see it as an argument for using the $1$-Box model in
experiments, as already done. However, we also encourage the use of
impartial culture over Euclidean domains, which so far does not seem
to be done at all, as these cultures produce much more diverse
elections.

In~\Cref{fig:euclidean_box}, we also show the results of two similar
experiments.
In the first one,
we fix $r=1$
and vary
the number of candidates in 2 dimensions (center plot). 
In the second, we fix 8 candidates, but vary the number of dimensions (right plot).
In both cases, the number of distinct sampled votes is significantly smaller than the domain size.

\begin{figure}[t]
    \centering  
    \includegraphics[width=0.3541\linewidth]{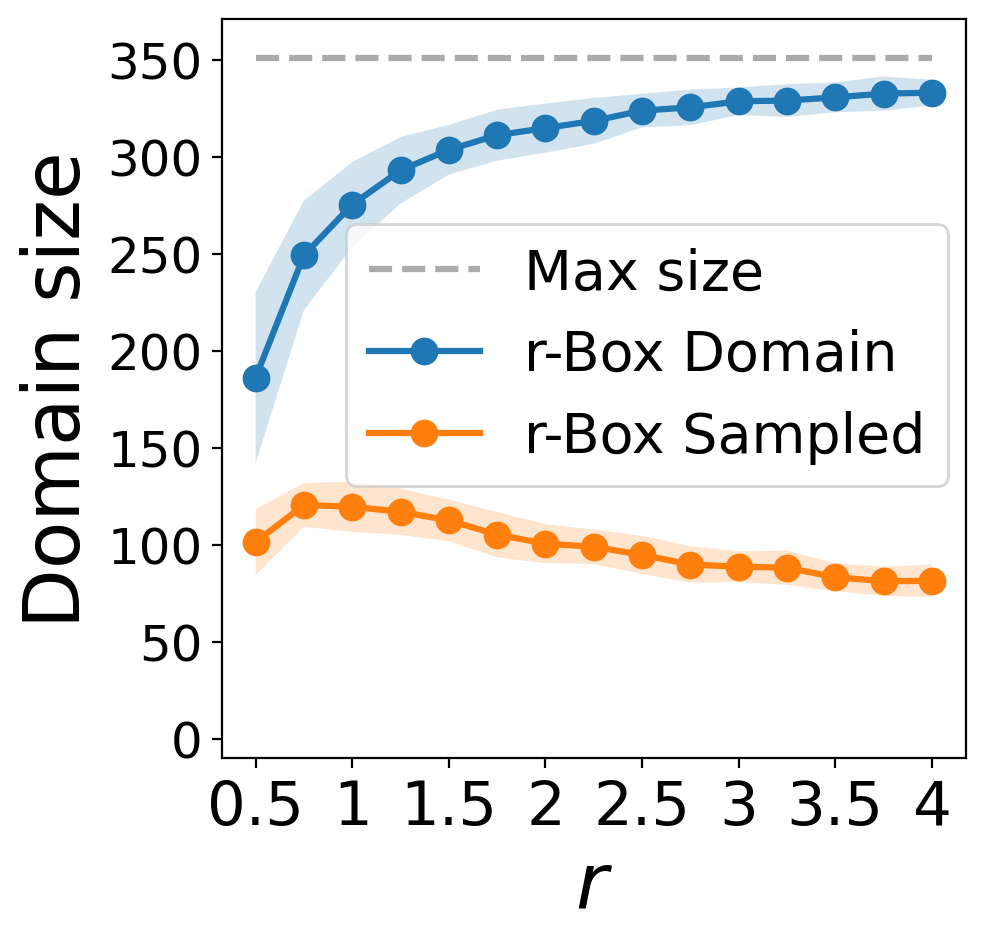}%
    \includegraphics[width=0.3228\linewidth]{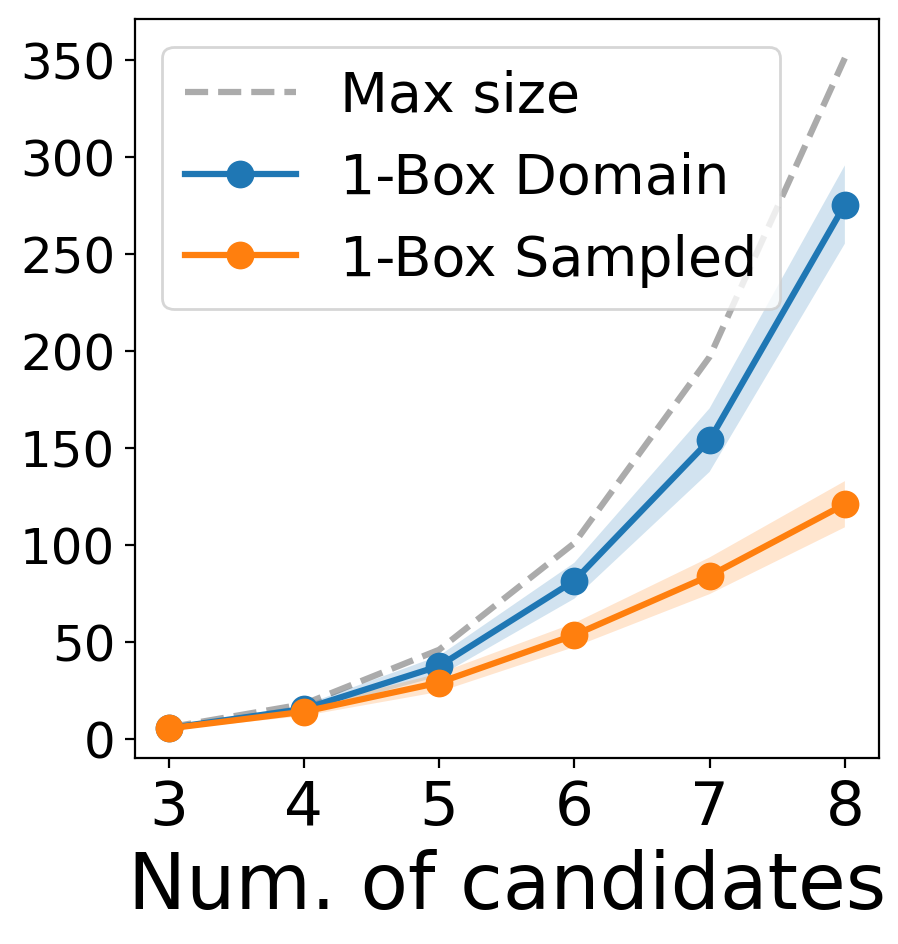}%
    \includegraphics[width=0.3228\linewidth]{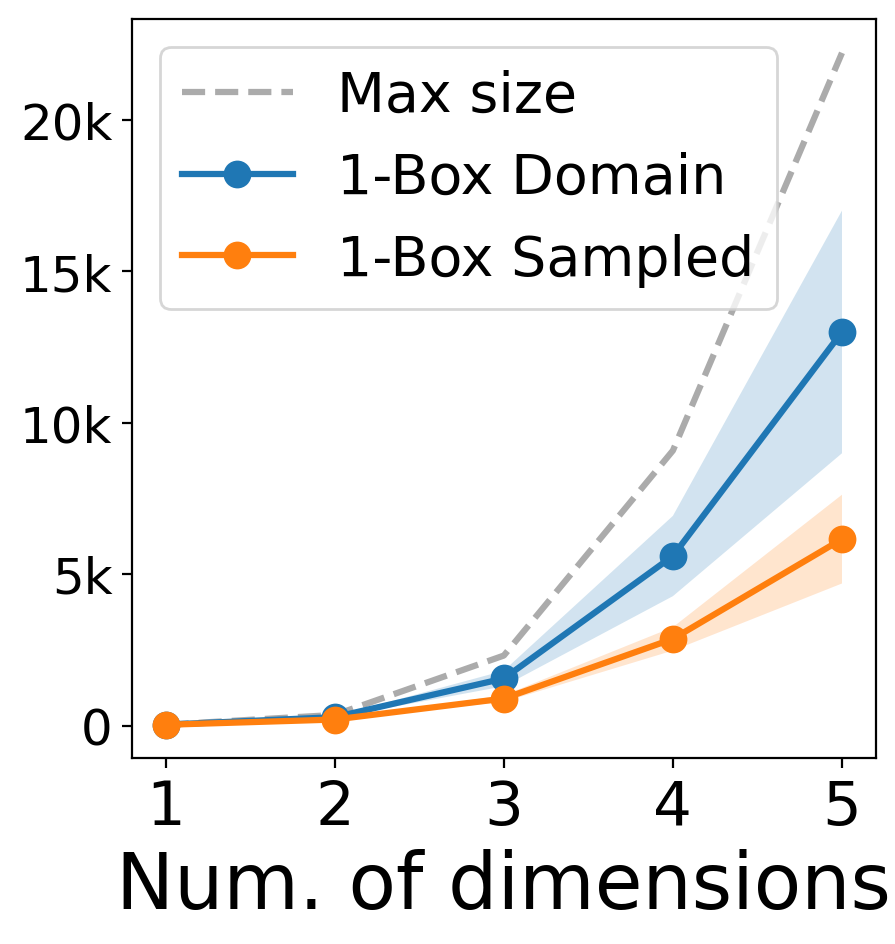}%
    \caption{Changes in the sizes of Euclidean domains (the plots on the left and in the center plot regard the 2D-Square domains; for the plot on the right, the dimension of the considered domain is on the $x$ axis).}
    \label{fig:euclidean_box}
\end{figure}

\subsection{Reverse-Symmetric Domains}\label{sec:reverse-symmetric}
We call a domain $\calD$ \emph{reverse-symmetric} if for each vote
$v~\in~\calD$, its reversal is also in $\calD$.
Similarly, it is \emph{reverse-free} if for each of its votes, its reversal
is not in $\calD$.  Among the considered domains, only GS and SPOC are
reverse-symmetric, while the SP/DF is reverse-free (in fact, all SP
on a tree domains are reverse-free, except for the SP on path).

We can extend each domain by adding a reversal of each vote (unless it is already present).
By definition, for a reverse-symmetric domain, the size of its extension equals the size of the original domain itself. For a reverse-free domain, the size of its extension is twice that of the original one.
For domains that are neither reverse-symmetric nor reverse-free, we
computed the ratio between the sizes of their extensions and
the original ones. The results in~\Cref{fig:ext_ratio} show that all
curves appear to converge towards $2$, with SP converging the fastest
and 3D the slowest.  Hence, we can classify our domains as either
being reverse-symmetric or being (nearly) reverse-free.  We encourage
using domains of both types in experiments.

\section{Conclusions}

The most important take-home messages from our work are that
caterpillar group-separable elections are much more diverse than one
might think, and that sampling votes from Euclidean domains uniformly
at random may lead to very different elections than using the natural
approach based on sampling voter points. Consequently, we encourage
the use of these means of sampling preference data in experiments.

\section*{Acknowledgements}

Tomasz Wąs was supported by UK Engineering and Physical Sciences Research Council (EPSRC) under grant EP/X038548/1.
This project
has received funding from the European Research Council (ERC) under
the European Union’s Horizon 2020 research and innovation programme
(grant agreement No 101002854).
\begin{center}
  \includegraphics[width=3.5cm]{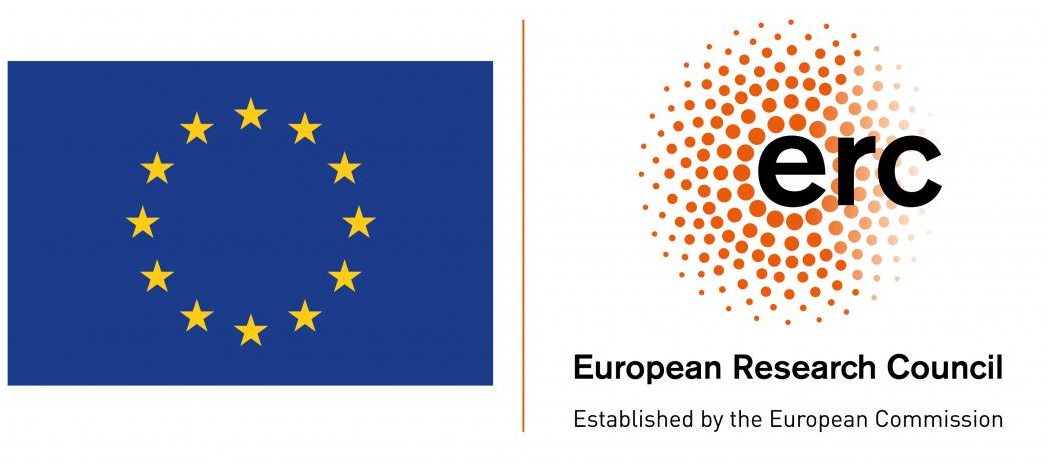}
\end{center}

\bibliography{bib-all}

\clearpage

\appendix

\section{Missing Proofs}\label{app:proofs}

In this appendix,
we provide the missing proofs that were omitted in the
main part of the paper.

The next proposition is a well-known property of Kemeny rankings that
will be crucial for many of our theoretical results. It has been
expressed in various forms and shapes by a number of different authors
(see, e.g., the works of \citet{bar:j:kemeny-condorcet},
\citet{tru:t:kemeny-condorcet}, and
\citet{bre-geo-isr-kel:c:sp-opinion-updates-kemeny}).

\begin{proposition}\label{pro:kemeny-condorcet}
  Let $E = (C,V)$ be an election and let $A, B$ be two disjoint
  subsets of candidates such that $C = A \cup B$. If for each two
  candidates $a \in A$, $b \in B$ we have that at least half of the
  voters prefer $a$ to $b$, then there is a Kemeny ranking $r$ for $E$
  such that $r \colon A \pref B$. If for each $a \in A$ and each
  $b \in B$ a strict majority of voters prefers $a$ to $b$ then
  for every Kemeny ranking $r$ we have $r \colon A \pref B$.
\end{proposition}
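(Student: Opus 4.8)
The plan is to prove both statements with a single local exchange argument, exploiting the fact that swapping two adjacent candidates in a ranking changes its Kemeny score only through the relative order of the pair being swapped. Concretely, suppose $r$ ranks some $b \in B$ immediately above some $a \in A$, and let $r'$ be obtained from $r$ by swapping this adjacent pair. Since $r$ and $r'$ agree on the relative order of every pair except $\{a,b\}$, for each voter $v$ we have $\swap(v,r') - \swap(v,r) = +1$ if $v \colon b \pref a$ and $-1$ if $v \colon a \pref b$. Summing over $V$ yields $\kemeny_E(r') - \kemeny_E(r) = |\{v : v \colon b \pref a\}| - |\{v : v \colon a \pref b\}|$. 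The hypothesis that at least half the voters prefer $a$ to $b$ makes this difference nonpositive, while a strict majority makes it strictly negative.

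First I would record the elementary combinatorial fact that whenever $r$ does not satisfy $r \colon A \pref B$, such an adjacent pair must exist. Reading $r$ from top to bottom and labelling each position $A$ or $B$, if no $B$-candidate is immediately followed by an $A$-candidate, then every candidate below the first $B$ is also in $B$, so all of $A$ precedes all of $B$, i.e. $r \colon A \pref B$. Contrapositively, any $r$ violating $A \pref B$ has a $B$-candidate sitting directly above an $A$-candidate, which is exactly the pair to swap.

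For the first (weak-majority) statement, I would start from an arbitrary Kemeny ranking and repeatedly apply such a swap, using as a monovariant the number of pairs $(a,b) \in A \times B$ with $r \colon b \pref a$. Each swap decreases this quantity by exactly one (it reverses the relative order of the swapped pair and touches no other pair) and, by the computation above, never increases the Kemeny score. Hence after finitely many steps we reach a ranking $r^\star$ with $r^\star \colon A \pref B$ whose score is at most that of the original Kemeny ranking; since the latter was optimal, $r^\star$ is itself a Kemeny ranking, establishing existence. For the second (strict-majority) statement, I would argue by contradiction: if some Kemeny ranking $r$ violated $A \pref B$, the single swap described above would strictly lower its score, contradicting optimality; therefore every Kemeny ranking satisfies $r \colon A \pref B$.

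The only genuinely delicate points are bookkeeping ones: verifying that an adjacent swap alters the agreement with $v$ on exactly the swapped pair, so the per-voter change is precisely $\pm 1$, and checking that in the weak case the iterated swaps terminate at $A \pref B$ while keeping the score non-increasing. I expect the main (though still routine) obstacle to be making the weak-majority iteration airtight, since there optimality is preserved rather than strictly improved, so termination must be driven by the monovariant and not by the score itself.
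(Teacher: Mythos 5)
Your proof is correct. The paper itself does not prove \Cref{pro:kemeny-condorcet}---it states it as a well-known fact and defers to the cited literature---and your adjacent-transposition exchange argument (per-voter change of $\pm 1$ under an adjacent swap, the monovariant counting pairs $(a,b)\in A\times B$ with $b$ ranked above $a$ for the weak case, and a direct contradiction for the strict case) is exactly the standard argument those references use, with all the delicate points you flag handled correctly.
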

This implies that if an election has a Condorcet ranking then it also
is a Kemeny ranking.
\begin{corollary}\label{cor:kemeny-condorcet}
  Consider an election $E = (C,V)$ where the votes belong to some
  Condorcet domain. Each Condorcet ranking for $E$ is also a Kemeny
  ranking for $E$.
\end{corollary}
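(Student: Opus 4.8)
The plan is to bypass any reasoning about clusters or sub-elections and instead decompose the Kemeny score \emph{pairwise}, over unordered pairs of candidates, so as to exhibit a single lower bound that holds for every ranking and which a Condorcet ranking meets with equality. Recall from the preliminaries that $\swap(v,r')$ equals the number of candidate pairs ordered differently by $v$ and $r'$. Summing over voters, for any $r' \in \calL(C)$ this gives
\[
  \kemeny_E(r') = \sum_{\{a,b\}} \bigl|\{v \in V : v \text{ and } r' \text{ disagree on } \{a,b\}\}\bigr|,
\]
where the sum runs over all unordered pairs of distinct candidates.

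First I would fix notation: for distinct $a,b$ let $n_{ab}$ be the number of voters with $v \colon a \pref b$, so that $n_{ab} + n_{ba} = |V|$ (the votes are linear orders, hence no ties within a vote). If $r' \colon a \pref b$, then exactly the $n_{ba}$ voters preferring $b$ disagree with $r'$ on $\{a,b\}$; since both $n_{ab}$ and $n_{ba}$ are at least $\min(n_{ab},n_{ba})$, each pair contributes at least $\min(n_{ab},n_{ba})$ however $r'$ orders it. This yields the universal lower bound $\kemeny_E(r') \ge \sum_{\{a,b\}} \min(n_{ab},n_{ba})$, valid for every $r'$. Next I would show that a Condorcet ranking $r$ attains it: by definition, whenever $r \colon a \pref b$ we have $n_{ab} \ge n_{ba}$, so the pair $\{a,b\}$ contributes exactly $n_{ba} = \min(n_{ab},n_{ba})$ under $r$; summing over all pairs gives $\kemeny_E(r) = \sum_{\{a,b\}} \min(n_{ab},n_{ba})$. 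Together with the lower bound this shows $r$ minimizes the Kemeny score, i.e., $r$ is a Kemeny ranking for $E$.

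The only subtlety to get right is the tie case $n_{ab} = n_{ba}$ (exactly half of the voters split each way), which the definition of a Condorcet ranking explicitly permits: here $\min(n_{ab},n_{ba})$ is the common value and \emph{either} orientation of the pair attains it, so $r$ still matches the bound on such pairs. This is precisely why the weak ``at least half'' hypothesis suffices and no strictness is needed. I expect this inversion-decomposition argument to be the cleanest route; alternatively one can derive the statement directly from \Cref{pro:kemeny-condorcet} by induction on $|C|$, taking $A$ to be the top candidate of $r$ and $B$ the rest, so that the proposition guarantees a Kemeny ranking that ranks $A$ first, after which the contribution of that candidate to every swap distance is a fixed constant and the residual problem is a Kemeny instance on $C \setminus A$ whose restriction of $r$ is again a Condorcet ranking. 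The decomposition argument is preferable as it avoids any reasoning about sub-elections.
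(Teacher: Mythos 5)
Your proof is correct. The pairwise decomposition $\kemeny_E(r') = \sum_{\{a,b\}} |\{v \in V : v \text{ and } r' \text{ disagree on } \{a,b\}\}|$ is valid because $\swap$ counts inversions, each pair contributes at least $\min(n_{ab},n_{ba})$ regardless of how $r'$ orders it, and a Condorcet ranking attains this minimum on every pair by the ``at least half'' condition (your handling of the tie case $n_{ab}=n_{ba}$ is exactly right and is why the weak inequality suffices). The route is genuinely different from the paper's, though: the paper does not prove the corollary directly, but presents it as an immediate consequence of \Cref{pro:kemeny-condorcet}, a property of Kemeny rankings it cites as well known (in the style of Barthelemy--Monjardet and others) without proof. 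Your argument is self-contained and more informative---it yields the explicit optimum $\kemeny_E(r) = \sum_{\{a,b\}} \min(n_{ab},n_{ba})$ and a matching universal lower bound, and in fact the same decomposition is the standard way to establish \Cref{pro:kemeny-condorcet} itself. What the paper's route buys is brevity and reuse: the stronger proposition (about arbitrary partitions $C = A \cup B$ with weak or strict majorities) is needed elsewhere in the hardness proofs, so the corollary comes for free once it is stated; your sketched alternative derivation from the proposition by induction on $|C|$ would also work but, as you note, is more cumbersome than the direct inversion count.
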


\appendixProofs

\section{Domain Sizes}
\label{app:domain-sizes}

In this appendix, we provide the maximum sizes
of some of the considered domains.
Assume that there are $m \ge 2$ candidates.
\begin{itemize}
    \item
    The single-peaked domain
    and group-separable domains:
    $2^{m-1}$
    (\href{https://oeis.org/A131577}{OEIS: A131577}).
    \item
    The single-peaked double-forked domain:
    $16 \cdot (2^{m-3} - 1)$,
    for $m \ge 5$
    (\href{https://oeis.org/A175164}{OEIS: A175164}).
    \item
    The SPOC domain:
    $m \cdot 2^{m-2}$
    (\href{https://oeis.org/A057711}{OEIS: A057711}).
    \item
    Single-crossing domains and 1D Euclidean domains:
    $\binom{m}{2}+1$
    (\href{https://oeis.org/A000124}{OEIS: A000124}).
    \item
    2D Euclidean domains:
    $s(m,m) + s(m,m-1) + s(m,m-2)$,
    where $s(n,k)$ are the unsigned
    Stirling numbers of the first
    kind~\citep{goo-tid:j:euclidean-preferences}
    (\href{https://oeis.org/A308305}{OEIS: A308305}).
\end{itemize}

\section[Additional Experiments]{Additional Experiments \& Missing Plots}
In this appendix,
we provide additional experimental results
that were omitted in the
main part of the paper.

\begin{figure}[t]
    \centering
    \begin{minipage}{0.6\textwidth}
        \centering
        \includegraphics[width=0.95\linewidth]{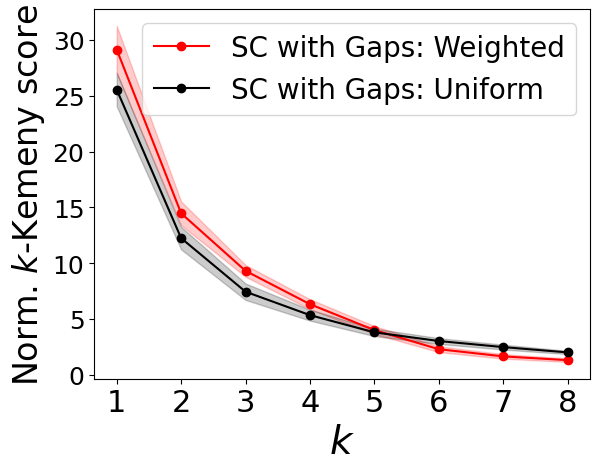}
        \caption{Diversity of sampled data.}
        \label{apdx:fig:sc_with_gaps_diversity}
    \end{minipage}%
    \begin{minipage}{0.4\textwidth}
        \centering
        \includegraphics[width=0.91\linewidth]{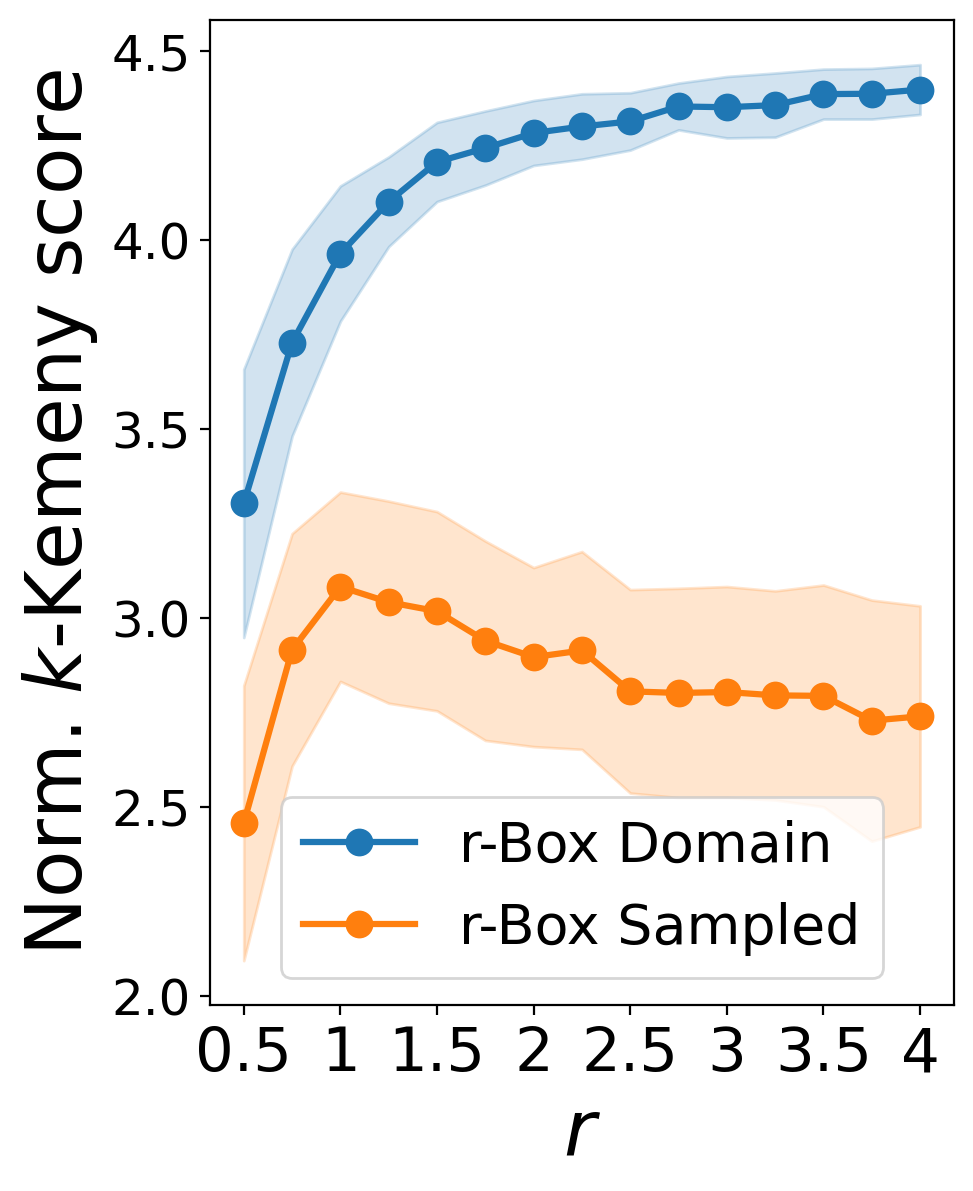}
        \caption{Diversity of sampled data.}
        \label{apdx:fig:euclidean_box}
    \end{minipage}
\end{figure}

\subsection{IC is Not Always the Most Diverse}
\label{apdx:sec:sc_with_gaps_diversity}

Consider the following domain structure: We start with a maximal SC
domain with ordered votes.  To build our domain, we take the first
vote, then omit $t$ votes, then take two votes and omit $t$ votes
again.  We continue by taking four votes and omitting $t$ votes, and
so on---always omitting $t$ votes while doubling the number of
selected votes each time.

We may either sample votes from this domain uniformly at random, or we
can use an alternative method using weighted probabilities: The first
vote included in the domain has a weight of $1$, the next two votes
have weights of $\nicefrac{1}{2}$, the next four votes have weights of
$\nicefrac{1}{4}$, and so on.  Sampling votes using these weights can
achieve higher diversity than sampling uniformly at random from the
same domain.

We consider a single-crossing domain with $16$ candidates and $t=12$
gap size.  Results are presented
in~\Cref{apdx:fig:sc_with_gaps_diversity}. The fact that the lines cross at
$k=5$ is simply because, for $16$ candidates and $t=12$ gap size,
there are $5$ groups of voters.

\subsection{Euclidean Box}

In~\Cref{apdx:fig:euclidean_box} we present a corresponding plot 
to the one presented in~\Cref{fig:euclidean_box} (left), but on the $y$-axis,
instead of showing the domain size, we present the normalized $k$-Kemeny score.

\subsection{More Microscopes}

In~\Cref{apdx:fig:microscope_without_ic} we present an analogous plot to the one
presented in~\Cref{fig:microscope_with_ic}, but without any additional IC votes.
While the microscope with IC votes shows how a given domain looks compared to
(an approximation of) the space of all possible votes, 
the one without IC votes presents the domain's internal structure better. 

In~\Cref{apdx:fig:ext_microscope} we present a microscope of extended domains, 
described in~\Cref{sec:reverse-symmetric}.

\subsection{More Histograms}
In~\Cref{apdx:fig:histogram} we present the histograms omitted in~\Cref{fig:histogram}.

\subsection{Scalability}
In~\Cref{apdx:fig:diversity_scalability} we present similar results to those in~\Cref{fig:diversity}, but for 6,8,10, and 12 candidates. In~\Cref{apdx:fig:euc_diversity_scalability}  we present similar results to those in~\Cref{fig:euc_diversity}, but for 6,8,10, and 12 candidates. In~\Cref{apdx:fig:sp_diversity_scalability}  we present similar results to those in~\Cref{fig:sp_diversity}, but for 6,8,10, and 12 candidates.

\begin{figure}[th]
    \centering
    \includegraphics[width=0.9\linewidth]{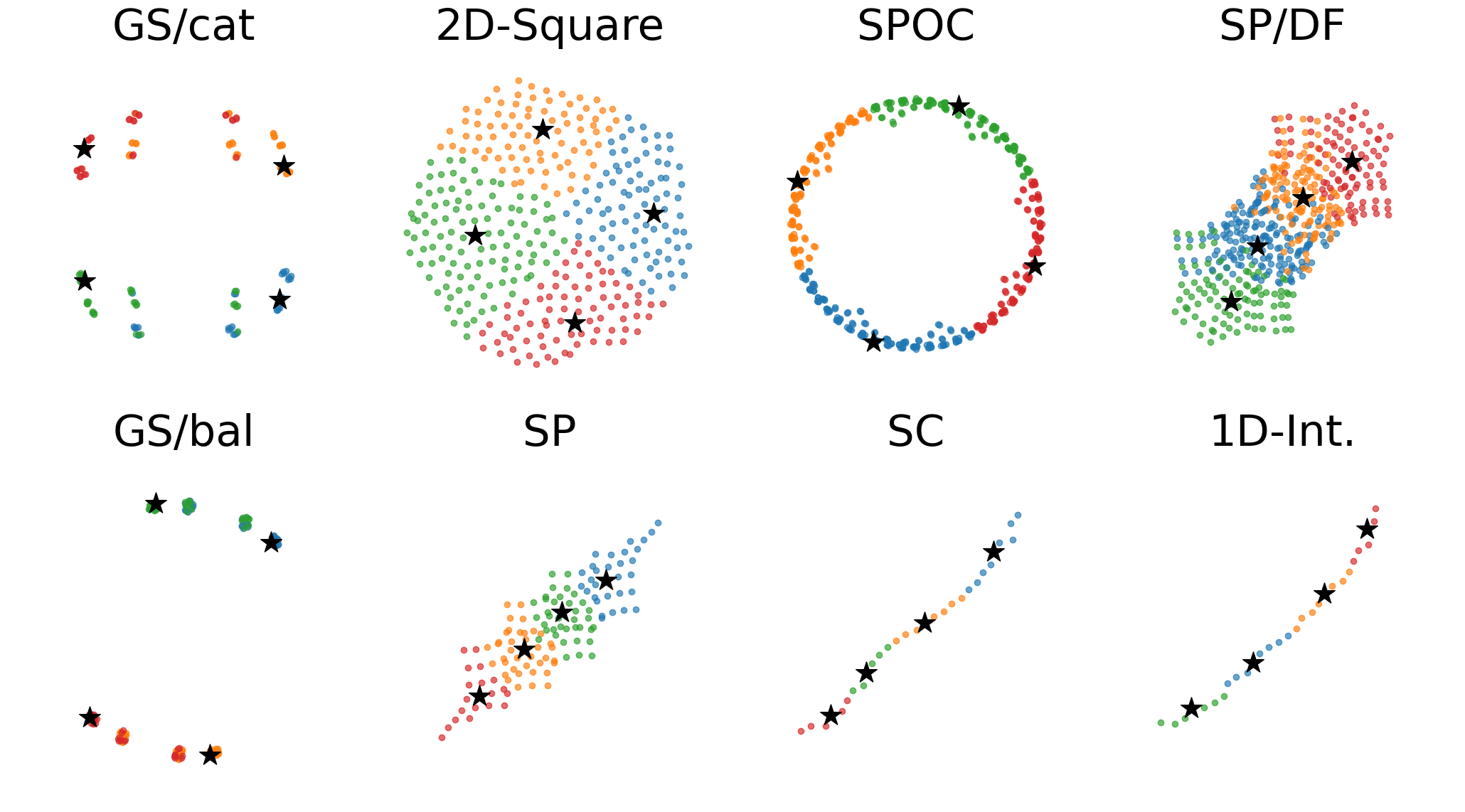}
    \caption{Microscopes of our domains without IC.}
    \label{apdx:fig:microscope_without_ic}
\end{figure}

\begin{figure}[th]
    \centering
    \includegraphics[width=0.9\linewidth]{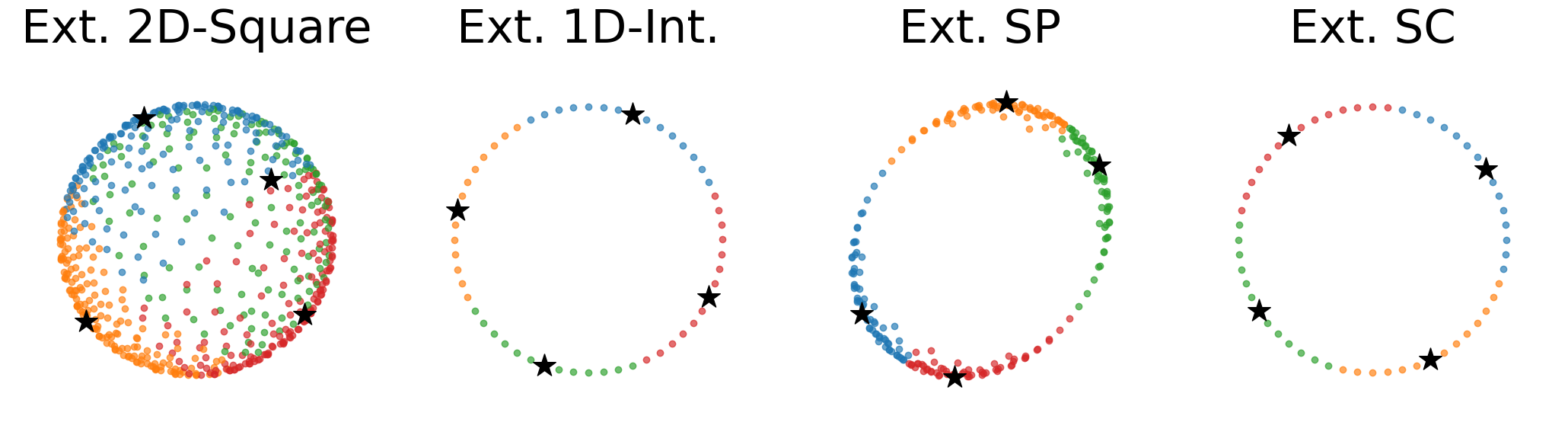}
    \caption{Extended Microscope without IC.}
    \label{apdx:fig:ext_microscope}
\end{figure}

\begin{figure}[thb]
    \centering
    \includegraphics[width=0.2826\linewidth]{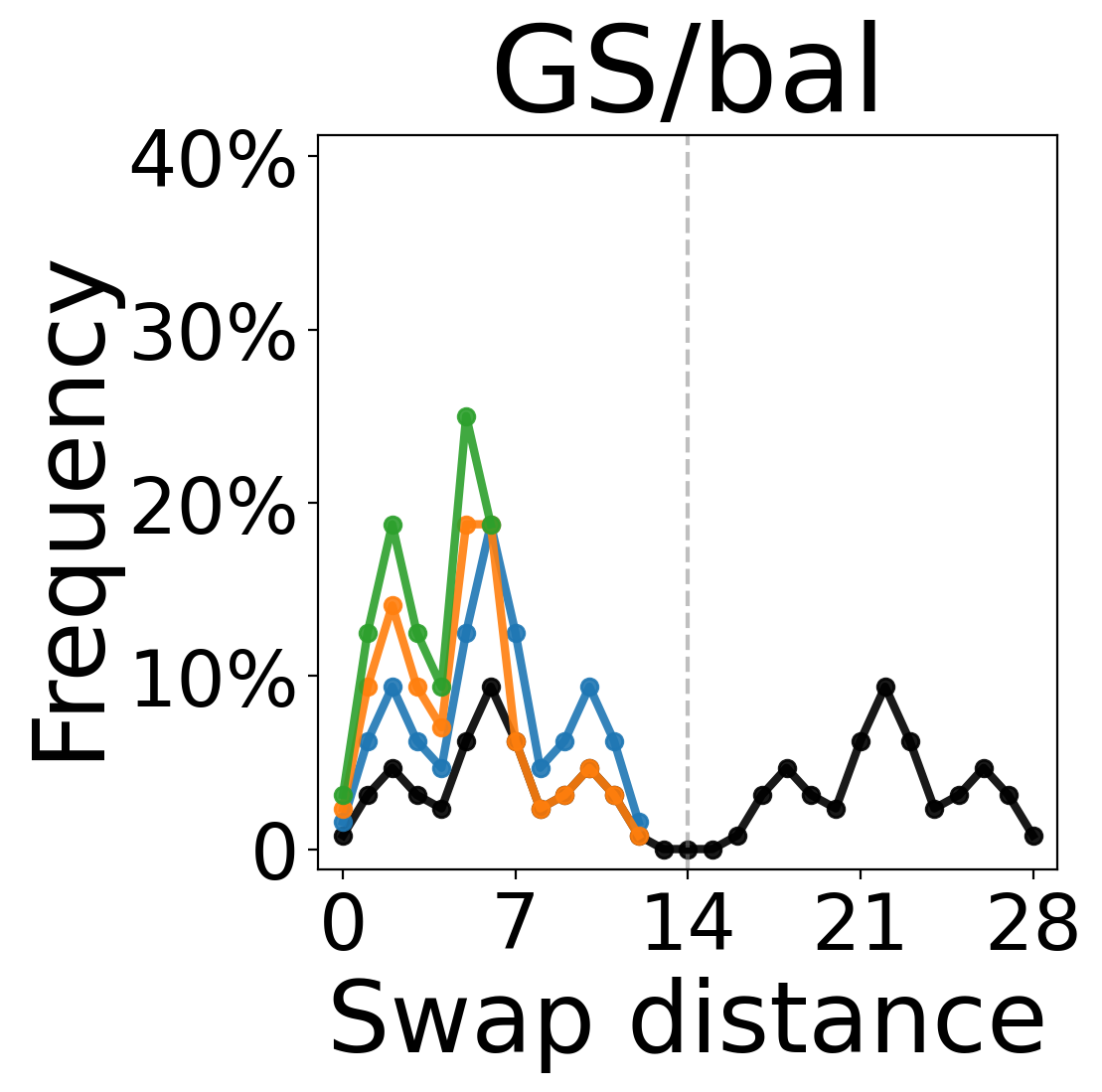}%
    \includegraphics[width=0.20592\linewidth]{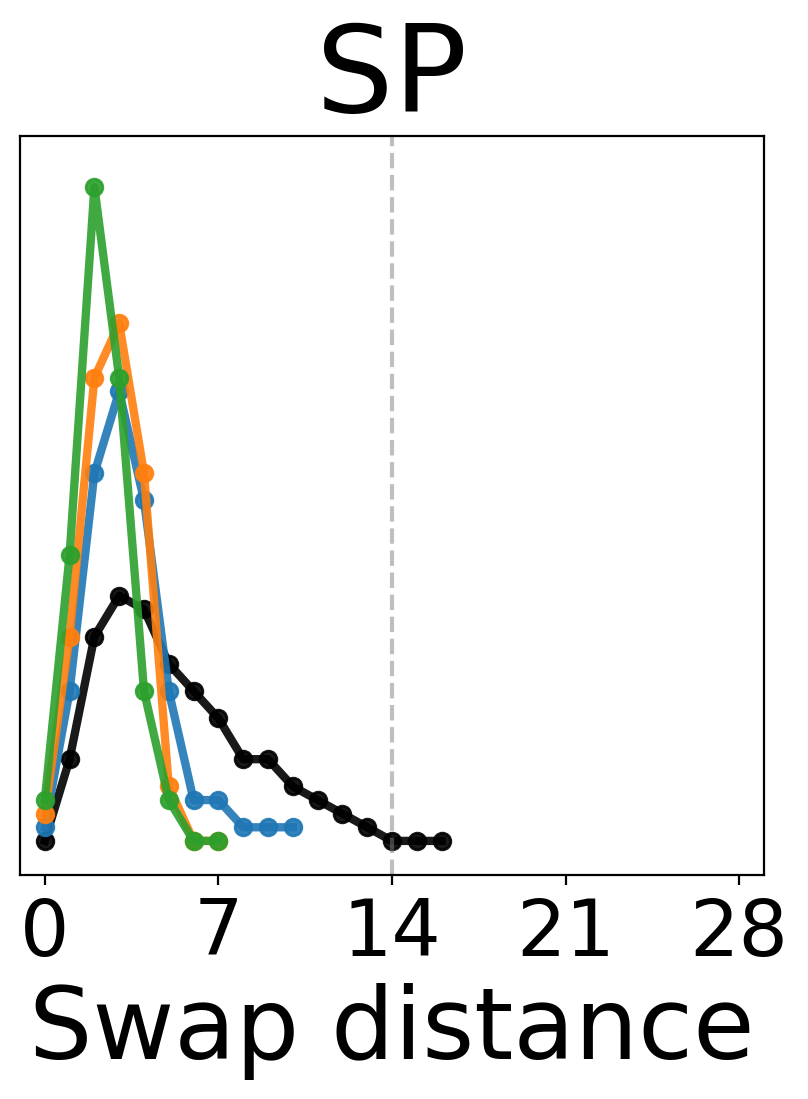}%
    \includegraphics[width=0.20592\linewidth]{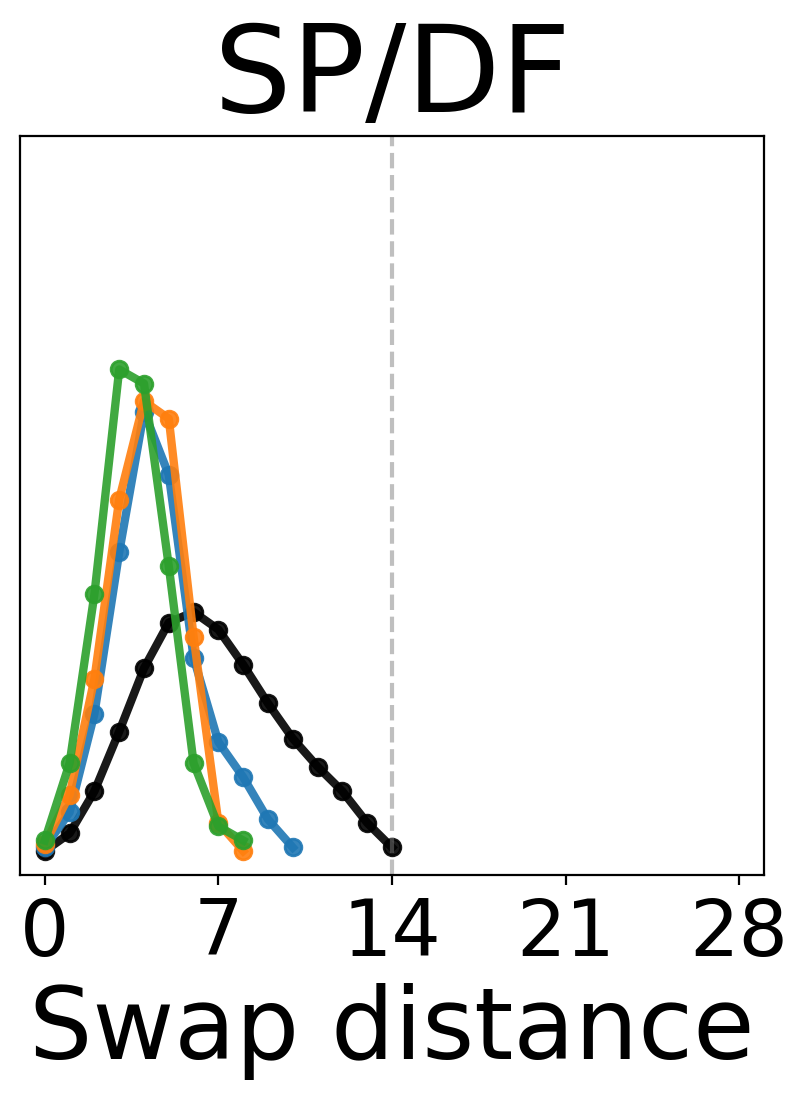}%
    \includegraphics[width=0.20592\linewidth]{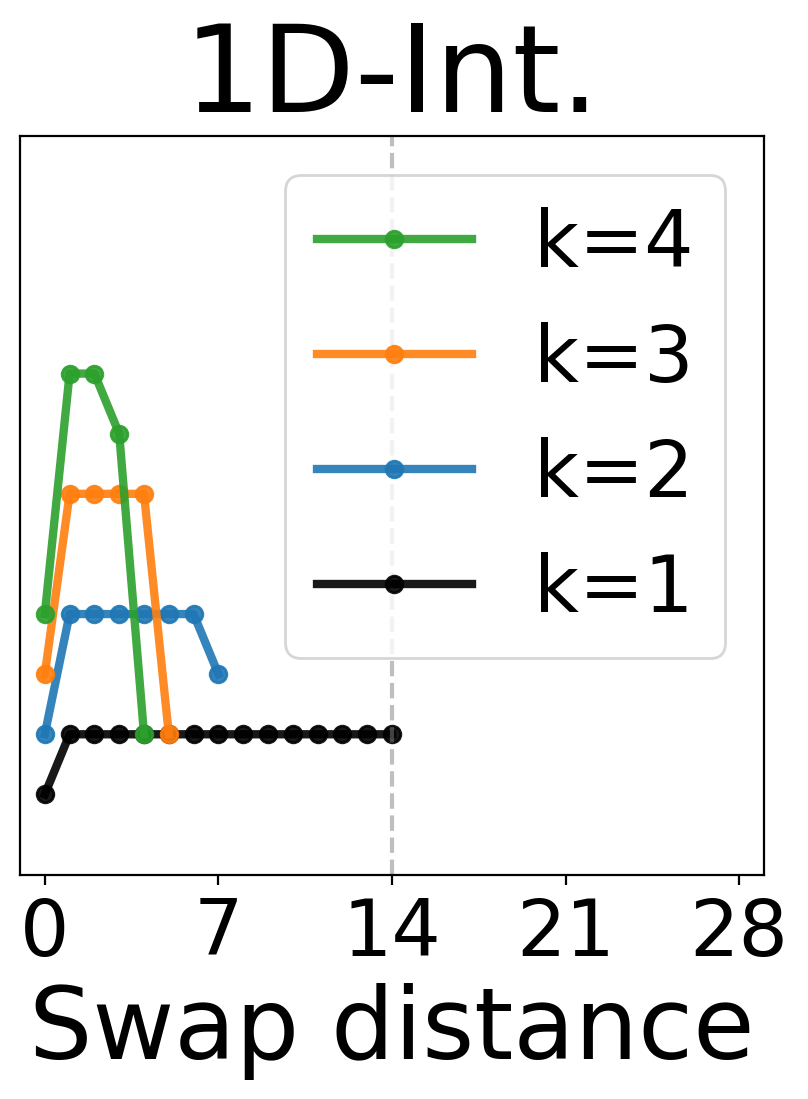}

    \caption{Histograms of the distances between the votes and their closest $k$-Kemeny ranking, for $k \in \{1,2,3,4\}$.}
    \label{apdx:fig:histogram}
\end{figure}

\begin{figure*}[th]
    \centering
    \begin{subfigure}[t]{0.25\columnwidth}
        \centering
        \includegraphics[width=1\linewidth]{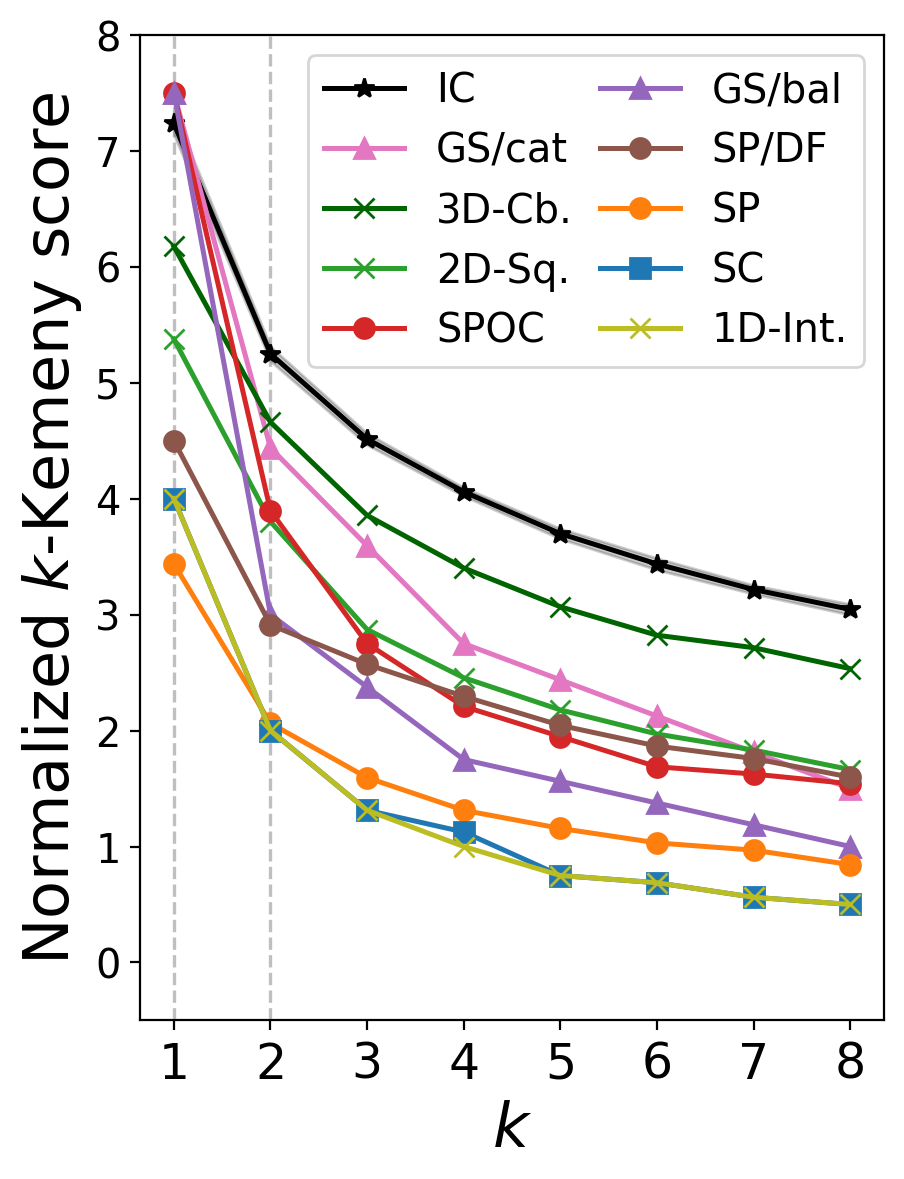}
        \caption{6 candidates.}
    \end{subfigure}%
    \begin{subfigure}[t]{0.25\columnwidth}
        \centering
        \includegraphics[width=1\linewidth]{img/diversity/domain_kemeny_m8_k8.png}
        \caption{8 candidates.}
    \end{subfigure}%
    \begin{subfigure}[t]{0.25\columnwidth}
        \centering
        \includegraphics[width=1\linewidth]{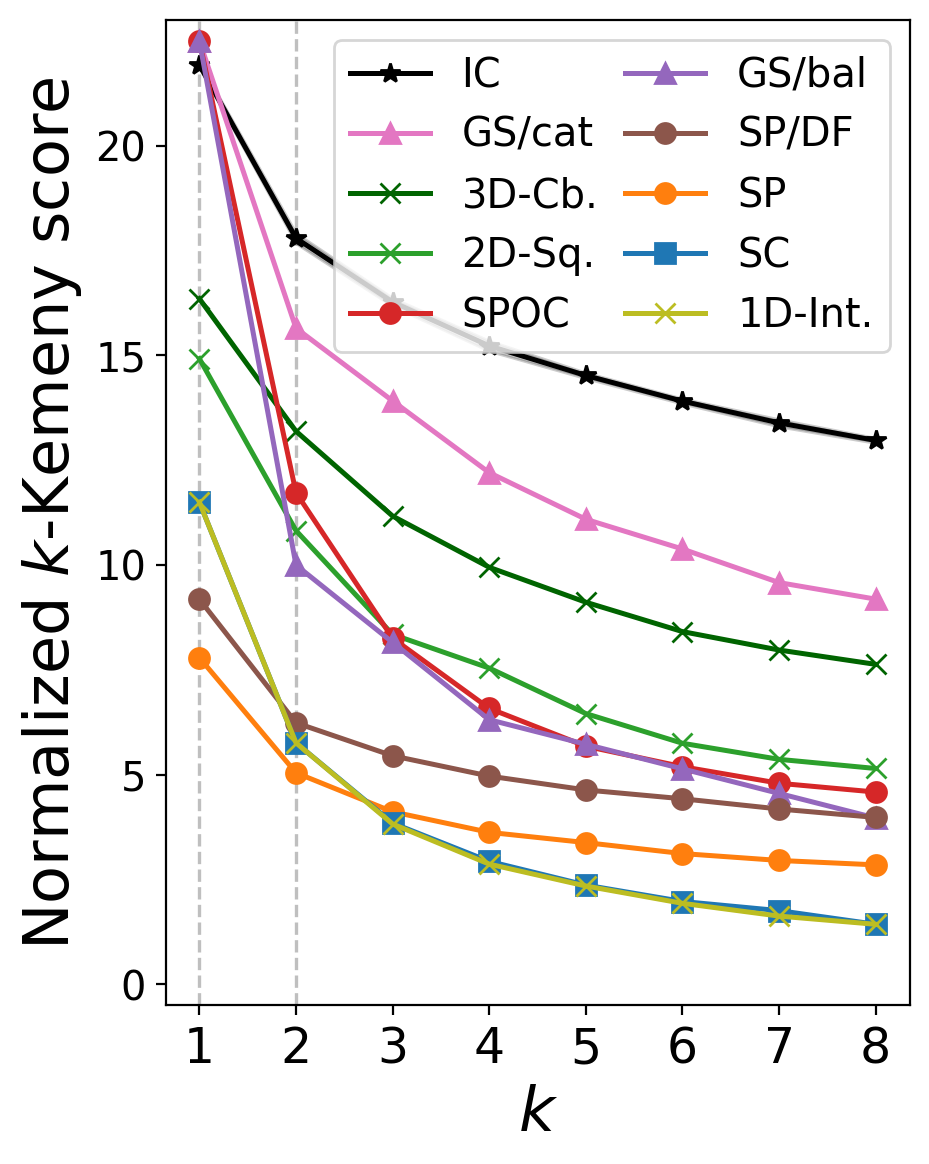}
        \caption{10 candidates.}
    \end{subfigure}%
    \begin{subfigure}[t]{0.25\columnwidth}
        \centering
        \includegraphics[width=1\linewidth]{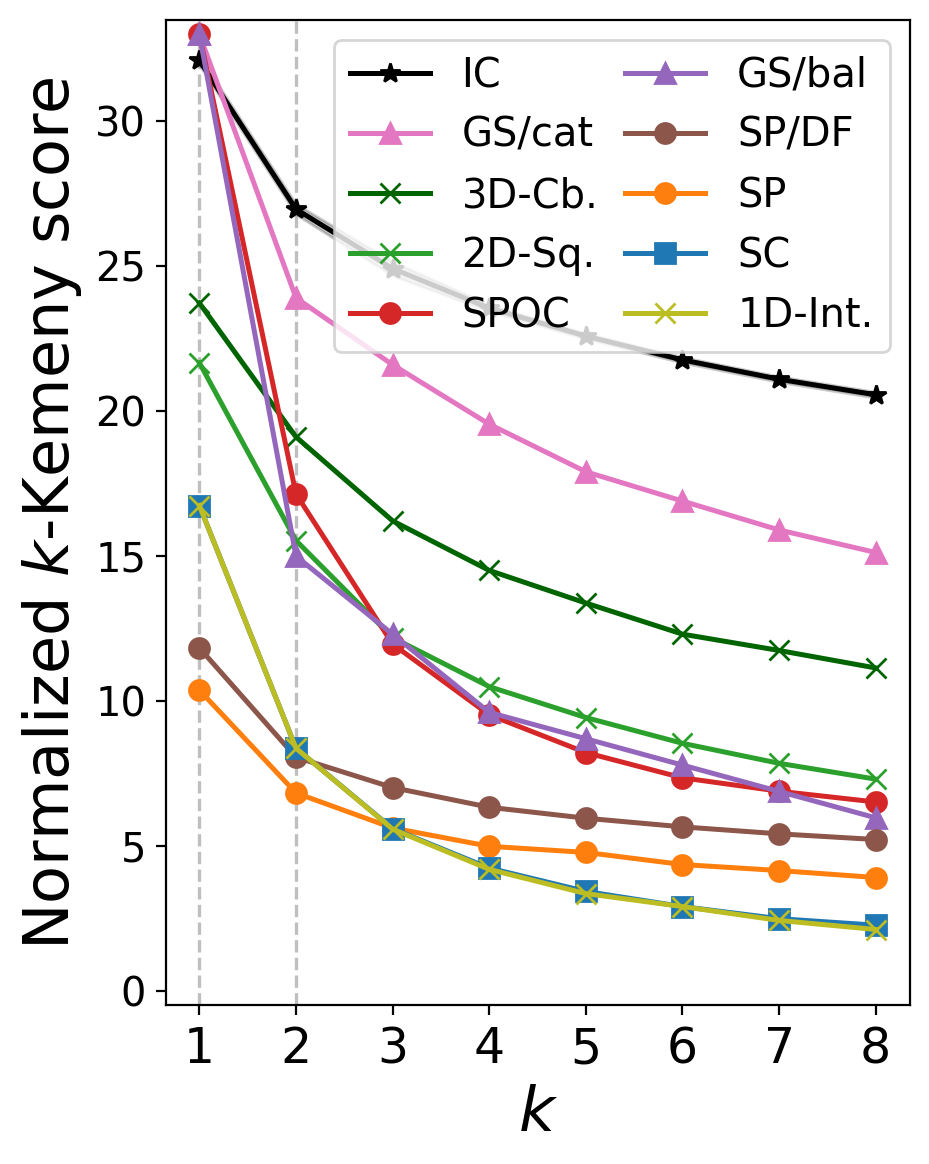}
        \caption{12 candidates.}
    \end{subfigure}
    \caption{Average $\kappa(E)$ of various domains.}
    \label{apdx:fig:diversity_scalability}
\end{figure*}

\begin{figure*}[th]
    \centering
    \begin{subfigure}[t]{0.25\columnwidth}
        \centering
        \includegraphics[width=1\linewidth]{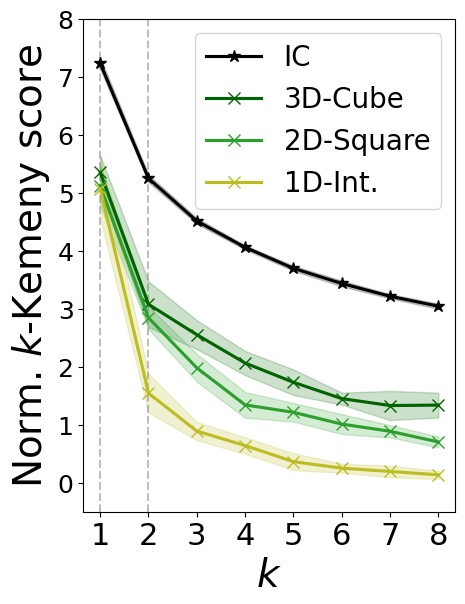}
        \caption{6 candidates.}
    \end{subfigure}%
    \begin{subfigure}[t]{0.25\columnwidth}
        \centering
        \includegraphics[width=1\linewidth]{img/diversity/sampled_kemeny_euc_diversity_m8_n512_k8.png}
        \caption{8 candidates.}
    \end{subfigure}%
    \begin{subfigure}[t]{0.25\columnwidth}
        \centering
        \includegraphics[width=1\linewidth]{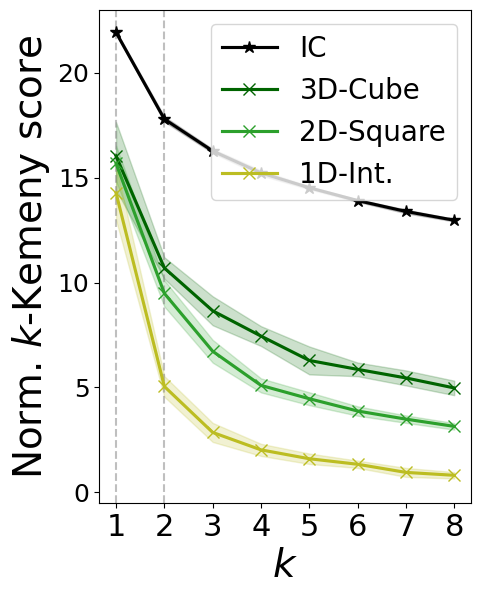}
        \caption{10 candidates.}
    \end{subfigure}%
    \begin{subfigure}[t]{0.25\columnwidth}
        \centering
        \includegraphics[width=1\linewidth]{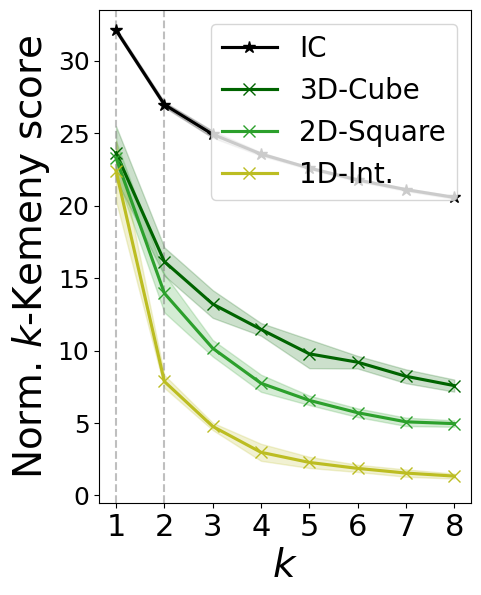}
        \caption{12 candidates.}
    \end{subfigure}
    \caption{Average $\kappa(E)$ for Euclidean elections sampled using 1-Box.}
    \label{apdx:fig:euc_diversity_scalability}
\end{figure*}

\begin{figure*}[thb]
    \centering
    \begin{subfigure}[t]{0.25\columnwidth}
        \centering
        \includegraphics[width=1\linewidth]{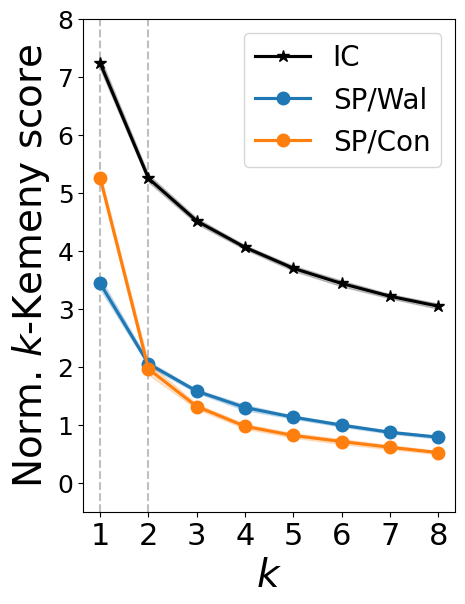}
        \caption{6 candidates.}
    \end{subfigure}%
    \begin{subfigure}[t]{0.25\columnwidth}
        \centering
        \includegraphics[width=1\linewidth]{img/diversity/sampled_kemeny_sp_diversity_m8_n512_k8.png}
        \caption{8 candidates.}
    \end{subfigure}%
    \begin{subfigure}[t]{0.25\columnwidth}
        \centering
        \includegraphics[width=1\linewidth]{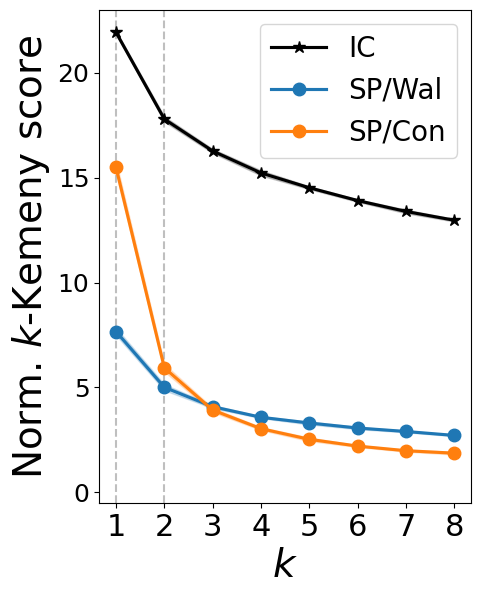}
        \caption{10 candidates.}
    \end{subfigure}%
    \begin{subfigure}[t]{0.25\columnwidth}
        \centering
        \includegraphics[width=1\linewidth]{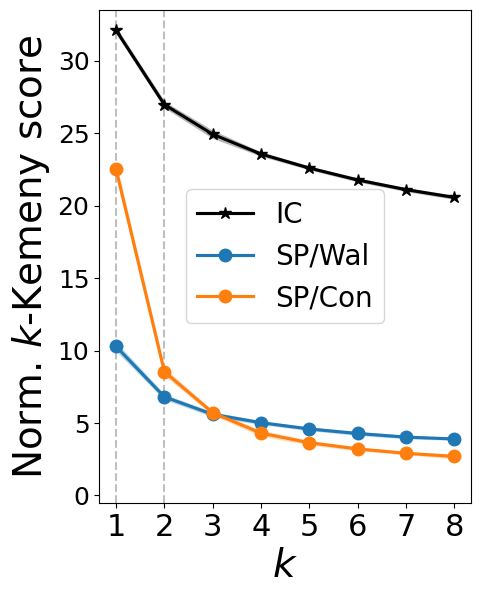}
        \caption{12 candidates.}
    \end{subfigure}
    \caption{Average $\kappa(E)$ for SP elections sampled from Wal./Con. models.}
    \label{apdx:fig:sp_diversity_scalability}
  \end{figure*}

\section{Heuristic Evaluation}
We analyze the performance of our heuristic by comparing the average scores obtained through heuristic local search against those from brute force (optimal solution) across various parameter sets. Each reported value represents the average of ten instances.

The results are presented in Tables \Cref{apdx:tab:domain1}, \Cref{apdx:tab:domain2}, \Cref{apdx:tab:domain3}, \Cref{apdx:tab:sampled1}, \Cref{apdx:tab:sampled2}, and \Cref{apdx:tab:sampled3}. As we can see, in most cases our heuristic found the optimal solution.

\begin{table*}[t]
\centering
\begin{tabular}{lcccccc}
\toprule
Domain & $m=3$ & $m=4$ & $m=5$ & $m=6$ & $m=7$ & $m=8$ \\
\midrule
1D-Int. & 1.0 & 1.0 & 1.0 & 1.0 & 1.0 & 1.0 \\
GS/cat & 1.0 & 1.0 & 1.0 & 1.0 & 1.0 & 1.0 \\
GS/bal & 1.0 & 1.0 & 1.0 & 1.0 & 1.0 & 1.0 \\
SP & 1.0 & 1.0 & 1.0 & 1.0 & 1.0 & 1.0 \\
SC & 1.0 & 1.0 & 1.0 & 1.0 & 1.0 & 1.0 \\
SPOC & 1.0 & 1.0 & 1.0 & 1.0 & 1.0 & 1.0 \\
2D-Square & 1.0 & 1.0 & 1.0 & 1.0 & 1.0 & 1.0 \\
3D-Cube & 1.0 & 1.0 & 1.0 & 1.0 & 1.0 & 1.0 \\
SP/DF & N/A & N/A & 1.0 & 1.0 & 1.0 & 1.0 \\
\bottomrule
\end{tabular}
\caption{Average score obtained by heuristic local search divided by optimal solution, for $k=1$.}
\label{apdx:tab:domain1}
\end{table*}

\begin{table*}[t]
\centering
\begin{tabular}{lcccccc}
\toprule
Domain & $m=3$ & $m=4$ & $m=5$ & $m=6$ & $m=7$ & $m=8$ \\
\midrule
1D-Int. & 1.0 & 1.0 & 1.0 & 1.0 & 1.0 & 1.0 \\
GS/cat & 1.0 & 1.0 & 1.0 & 1.0 & 1.0 & 1.0 \\
GS/bal & 1.0 & 1.0 & 1.0 & 1.0 & 1.0 & 1.0 \\
SP & 1.0 & 1.0 & 1.0 & 1.0 & 1.0 & 1.0 \\
SC & 1.0 & 1.0 & 1.0 & 1.0 & 1.0 & 1.0 \\
SPOC & 1.0 & 1.0 & 1.0 & 1.0 & 1.0 & 1.0 \\
2D-Square & 1.0 & 1.0 & 1.0 & 1.0 & 1.0009 $\pm$ 0.0027 & 1.0004 $\pm$ 0.0008 \\
3D-Cube & 1.0 & 1.0 & 1.0 & 1.0008 $\pm$ 0.0015 & 1.0019 $\pm$ 0.0027 & 1.0013 $\pm$ 0.0020 \\
SP/DF & N/A & N/A & 1.0 & 1.0 & 1.0 & 1.0 \\
\bottomrule
\end{tabular}
\caption{Average score obtained by heuristic local search divided by optimal solution, for $k=2$.}
\label{apdx:tab:domain2}
\end{table*}

\begin{table*}[thb]
\centering
\begin{tabular}{lcccccc}
\toprule
Domain & $m=3$ & $m=4$ & $m=5$ & $m=6$ & $m=7$ & $m=8$ \\
\midrule
1D-Int. & 1.0 & 1.0 & 1.0 & 1.0 & 1.0 & 1.0 \\
GS/cat & 1.0 & 1.0 & 1.0 & 1.0 & 1.0 & 1.0 \\
GS/bal & 1.0 & 1.0 & 1.0 & 1.0 & 1.0 & 1.0 \\
SP & 1.0 & 1.0 & 1.0 & 1.0 & 1.0 & 1.0 \\
SC & 1.0 & 1.0 & 1.0 & 1.0 & 1.0 & 1.0 \\
\bottomrule
\end{tabular}
\caption{Average score obtained by heuristic local search divided by optimal solution, for $k=3$.}
\label{apdx:tab:domain3}
\end{table*}

\begin{table*}[thb]
\centering
\setlength{\tabcolsep}{3pt}
\begin{tabular}{lcccccc}
\toprule
Sampled & $m=3$ & $m=4$ & $m=5$ & $m=6$ & $m=7$ & $m=8$ \\
\midrule
IC & 1.0 & 1.0 & 1.0 & 1.0004 $\pm$ 0.0010 & 1.0007 $\pm$ 0.0007 & 1.0027 $\pm$ 0.0018 \\
3D-Cube & 1.0 & 1.0 & 1.0 & 1.0003 $\pm$ 0.0006 & 1.0065 $\pm$ 0.0082 & 1.0041 $\pm$ 0.0047 \\
2D-Square & 1.0 & 1.0 & 1.0 & 1.0002 $\pm$ 0.0006 & 1.0006 $\pm$ 0.0017 & 1.0014 $\pm$ 0.0018 \\
1D-Int. & 1.0 & 1.0 & 1.0 & 1.0 & 1.0 & 1.0 \\
SP/Wal & 1.0 & 1.0 & 1.0 & 1.0 & 1.0 & 1.0 \\
SP/Con & 1.0 & 1.0 & 1.0 & 1.0 & 1.0 & 1.0 \\
\bottomrule
\end{tabular}
\caption{Average score obtained by heuristic local search divided by optimal solution, for $k=1$.}
\label{apdx:tab:sampled1}
\end{table*}

\begin{table*}[!t]
\centering
\setlength{\tabcolsep}{3pt}
\begin{tabular}{lccccc}
\toprule
Sampled & $m=3$ & $m=4$ & $m=5$ & $m=6$ & $m=7$ \\
\midrule
IC & 1.0 & 1.0011 $\pm$ 0.0034 & 1.0004 $\pm$ 0.0009 & 1.0066 $\pm$ 0.0049 & 1.0075 $\pm$ 0.0032 \\
3D-Cube & 1.0 & 1.0 & 1.0002 $\pm$ 0.0005 & 1.0 & 1.0006 $\pm$ 0.0009 \\
2D-Square & 1.0 & 1.0 & 1.0 & 1.0 & 1.0006 $\pm$ 0.0018 \\
1D-Int. & 1.0 & 1.0 & 1.0 & 1.0 & 1.0 \\
SP/Wal & 1.0 & 1.0 & 1.0 & 1.0 & 1.0 \\
SP/Con & 1.0 & 1.0 & 1.0 & 1.0 & 1.0 \\
\bottomrule
\end{tabular}
\caption{Average score obtained by heuristic local search divided by optimal solution, for $k=2$.}
\label{apdx:tab:sampled2}
\end{table*}

\begin{table*}[!t]
\centering
\begin{tabular}{lccc}
\toprule
Sampled & $m=3$ & $m=4$ & $m=5$ \\
\midrule
IC & 1.0 & 1.0016 $\pm$ 0.0035 & 1.0003 $\pm$ 0.0007 \\
3D-Cube & 1.0 & 1.0 & 1.0 \\
2D-Square & 1.0 & 1.0 & 1.0 \\
1D-Int. & 1.0 & 1.0 & 1.0010 $\pm$ 0.0030 \\
SP/Wal & 1.0 & 1.0 & 1.0028 $\pm$ 0.0085 \\
SP/Con & 1.0 & 1.0 & 1.0015 $\pm$ 0.0034 \\
\bottomrule
\end{tabular}
\caption{Average score obtained by heuristic local search divided by optimal solution, for $k=3$.}
\label{apdx:tab:sampled3}
\end{table*}

\begin{figure*}
    \vspace{10cm}
\end{figure*}

\end{document}